\newcommand{\vast}{\bBigg@{4}}
\newcommand{\Vast}{\bBigg@{5}}
\newtheorem{proposition}{Proposition}
\newtheorem{definition}{Definition}
\newpage \pagenumbering{arabic}
\begin{document}

\baselineskip=28pt 
\begin{center} {\LARGE{\bf{A Multivariate Skew-Normal-Tukey-$h$ Distribution}}}
\end{center}

\baselineskip=12pt 
\begin{center}\large
Sagnik Mondal\footnote[1]{
\baselineskip=10pt 
Statistics Program, King Abdullah University of Science and Technology, Thuwal 23955-6900, Saudi Arabia.\\
E-mail: sagnik.mondal@kaust.edu.sa, marc.genton@kaust.edu.sa} and Marc G.~Genton$^1$
\end{center}

\baselineskip=12pt 
\centerline{\today} \vskip 2mm
{\bf Abstract:} We introduce a new family of multivariate distributions by taking the component-wise Tukey-$h$ transformation of a random vector following a skew-normal distribution. The proposed distribution is named the skew-normal-Tukey-$h$ distribution and is an extension of the skew-normal distribution for handling heavy-tailed data. We compare this proposed distribution to the skew-$t$ distribution, which is another extension of the skew-normal distribution for modeling tail-thickness, and demonstrate that when there are substantial differences in marginal kurtosis, the proposed distribution is more appropriate. Moreover, we derive many appealing stochastic properties of the proposed distribution and provide a methodology for the estimation of the parameters in which the computational requirement increases linearly with the dimension. Using simulations, as well as a wine and a wind speed data application, we illustrate how to draw inferences based on the multivariate skew-normal-Tukey-$h$ distribution.

\baselineskip=12pt
\vspace{.3cm}
{\bf  Keywords:} Heavy-tails; Lambert's-$W$; Non-Gaussian distribution; Skew-normal; Skew-$t$; Tukey-$h$.

\baselineskip=13pt

\section{Introduction}

In recent decades, there has been a growing interest in developing parametric multivariate distributions flexible enough to handle skewness and tail-thickness for various statistical applications. In a multivariate setup, two of the most popular methods to introduce both skewness and tail-thickness are:
\begin{enumerate}
    \item  {\bf Perturbation of symmetry} of an elliptically contoured distribution which is capable of capturing tail-thickness. Examples of such distributions include the multivariate skew-$t$ distribution \citep{2003.A.A.A.C.JRSSB} and the multivariate extended skew-$t$ distribution \citep{2010.R.B.A.V.M.G.G.M}. 
    \item {\bf Transformation} of a random vector following some elliptically contoured distribution for imposing skewness and tail-thickness. Examples of such transformations are the Tukey $g$-and-$h$ transformation \citep{2006.C.F.M.G.G.T} and the Sinh-Arcsinh transformation \citep{2009.C.M.J.A.P.B} in the multivariate case, and the Lambert's-$W$ transformation \citep{2011.G.M.G.TAAS} in the univariate case.
\end{enumerate}

 The primary parametric model obtained by perturbing the symmetry of an elliptically contoured distribution, which instigated the research in this area, is the multivariate skew-normal distribution introduced by \cite{1996.A.A.A.D.V.Biometrika}. Many distributions such as the multivariate skew-$t$ distribution, the multivariate extended skew-normal distribution, and the multivariate extended skew-$t$ distribution were built upon the foundation of the skew-normal distribution. These distributions can be viewed as special cases of the multivariate unified skew-elliptical distribution studied by \cite{2010.R.B.A.V.M.G.CJS}. For more on these types of distributions, readers are referred to the books by \cite{2004.M.G.CRCPress} and \cite{2013.A.A.A.C.CUP}, and to a recent review by \cite{2022.A.A.JMVA}. Since the skew-normal distribution is obtained by perturbing the symmetry of the Gaussian distribution and the skew-$t$ distribution is obtained by perturbing the symmetry of the Student's-$t$ distribution, the skew-normal distribution is not capable of handling tail-thickness while the skew-$t$ distribution is more apt for modeling heavy-tailed data. However, one shortcoming of the skew-$t$ distribution is that it cannot handle different tail-thickness for different marginals, since the tail-thickness is controlled only by one parameter. There has been a proposal by \cite{1968.K.S.M.TAMS} to introduce a   multivariate Student's-$t$ distribution with different tail-thickness parameters for different marginals. However, the probability density function (pdf) of the proposed distribution involves complicated hypergeometric functions that make inference with such a distribution very challenging. 

The second approach above for introducing skewed and heavy-tailed distribution is to use some non-linear transformation on a light-tailed elliptically symmetric random variable. The Lambert's-$W$ transformation, proposed by \cite{2011.G.M.G.TAAS} in the univariate case, can impose both skewness and tail-thickness on a Gaussian random variable using a single parameter. However, as this transformation is not one-to-one, the pdf of its multivariate extension becomes almost impossible to track down, especially for higher dimensions. \cite{2015.G.M.G.TSWJ} solved this issue by slightly changing the Lambert's-$W$ transformation and made it one-to-one. This modified transformation is a generalized version of the Tukey-$h$ transformation. Although \cite{2015.G.M.G.TSWJ} proposed this new distribution in the univariate setting, he only briefly mentioned how it can be extended to the multivariate setting by applying this transformation component-wise. Other examples include the Sinh-Arcsinh (SAS) transformation and the Tukey $g$-and-$h$ transformation which are monotonic and control skewness and tail-thickness with separate parameters. \cite{2006.C.F.M.G.G.T} presented a multivariate $g$-and-$h$ distribution which is based on the component-wise Tukey's $g$-and-$h$ transformation of a random vector following a Gaussian distribution. As a result, it permits different kurtosis for different marginals. However, one drawback of this distribution is drawing inferences. Since the inverse of Tukey's $g$-and-$h$ transformation does not have a closed form, the likelihood function cannot be readily calculated. Moreover, for parameter estimation, some definitions of multivariate quantiles are needed. This can be computationally challenging when the dimension is high because the number of directions in which the quantiles have to be computed grows exponentially with dimension. \cite{2009.C.M.J.A.P.B} discussed mainly the univariate SAS distribution and its various stochastic and inferential properties. The idea of the multivariate expansion of this family has also been discussed by \cite{2009.C.M.J.A.P.B}. It consists in using the transformation on the marginals of a standardized but correlated multivariate Gaussian distribution. A similar approach has been taken by \cite{2016.F.R.E.O.J.H.BJPS} who proposed a distribution that is capable of modeling higher skewness than the original SAS distribution by applying the two-piece transformation to the symmetric SAS distribution. \cite{2020.Y.Y.J.J.M.G.JJSDS} used the SAS distribution in the context of a bivariate random field for wind data and discussed how to draw inference based on it. However, inference in the multivariate scenario is yet to be thoroughly explored.

In this article, we propose a new multivariate distribution by combining these two techniques, the perturbation of symmetry for skewness and the transformation for tail-thickness. We introduce the skew-normal-Tukey-$h$ distribution by taking the Tukey-$h$ transformation on the components of a skew-normal random vector to introduce tail-thickness on each component. Moreover, by changing the marginal kurtosis parameter, we can have different kurtosis for different marginals. We study some basic statistical properties of the skew-normal-Tukey-$h$ distribution. Furthermore, we discuss how to draw inferences based on this distribution. We compare the proposed distribution with the skew-$t$ distribution since both of them are extensions of the skew-normal distribution for handling heavy-tailed data. Finally, we justify in which scenarios the skew-normal-Tukey-$h$ distribution is more appropriate compared to the skew-$t$ distribution using a simulation study and two data applications. 

It should be pointed out that the aforementioned two methods for constructing skewed and heavy-tailed distributions are not exhaustive. There exists a variety of proposals in the statistics literature. For example, distributions studied by \cite{2001.M.B.D.D.JMVA} and \cite{2004.J.W.J.B.M.G.G.SS} are very similar to the definition of the skew-normal distribution. \cite{2005.M.G.G.N.L.A.O.T.I.O.S.M} proposed a  definition of generalized skew-elliptical distributions which bring such different skewed distributions defined by perturbation of symmetry under one umbrella. Another avenue for the introduction of skewness and tail-thickness was explored by \cite{2014.F.F.D.W.SC} and further generalized by \cite{2015.D.W.F.F.CSDA} under the name of location-scale mixtures of Gaussian distributions. Various other non-Gaussian distributions for modeling skewed and heavy-tailed data can also be obtained using the theory of copulas \citep{1959.M.S.PISUP}. We refer interested readers to the books by \cite{1997.H.J.CRC} and \cite{2007.R.B.N.SSBM}, and the references therein, for more details on copulas. These are some other examples of parametric families proposed for modeling various skewed and heavy-tailed or light-tailed data.

The rest of the article is organized as follows. In Section 2, we formally define the skew-normal-Tukey-$h$ distribution, whereas various of its stochastic properties are discussed in Section 3. In Section 4, we illustrate how to draw inferences based on the skew-normal Tukey-$h$ distribution. In Sections 5 and 6, we present simulation studies and two applications to wine data and to wind speed data showing when the skew-normal-Tukey-$h$ distribution is more appropriate compared to the skew-$t$ distribution. Finally, in Section 7, we conclude our article and discuss some avenues for future research work. 
\vspace{-.4cm}

\section{Multivariate Skew-Normal-Tukey-$h$ Distribution}

In this section, we define the multivariate skew-normal-Tukey-$h$ distribution. We start by defining an alternative parameterization of the multivariate skew-normal distribution. 

\subsection{Skew-Normal Distribution}\label{SNdistrib}

The multivariate skew-normal distribution was introduced by \cite{1996.A.A.A.D.V.Biometrika} and later studied in \cite{1999.A.A.A.C.JRSSB}. A random vector $\bm Y \in \mathbb{R}^p$ is said to have a multivariate skew-normal distribution with location parameter $\bm \xi \in \mathbb{R}^p$, symmetric positive definite scale parameter $\bm \Omega \in \mathbb{R}^{p\times p}$, and skewness parameter $\bm \alpha \in \mathbb{R}^p$, if its pdf is
\begin{equation}\label{eq:ASN_mpdf}
f_{\bm Y}(\bm y) = 2 \phi_p\left(\bm y;\bm \xi,\bm \Omega\right) \Phi  \{\bm \alpha^\top \bm \omega^{-1} (\bm y - \bm \xi)\}, \quad \bm y \in \mathbb{R}^p,
\end{equation}
where $\phi_p (\cdot;\bm \mu,\bm \Sigma)$ is the pdf of a $p$-dimensional normal distribution with mean $\bm \mu \in \mathbb{R}^p$ and positive definite covariance matrix $\bm \Sigma \in \mathbb{R}^{p \times p}$, and $\bm \omega = \textup{diag}(\bm \Omega)^{1/2}$. Here, and from now on, we call this distribution with the parameterization in Equation (\ref{eq:ASN_mpdf}) the Azzalini skew-normal ($\mathcal{ASN}$) distribution and we denote it by $\bm Y \sim \mathcal{ASN}_p(\bm \xi,\bm \Omega,\bm \alpha)$. 

As used in \cite{2023.S.M.R.B.A.V.M.G.G.SP}, the $\mathcal{ASN}_p(\bm \xi,\bm \Omega,\bm \alpha)$ distribution can be reparameterized by means of the relations $\bm \Omega = \bm \Psi + \bm \eta \bm \eta^\top$ and $\bm \alpha = (1 + \bm \eta^\top \bm \Psi^{-1} \bm \eta)^{-1/2}  \bm \omega \bm \Psi^{-1} \bm \eta,$ where $\bm \Psi \in \mathbb{R}^{p \times p}$ is a symmetric positive definite matrix, $\bm \eta \in \mathbb{R}^p$ and
$\bm \omega=\textup{diag}(\sqrt{\Psi_{11}+\eta_1^2},\ldots,\sqrt{\Psi_{pp}+\eta_p^2})$, with $\Psi_{ii}$ and $\eta_i$ being the $i$th diagonal element of $\bm \Psi$ and $\bm \eta$, respectively, for $i = 1,\ldots,p$. Conversely, by letting $\bm \omega = \textup{diag}(\bm \Omega)^{1/2}$, $\bar{\bm \Omega}=\bm \omega^{-1}\bm \Omega \bm \omega^{-1}$ and  $\bm \delta =({{1+\bm \alpha^\top \bar{\bm \Omega} \bm \alpha}})^{-1/2}{\bar{\bm \Omega} \bm \alpha}$, we have
$\bm \Psi = \bm \omega(\bar{\bm \Omega}^{-1}+\bm \alpha \bm \alpha^\top)^{-1} \bm \omega = \bm \omega(\bar{\bm \Omega}-\bm \delta \bm \delta^\top) \bm \omega$ and $\bm \eta = \bm \omega \bm \delta.$ With this alternative parameterization, the pdf of $\bm Y$ from Equation \eqref{eq:ASN_mpdf} is
\begin{equation}\label{eq:SN_mpdf}
f_{\bm Y}(\bm y) = 2 \phi_p\left(\bm y;\bm \xi,\bm \Psi + \bm \eta \bm \eta^\top\right) \Phi \Bigg \{\dfrac{\bm \eta^\top\bm \Psi^{-1}(\bm y - \bm \xi)}{\sqrt{1+\bm \eta^\top \bm \Psi^{-1} \bm \eta}} \Bigg\}, \quad \bm y \in \mathbb{R}^p.
\end{equation}
\cite{1996.A.A.A.D.V.Biometrika} used this parameterization up to minor differences. Moreover, \cite{2001.C.J.A.K.S.Financialmodelling}, \cite{2004.C.J.A.S.E.D}, and \cite{2005.C.J.A.T.E.J.F} have also used the same parameterization. With this parameterization, a $p$-variate random vector $\bm Y$ is said to have a skew-normal $(\mathcal{SN})$ distribution with location parameter $\bm \xi \in \mathbb{R}^p$, symmetric positive definite scale matrix $\bm \Psi \in \mathbb{R}^{p \times p}$, and skewness parameter $\bm \eta \in \mathbb{R}^p$ if its pdf is given by Equation \eqref{eq:SN_mpdf}. We denote it by $\bm Y \sim \mathcal{SN}_p(\bm \xi, \bm \Psi,\bm \eta)$. 

Many interesting properties of the $\mathcal{SN}$ distribution with the parameterization in Equation \eqref{eq:SN_mpdf} have been derived in \cite{2023.S.M.R.B.A.V.M.G.G.SP}. The following results are given here as they will be useful later on, while their proofs can be found in \cite{2023.S.M.R.B.A.V.M.G.G.SP}:
\begin{itemize}
    \item \textit{Stochastic representation of $\mathcal{SN}$ distribution:} If $\bm Y \sim \mathcal{SN}_p(\bm \xi, \bm \Psi,\bm \eta)$, then  $\bm Y = \bm \xi +U \bm \eta+ \bm W$, where $U$ and $\bm W$ are independently distributed, with half-normal $U$ denoted by $U \sim \mathcal{HN}(0,1)$, and $\bm W \sim \mathcal{N}_p(\bm 0,\bm \Psi)$.
    \item \textit{Affine transformation of the $\mathcal{SN}$ distribution:} If $\bm Y \sim \mathcal{SN}_p(\bm \xi, \bm \Psi,\bm \eta)$, then for any fixed vector $\bm a \in \mathbb{R}^q$ and any fixed matrix $\bm B \in \mathbb{R}^{q \times p}$ of full row rank and $q \leq p$: $\bm a + \bm B \bm Y \sim \mathcal{SN}_q(\bm a + \bm B \bm \xi, \bm B \bm \Psi \bm B^\top, \bm B \bm \eta)$.
    \item \textit{Marginal distributions of the $\mathcal{SN}$ distribution:} Let $\bm Y \sim \mathcal{SN}_p(\bm \xi,\bm \Psi,\bm \eta)$ and consider the partition of $\bm Y = (\bm Y_1^\top, \bm Y_2^\top)^\top$ with $\bm Y_i$ of size $p_i$ ($i=1,2$) and such that $p_1+p_2 = p$, with corresponding partitions of the parameters in blocks of matching sizes. Then $\bm Y_i \sim \mathcal{SN}_{p_i} (\bm \xi_i,\bm \Psi_{ii}, \bm \eta_i), i= 1,2$.
\end{itemize}
The $\mathcal{ASN}$ and $\mathcal{SN}$ parameterizations describe the same distribution but the simplicity of the marginal distributions in the $\mathcal{SN}$ parameterization (see above) will prove useful for inferential purposes later on.

\subsection{Skew-Normal-Tukey-$h$ Distribution}

We introduce tail-thickness in the skew-normal distribution by taking the Tukey-$h$ transformation of all the components of a random vector following a $\mathcal{SN}$ distribution. The Tukey-$h$ transformation is 
\begin{equation}\label{eq:Tukey-h_transformation}
    \tau_h(x) = x \exp(hx^2/2), \quad x \in \mathbb{R}, \quad h \geq 0.
\end{equation}
Moreover, for $\bm x = (x_1,\ldots,x_p)^\top  \in \mathbb{R}^p$, we define
\begin{equation}\label{eq:multivariate_Tukey-h_transformation}
    \bm \tau_{\bm h}(\bm x) =  \{\tau_{h_1}(x_1),\ldots,\tau_{h_p}(x_p)\}^\top, \quad \bm h = (h_1,\ldots,h_p)^\top,\, h_i\geq 0, i = 1,\ldots,p.
\end{equation}

\theoremstyle{definition}
\begin{definition}[Skew-normal-Tukey-$h$ distribution]
A random vector $\bm Y \in \mathbb{R}^p$ with the stochastic representation $\bm Y = \bm \xi + \bm \omega \bm \tau_{\bm h}(\bm Z)$, where $\bm Z \sim \mathcal{SN}_p(\bm 0,\bar{\bm \Psi},\bm \eta)$ and $\bar{\bm \Psi}$ is a $p \times p$ correlation matrix, is said to have a multivariate skew-normal-Tukey-$h$ distribution.
Here $\bm \xi \in \mathbb{R}^p$ is the location parameter, $\bm \omega = \text{diag}(\omega_{11},\ldots,\omega_{pp})$ is a $p \times p$ diagonal scale matrix such that $\omega_{ii} > 0$,  $i=1,\ldots,p$, $\bm \eta \in \mathbb{R}^p$ is the skewness parameter, and $\bm h$ is the tail-thickness parameter vector such that $\bm h = (h_1,\ldots,h_p)^\top \in \mathbb{R}^p$, $h_i\geq 0$, $i = 1,\ldots,p$. We denote $\bm Y \sim \mathcal{SNTH}_p(\bm \xi, \bm \omega,\bar{\bm \Psi},\bm \eta,\bm h)$.
\end{definition}
We define the $\mathcal{SNTH}$ distribution with a correlation matrix $\bar{\bm \Psi}$ and a diagonal scale matrix $\bm \omega$. The $\bar{\bm \Psi}$ parameter governs the dependence structure in the model and $\bm \omega$ is a diagonal matrix consisting of the marginal scale parameters. To make all the parameters identifiable we restrict $\bar{\bm \Psi}$ to be a correlation matrix. It is immediate from the definition of the $\mathcal{SNTH}$ distribution that when $\bm h = \bm 0$ the $\mathcal{SNTH}$ distribution reduces to the $\mathcal{SN}$ distribution. The Tukey-$h$ transformation applied on the marginals of the skew-normal distribution imposes tail-thickness in the distribution. Moreover, since we can vary the components of the $\bm h$ parameter over the marginals, the resulting distribution can have different kurtosis for different marginals. In this way, we propose an extension of the skew-normal distribution, capable of handling different marginal tail-thickness. In that sense, the $\mathcal{SNTH}$ distribution is different from the skew-$t$ distribution. The skew-$t$ distribution can also be thought as an extension of the skew-normal distribution for modeling tail-thickness in the data, but it is incapable of capturing different kurtosis for different marginals. 

It should be pointed out that the proposed $\mathcal{SNTH}$ distribution belongs to the Lambert-$W$ $\times$ $F$ family of distributions \citep{2015.G.M.G.TSWJ}, where $F$ represents the cumulative distribution function of the skew-normal distribution. The main difference is that \cite{2015.G.M.G.TSWJ} proposed the location-scale Lambert-$W$ $\times$ $F$ distribution with $\mu_X = \mathbb{E}(X)$ as the location parameter and $\sigma_X = \sqrt{\mathbb{V} \text{ar} (X)}$ as the scale parameter, $X \sim F$, and the transformation is applied on $(X-\mu_X)/\sigma_X$. For defining the $\mathcal{SNTH}$ distribution, we start with a ``standard'' skew-normal distribution and apply the Tukey-$h$ transformation on it, and then we use a location-scale transformation on the transformed random variable.

\section{Properties of the $\mathcal{SNTH}$ Distribution}

We outline some basic probabilistic properties of the $\mathcal{SNTH}$ distribution such as its pdf,  cumulative distribution function (cdf), moments,  marginal and conditional distributions, and canonical form. Due to the $\mathcal{SNTH}$ definition using the $\mathcal{SN}$ distribution, many of the $\mathcal{SN}$ appealing properties get transferred to the $\mathcal{SNTH}$ distribution. This is one of the reasons we defined the $\mathcal{SNTH}$ with the $\mathcal{SN}$ distribution parameterized in Equation~(\ref{eq:SN_mpdf}).

\subsection{Probability Density Function of $\mathcal{SNTH}$}

In the next proposition we present the pdf of the $\mathcal{SNTH}$ distribution. The univariate $\mathcal{SNTH}$ pdf can be found using Theorem 1 of \cite{2015.G.M.G.TSWJ} using $F$ as the skew-normal distribution. We extend Theorem 1 of \cite{2015.G.M.G.TSWJ} with $F$ as the skew-normal distribution to the multivariate setup in the next proposition.
\begin{proposition}\label{prop:density_SNTH}
The pdf of $\bm Y \sim \mathcal{SNTH}_p(\bm \xi, \bm \omega,\bar{\bm \Psi},\bm \eta,\bm h)$ is, for $\bm y \in \mathbb{R}^p$:
\begin{equation}\label{eq:snth-density}
\hspace{-.3cm}\begin{split}
f_{\bm Y} (\bm y) &= 2 \phi_p \{ \bm g(\bm y) ; \bm 0, (\bar{ \bm \Psi} + \bm \eta \bm \eta^\top) \} \Phi \Bigg \{ \dfrac{\bm \eta^\top \bar{ \bm \Psi}^{-1} \bm g(\bm y)}{\sqrt{1+\bm \eta^\top \bar{ \bm \Psi}^{-1} \bm \eta}} \Bigg \} \prod_{i = 1}^p \Bigg \{ \dfrac{1}{\omega_{ii}} \left( \dfrac{\exp[\frac{1}{2} W_0 \{ h_i (\frac{y_i - \xi_i}{\omega_{ii}})^2 \}]}{h_i (\frac{y_i - \xi_i}{\omega_{ii}})^2 + \exp[ W_0 \{ h_i (\frac{y_i - \xi_i}{\omega_{ii}})^2 \}]} \right) \Bigg \},  
\end{split}
\end{equation}
where $\bm g(\bm y) = \{ g_1(y_1) ,\ldots, g_p(y_p) \}^\top$, $g_i(y_i) = (\frac{y_i - \xi_i}{\omega_{ii}}) \exp[-\frac{1}{2} W_0 \{ h_i (\frac{y_i - \xi_i}{\omega_{ii}})^2 \}]$, $i = 1,\ldots,p$, and $W_0(\cdot)$ is the principal branch of the Lambert's-$W$ function.
\end{proposition}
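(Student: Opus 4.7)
The plan is to apply the change-of-variables formula to the stochastic representation $\bm Y = \bm \xi + \bm \omega\, \bm \tau_{\bm h}(\bm Z)$ with $\bm Z \sim \mathcal{SN}_p(\bm 0, \bar{\bm \Psi}, \bm \eta)$. Because $\bm \tau_{\bm h}$ acts componentwise and $\bm \omega$ is diagonal, the map $\bm z \mapsto \bm y$ is separable, so the Jacobian matrix is diagonal and inversion reduces to a one-dimensional problem on each coordinate. The density of $\bm Z$ is given by Equation~\eqref{eq:SN_mpdf} with $\bm \xi = \bm 0$ and $\bm \Psi = \bar{\bm \Psi}$, so the only real work is inverting each $y_i = \xi_i + \omega_{ii}\, \tau_{h_i}(z_i)$ and computing $|dz_i/dy_i|$.

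For the inversion, writing $u_i = (y_i - \xi_i)/\omega_{ii} = z_i \exp(h_i z_i^2/2)$ and squaring gives $h_i u_i^2 = (h_i z_i^2) \exp(h_i z_i^2)$. Recognising this as the defining equation for Lambert's $W$, I would conclude that $h_i z_i^2 = W_0(h_i u_i^2)$ on the principal branch, whence $\exp(h_i z_i^2/2) = \exp\{W_0(h_i u_i^2)/2\}$ and therefore $z_i = u_i \exp\{-\tfrac{1}{2} W_0(h_i u_i^2)\} = g_i(y_i)$, matching the stated expression. Since $\tau_{h_i}$ is strictly increasing on $\mathbb{R}$ for $h_i \geq 0$, this inverse is unambiguous.

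For the Jacobian, differentiation gives $dy_i/dz_i = \omega_{ii} \exp(h_i z_i^2/2)\,(1 + h_i z_i^2)$. Substituting $h_i z_i^2 = W_0(h_i u_i^2)$ yields $|dz_i/dy_i| = 1/\bigl[\omega_{ii} \exp\{W_0(h_i u_i^2)/2\}\,\bigl(1 + W_0(h_i u_i^2)\bigr)\bigr]$. To match the form in \eqref{eq:snth-density}, I would rewrite the factor $1 + W_0(h_i u_i^2)$ by multiplying numerator and denominator by $\exp\{W_0(h_i u_i^2)\}$ and then invoking the Lambert identity $W_0(x)\exp\{W_0(x)\} = x$ with $x = h_i u_i^2$, so that the denominator becomes $h_i u_i^2 + \exp\{W_0(h_i u_i^2)\}$. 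Assembling $f_{\bm Y}(\bm y) = f_{\bm Z}\{\bm g(\bm y)\} \prod_{i=1}^p |dz_i/dy_i|$ then gives \eqref{eq:snth-density} directly.

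The only delicate point is really the cosmetic one in the last step: the Jacobian comes out naturally in the $\{1 + W_0(h_i u_i^2)\}$ form, and one has to recognise the identity $W_0(x) e^{W_0(x)} = x$ to rewrite it in the $\{h_i u_i^2 + e^{W_0(h_i u_i^2)}\}$ form stated in the proposition. Everything else is a direct application of change of variables, made especially clean because the componentwise structure of $\bm \tau_{\bm h}$ and the diagonality of $\bm \omega$ together prevent any off-diagonal Jacobian terms from appearing.
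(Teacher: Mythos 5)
Your proposal is correct and follows essentially the same route as the paper: a componentwise change of variables, inverting the Tukey-$h$ map via the identity $h z^2 = h x^2 e^{h x^2}$ so that $h x^2 = W_0(h z^2)$, and then assembling the diagonal Jacobian. The only cosmetic difference is that the paper differentiates the inverse map directly using $W_0'(z) = 1/[z + \exp\{W_0(z)\}]$, whereas you differentiate the forward map and invoke $W_0(x)e^{W_0(x)} = x$ to reach the same Jacobian factor; both give the stated density.
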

\begin{proof}
Consider the transformation $z = x \exp (hx^2 /2)$. Then $hz^2 = h x^2 \exp(h x^2) \Rightarrow hx^2 = W_0 (hz^2) \Rightarrow x = z \exp \{ -W_0 (hz^2)/2\}$,
where $W_0(\cdot)$ is the principal branch of the Lambert's-$W$ function \citep{1996.C.W}. This essentially means that $W_0(\cdot)$ is the inverse function of the function $f(x)= x\exp(x)$, $x\in \mathbb{R}$. Although the inverse of $f(x)$ is not unique when $x<0$, it is unique when $x>0$. For us the argument of $W_0(\cdot)$ is $hz^2 \geq 0$, which makes the inverse of the Tukey-$h$ transformation unique (see also Lemma 5 in \cite{2015.G.M.G.TSWJ}). Hence, the inverse of the Tukey-$h$ transformation (\ref{eq:Tukey-h_transformation}) is 
\begin{equation}\label{eq:inv_tukey_h}
  \tau_h^{-1}(z) = z \exp \{ -W_0 (hz^2)/2\},  
\end{equation}
and it is a one-to-one function as it should be since $\tau_h(z)$ is one-to-one for $h\geq 0$. Moreover, 
\begin{align*}
    \dfrac{\partial}{\partial z} \tau^{-1}_h(z) = \dfrac{\exp\{W_0 (hz^2)/2\}}{hz^2 + \exp \{W_0(hz^2)\}},
\end{align*}
and is obtained using the fact that $W_0 ^ \prime(z) = 1/[z+\exp \{W_0(z)\}]$. With the form of $\tau^{-1}_h(z)$ and $\frac{\partial \tau^{-1}_h(z)}{\partial z}$ it is straightforward to show that the pdf of $\bm Y$ reduces to Equation \eqref{eq:snth-density}.\qed
\end{proof}

The pdf of the $\mathcal{SNTH}$ distribution is given in closed form in Proposition \ref{prop:density_SNTH} and it involves the principal branch $W_0(\cdot)$ of the Lambert's-$W$ function. Although $W_0(\cdot)$ does not have a closed form, it is a well studied function and the function has been already implemented in many softwares, including in \cite{R-Core-Team:2022aa} in the \texttt{LambertW} package by \cite{2011.G.M.G.TAAS}. This is an advantage of the $\mathcal{SNTH}$ distribution over the multivariate Tukey $g$-and-$h$ distribution in the sense that the inverse of the Tukey $g$-and-$h$ transformation is not in a closed form. As a result, the computation of the probability density function and the log-likelihood of the $\mathcal{SNTH}$ distribution is somewhat simpler compared to that of the multivariate Tukey $g$-and-$h$ distribution. 

\begin{figure}[b!]
\begin{center}
\centering
\begin{subfigure}{0.3\textwidth}
  \centering
  \includegraphics[width=\linewidth]{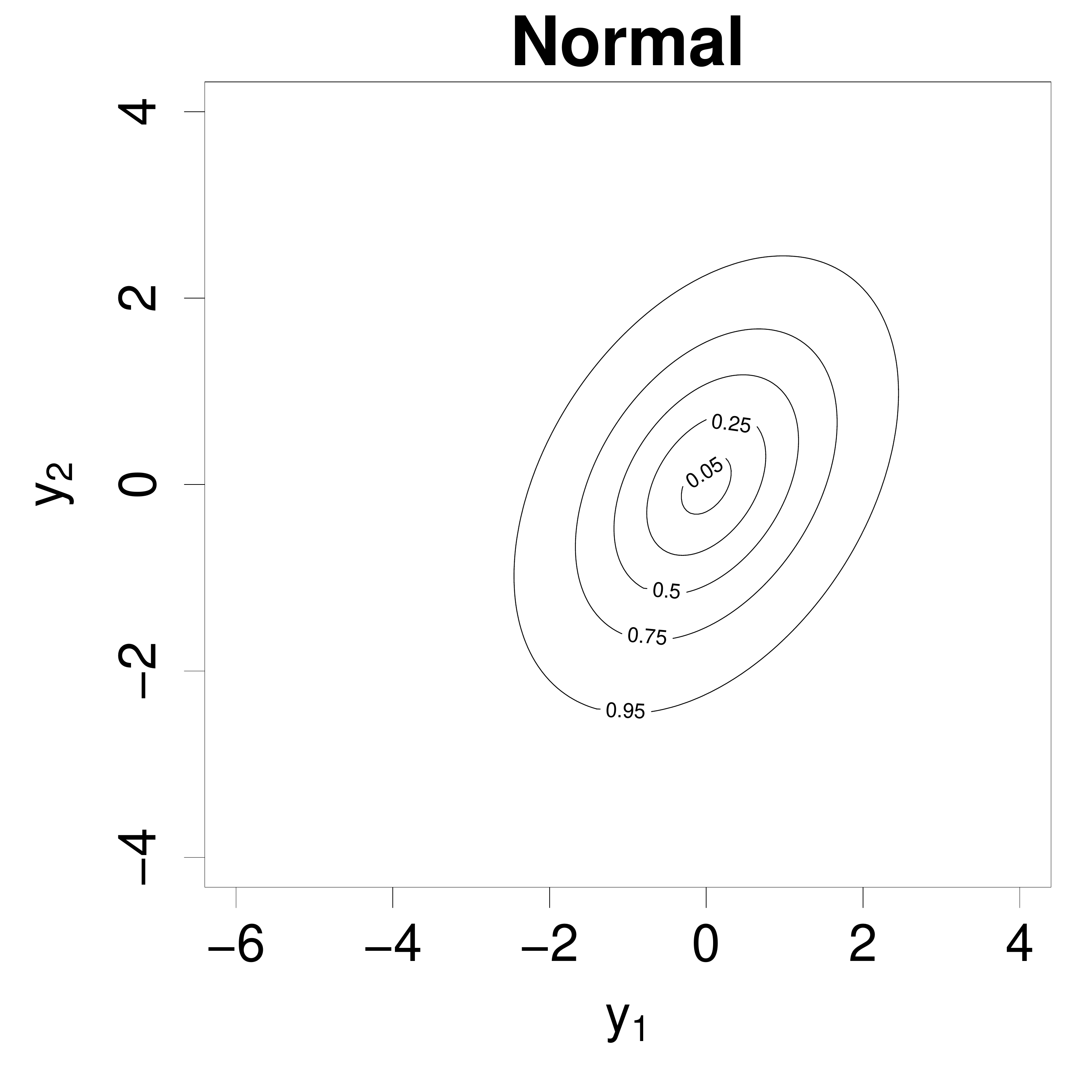}
  
\end{subfigure}%
\begin{subfigure}{0.3\textwidth}
  \centering
  \includegraphics[width=\linewidth]{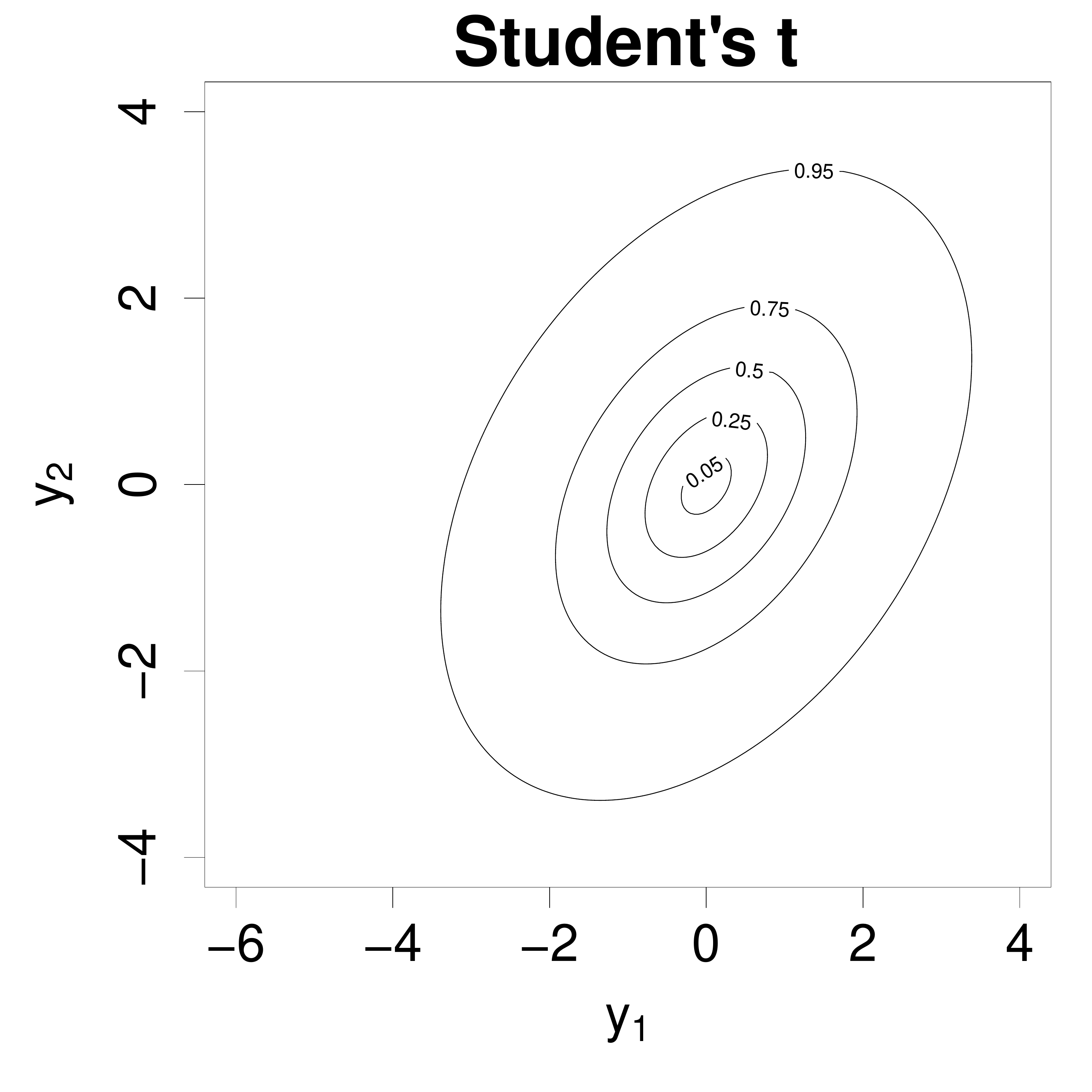}
  
\end{subfigure}
\begin{subfigure}{0.3\textwidth}
  \centering
  \includegraphics[width=\linewidth]{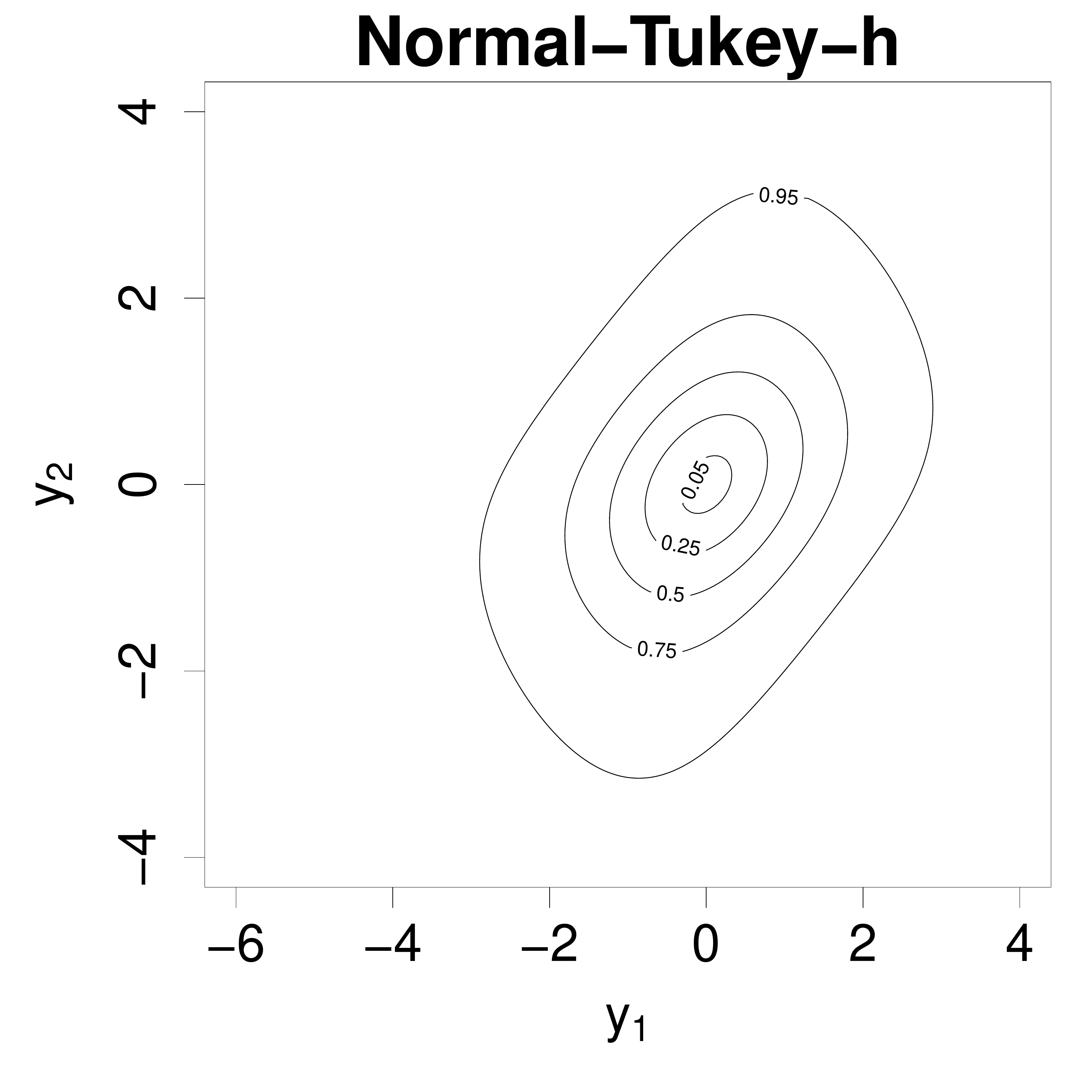}
  
\end{subfigure}
\begin{subfigure}{0.3\textwidth}
  \centering
  \includegraphics[width=\linewidth]{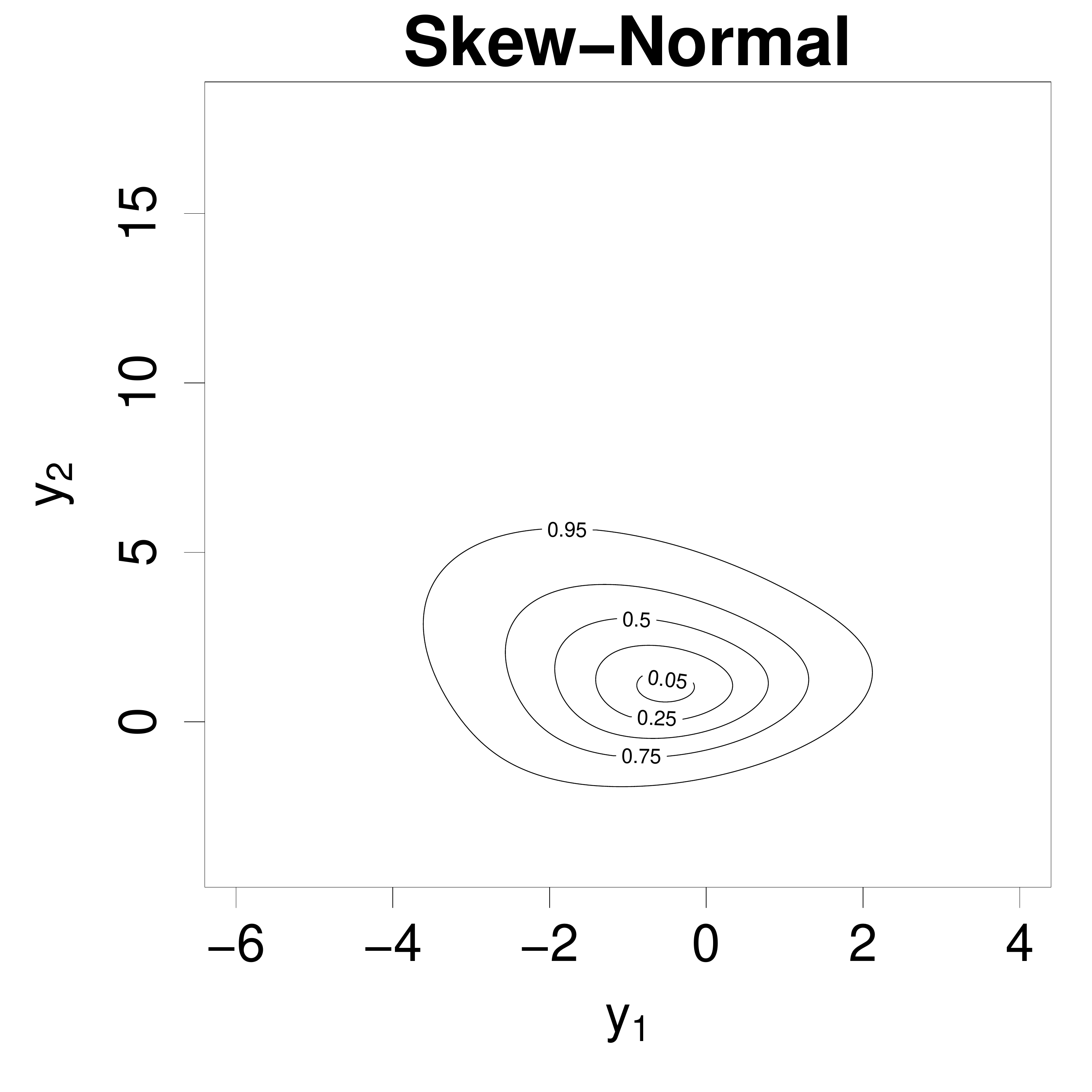}
  \vspace{-0.35in}
  
\end{subfigure}%
\begin{subfigure}{0.3\textwidth}
  \centering
  \includegraphics[width=\linewidth]{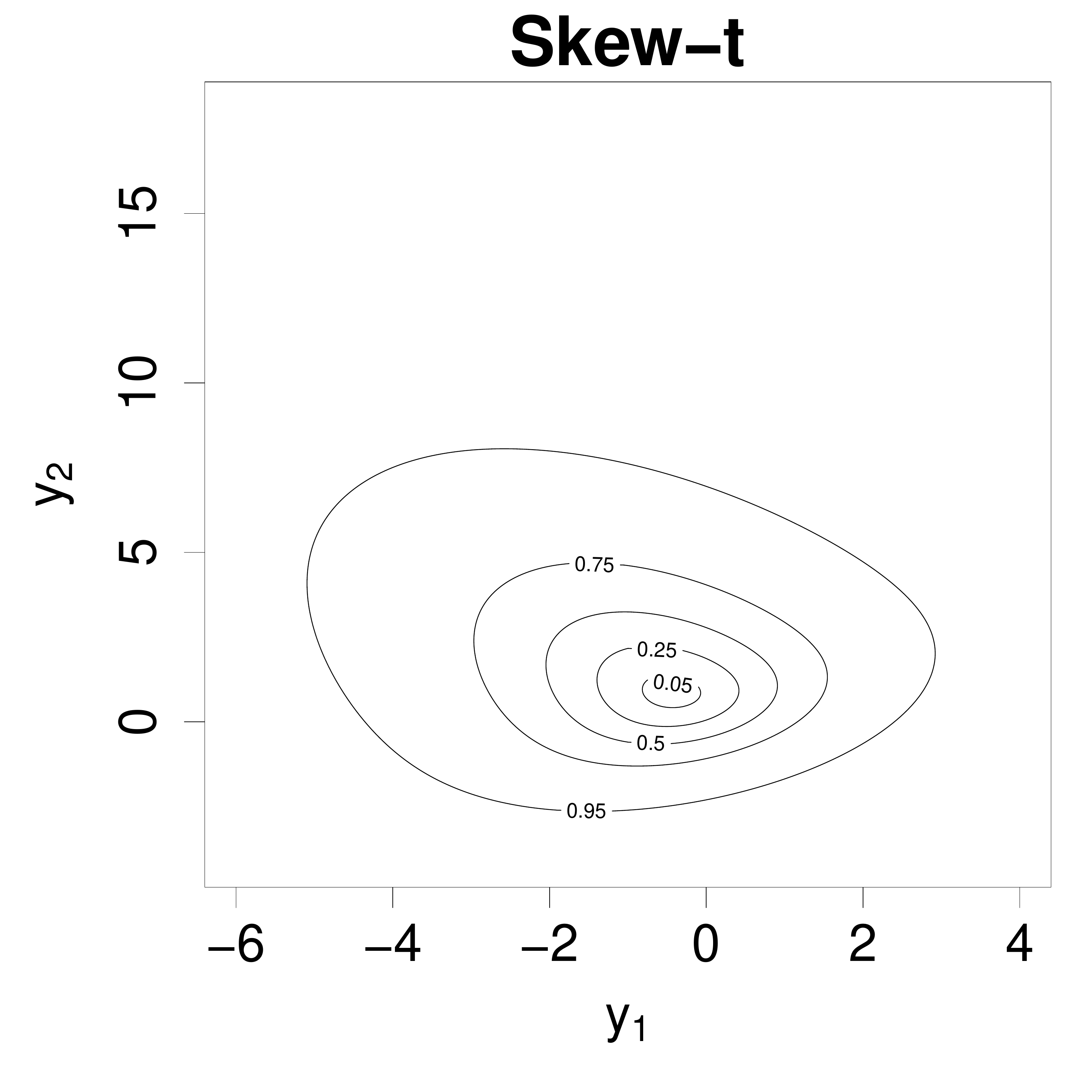}
  \vspace{-0.35in}
  
\end{subfigure}
\begin{subfigure}{0.3\textwidth}
  \centering
  \includegraphics[width=\linewidth]{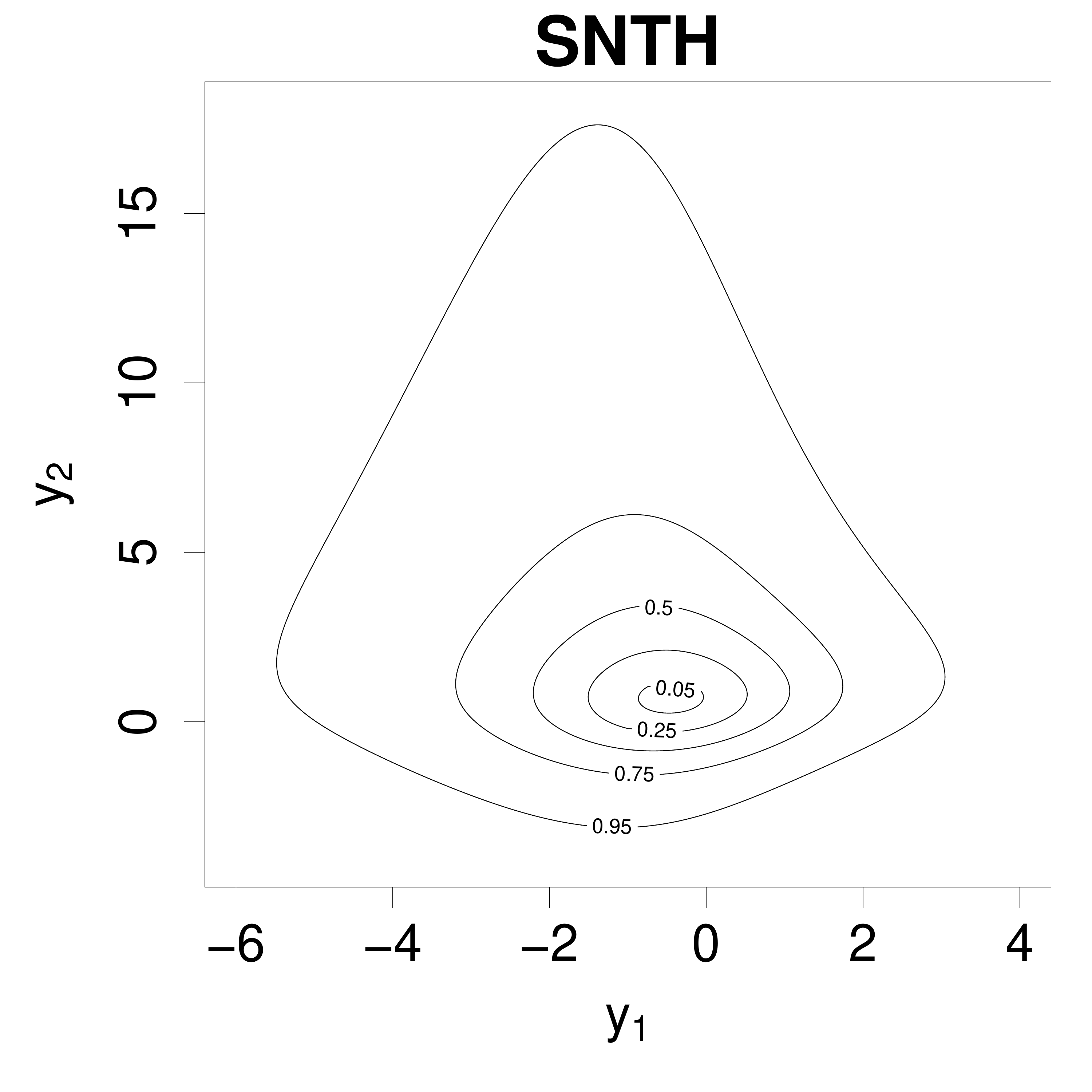}
  \vspace{-0.35in}
  
\end{subfigure}
\caption{Bivariate probability density contours of various distributions. Contours are given so that their coverage probabilities are approximately 0.05, 0.25, 0.5, 0.75, and 0.95.}
\label{fig:differnt_contours}
\end{center}
\end{figure}

To illustrate the effects of the skewness and the tail-thickness parameters of the $\mathcal{SNTH}$ distribution, we present the contour plots of $\mathcal{SNTH}_2(\bm 0,\text{diag}(1,1),\bar{\bm \Psi},\bm \eta,\bm h)$ probability densities with $\bar{\bm \Psi} = \begin{psmallmatrix} 1 & 0.4 \\ 0.4 & 1 \end{psmallmatrix}$ for four different pairs of $\bm \eta$ and $\bm h$: $\bm \eta = (0,0)^\top$ and $\bm h = (0,0)^\top$ corresponding to a normal density; $\bm \eta = (0,0)^\top$ and $\bm h = (0.05,0.1)^\top$ corresponding to a Normal-Tukey-$h$ density; $\bm \eta = (-1,2)^\top$ and $\bm h = (0,0)^\top$ corresponding to a $\mathcal{SN}$ density; and $\bm \eta = (-1,2)^\top$ and $\bm h = (0.05,0.1)^\top$ corresponding to a $\mathcal{SNTH}$ density. For comparison we also plot the density contours of a skew-$t$ distribution with $\bm \xi = (0,0)^\top$, $\bm\Omega=\begin{psmallmatrix} 2 & -1.6\\ -1.6 & 5 \end{psmallmatrix}$, $\bm \alpha = (-1.02,2.15)^\top$, and $\nu = 5$, and a Student's $t$ distribution with these same parameters (i.e., the same skew-$t$ with $\bm \alpha = \bm 0$). The $\bm \Omega$ and $\bm \alpha$ parameters are obtained so that they correspond to $\bar{\bm \Psi} = \begin{psmallmatrix} 1 & 0.4 \\ 0.4 & 1 \end{psmallmatrix}$ and $\bm \eta = (-1,2)^\top$ using the relationship between the parameters of the $\mathcal{ASN}$ and the $\mathcal{SN}$ parameterizations. All the density contours are plotted in Figure \ref{fig:differnt_contours}. The contours are drawn for the levels with approximate coverage probabilities $0.05$, $0.25$, $0.5$, $0.75$, and $0.95$. The density contour plots in the first row correspond to the density contours of the second row when the corresponding skewness parameters are set to zero. Although the contours in the first row are all symmetric, their symmetry differs from each other. More precisely, in Figure \ref{fig:differnt_contours}, the normal and the Student's $t$ probability contours are centrally symmetric whereas the normal-Tukey-$h$ probability contours are sign-invariant symmetric, which is a special case of central symmetry. It can be concluded from Figure \ref{fig:differnt_contours} that the shapes of the Student's $t$ and skew-$t$ density contours are similar to that of the normal and the skew-normal densities, respectively, with more spacing in-between the different levels for the formers due to thicker tails. The contours of the normal-Tukey-$h$ density and the $\mathcal{SNTH}$ density look similar to the normal and the skew-normal density contours, respectively, but the former have been stretched along the two axes. Since the extent of this stretching can be different along the two axes, the $\mathcal{SNTH}$ density contours can represent a  variety of shapes with changes in the skewness and the tail-thickness parameters. 

\subsection{Cumulative Distribution Function of $\mathcal{SNTH}$}

The cdf of the $\mathcal{SNTH}$ distribution can be obtained in closed form involving the principal branch $W_0(\cdot)$ of the Lambert's-$W$ function as shown next.
\begin{proposition}
    The cdf of $\bm Y \sim \mathcal{SNTH}_p (\bm \xi,\bm \omega,\bar{\bm \Psi}, \bm \eta, \bm h)$ is $F_{\bm Y}(\bm y) = 2 \Phi_{p+1}(\bm y_{**};\bm 0,\bm \Omega_{**})$
where $\Phi_{p+1}$ is the multivariate Gaussian cdf of dimension $p+1$,  $\bm y_{**} = \left\{ \tau_{h_1}^{-1} \left( \dfrac{y_1 - \xi_1}{\omega_{11}} \right),\ldots, \tau_{h_p}^{-1} \left( \dfrac{y_p - \xi_p}{\omega_{pp}} \right),0 \right\}^\top$ and $\bm \Omega_{**} = \begin{pmatrix} \bar{\bm \Psi} + \bm \eta \bm \eta^\top & -\bm \eta\\ -\bm \eta^\top & 1 \end{pmatrix}$.
\end{proposition}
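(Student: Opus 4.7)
The plan is to reduce the cdf of $\bm Y$ to the cdf of $\bm Z\sim\mathcal{SN}_p(\bm 0,\bar{\bm\Psi},\bm\eta)$ and then evaluate the latter using the stochastic representation recalled in Section~\ref{SNdistrib}. First I would observe that each component $\tau_{h_i}$ is strictly increasing on $\mathbb{R}$ (since $\tau_{h_i}'(x)=(1+h_i x^{2})\exp(h_i x^{2}/2)>0$ whenever $h_i\geq 0$), so $\bm\tau_{\bm h}$ is a bijection with coordinatewise inverse $\tau_h^{-1}(z)=z\exp\{-W_0(hz^{2})/2\}$. Applying this to the defining stochastic representation $\bm Y=\bm\xi+\bm\omega\,\bm\tau_{\bm h}(\bm Z)$ gives
\begin{equation*}
F_{\bm Y}(\bm y)=P\!\left(Z_i\leq \tau_{h_i}^{-1}\!\left(\tfrac{y_i-\xi_i}{\omega_{ii}}\right),\ i=1,\ldots,p\right)=F_{\bm Z}(\bm z_*),
\end{equation*}
where $\bm z_*=(\tau_{h_1}^{-1}((y_1-\xi_1)/\omega_{11}),\ldots,\tau_{h_p}^{-1}((y_p-\xi_p)/\omega_{pp}))^\top$. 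The whole task then reduces to showing $F_{\bm Z}(\bm z_*)=2\Phi_{p+1}((\bm z_*^\top,0)^\top;\bm 0,\bm\Omega_{**})$.

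For this I would invoke $\bm Z=U\bm\eta+\bm W$ with $U\sim\mathcal{HN}(0,1)$ and $\bm W\sim\mathcal{N}_p(\bm 0,\bar{\bm\Psi})$ independent, and write $U=|U_0|$ for $U_0\sim\mathcal{N}(0,1)$ independent of $\bm W$. Splitting the event $\{|U_0|\bm\eta+\bm W\leq\bm z_*\}$ by the sign of $U_0$ and substituting $U_0\mapsto -U_0$ on the piece $\{U_0<0\}$ (legitimate because $\bm W$ is independent of $U_0$ and the standard normal density is symmetric) yields
\begin{equation*}
F_{\bm Z}(\bm z_*) = 2\,P(U_0\bm\eta+\bm W\leq\bm z_*,\ U_0\geq 0).
\end{equation*}
The vector $(\bm V^\top,-U_0)^\top$ with $\bm V=U_0\bm\eta+\bm W$ is jointly centered Gaussian, and a short covariance calculation gives $\mathrm{Cov}(\bm V)=\bar{\bm\Psi}+\bm\eta\bm\eta^\top$, $\mathrm{Cov}(\bm V,-U_0)=-\bm\eta$, and $\mathrm{Var}(-U_0)=1$, which is exactly $\bm\Omega_{**}$. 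Since $\{U_0\geq 0\}=\{-U_0\leq 0\}$, the probability above equals $\Phi_{p+1}((\bm z_*^\top,0)^\top;\bm 0,\bm\Omega_{**})$, and the factor of two completes the argument.

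The step that requires the most care is the sign-splitting identity for the half-normal factor: one must justify passing from $|U_0|\bm\eta+\bm W$ to $U_0\bm\eta+\bm W$ in the presence of the additive Gaussian noise $\bm W$. The essential point is that, conditionally on $U_0$, the noise $\bm W$ acts additively and independently, so the symmetry substitution $U_0\mapsto -U_0$ leaves $\bm W$ unchanged in distribution and simply doubles the measure. Once this is carried out, the minus sign in the off-diagonal blocks of $\bm\Omega_{**}$ emerges automatically from the sign reversal of $U_0$, and no further algebra is required beyond the coordinatewise Tukey-$h$ inversion already invoked at the start.
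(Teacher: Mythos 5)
Your proposal is correct, and its opening step coincides exactly with the paper's: both arguments use the strict monotonicity of each $\tau_{h_i}$ to rewrite $F_{\bm Y}(\bm y)$ as $F_{\bm Z}(\bm z_*)$ for the latent $\bm Z \sim \mathcal{SN}_p(\bm 0,\bar{\bm\Psi},\bm\eta)$. Where you diverge is in how the closed form of $F_{\bm Z}$ is obtained. The paper simply cites Proposition 12 of Mondal et al.\ (2023) for the identity $F_{\bm Z}(\bm z_*) = 2\Phi_{p+1}\{(\bm z_*^\top,0)^\top;\bm 0,\bm\Omega_{**}\}$, whereas you re-derive it from the stochastic representation $\bm Z = U\bm\eta + \bm W$ with $U = |U_0|$, $U_0\sim\mathcal{N}(0,1)$ independent of $\bm W\sim\mathcal{N}_p(\bm 0,\bar{\bm\Psi})$: the sign-splitting and symmetry substitution correctly yield the factor $2$, and your covariance computation for $(\bm V^\top,-U_0)^\top$ with $\bm V = U_0\bm\eta+\bm W$ reproduces $\bm\Omega_{**}$ exactly, including the sign of the off-diagonal block $-\bm\eta$ coming from conditioning on $\{-U_0\le 0\}$ rather than $\{U_0\le 0\}$. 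Your route buys a self-contained proof at the cost of a few extra lines, and it makes transparent where the augmented $(p+1)$-st coordinate and the minus signs in $\bm\Omega_{**}$ come from; the paper's route is shorter but opaque without consulting the external reference. The one step you flag as delicate, the passage from $|U_0|$ to $U_0$ in the presence of $\bm W$, is handled correctly by your conditioning argument, so there is no gap.
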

\begin{proof} Let 
     $\bm Y= \bm \xi + \bm \omega \bm \tau_{\bm h}(\bm Z)$, where $\bm Z \sim \mathcal{SN}_p (\bm 0, \bar{\bm \Psi}, \bm \eta)$. Then the cdf of  $\bm Y$ is 
\begin{align*}
    F_{\bm Y}(\bm y) &= \mathbb{P}(Y_1 \leq y_1,\ldots,Y_p \leq y_p) 
    = \mathbb{P} \left[ Z_1 \leq \tau_{h_1}^{-1} \left( \dfrac{y_1 - \xi_1}{\omega_{11}} \right),\ldots,Z_p \leq \tau_{h_p}^{-1} \left( \dfrac{y_p - \xi_p}{\omega_{pp}} \right) \right]\\
    &= F_{\bm Z} \left\{ \tau_{h_1}^{-1} \left( \dfrac{y_1 - \xi_1}{\omega_{11}} \right),\ldots, \tau_{h_p}^{-1} \left( \dfrac{y_p - \xi_p}{\omega_{pp}} \right) \right\}
    = 2 \Phi_{p+1}(\bm y_{**};\bm 0,\bm \Omega_{**}), \quad \bm y \in \mathbb{R}^p,
\end{align*}
where $\tau_{h}^{-1}(z)$ is given in Equation (\ref{eq:inv_tukey_h}). The cdf of $\bm Z$, $F_{\bm Z}(\cdot)$, is obtained using Proposition 12 of \cite{2023.S.M.R.B.A.V.M.G.G.SP}.\qed
\end{proof}

\subsection{Marginal Distributions of $\mathcal{SNTH}$}

Similar to the $\mathcal{SN}$ distribution, the marginals of the $\mathcal{SNTH}$ distribution are also from the same family, as shown in the next proposition. 
\begin{proposition}\label{prop:marginal_SNTH}
Let $\bm Y \sim \mathcal{SNTH}_p (\bm \xi, \bm \omega,\bar{\bm \Psi},\bm \eta,\bm h)$ and consider the partition $\bm Y = (\bm Y_1^\top, \bm Y_2^\top)^\top$ with $\bm Y_i$ of size $p_i$ ($i=1,2$) and such that $p_1+p_2 = p$, with corresponding partitions of the parameters in blocks of matching sizes, as follows:
\begin{equation*}\label{eq:partition}
\bm \xi = \begin{pmatrix}
\bm \xi_1\\
\bm \xi_2
\end{pmatrix},
\bm \omega = \begin{pmatrix}
\bm \omega_{11} & \bm 0 \\
\bm 0 & \bm \omega_{22}
\end{pmatrix},
\bar{\bm \Psi} = \begin{pmatrix}
\bar{\bm \Psi}_{11} & \bar{\bm \Psi}_{12} \\
\bar{\bm \Psi}_{21} & \bar{\bm \Psi}_{22}
\end{pmatrix},
\bm \eta = \begin{pmatrix}
\bm \eta_1 \\ \bm \eta_2
\end{pmatrix}, \bm h  = \begin{pmatrix}
\bm h_1 \\ \bm h_2
\end{pmatrix}.
\end{equation*}
Then $\bm Y_i \sim \mathcal{SNTH}_{p_i} (\bm \xi_i, \bm \omega_{ii},\bar{\bm \Psi}_{ii},\bm \eta_i,\bm h_i)$, $i=1,2.$
\end{proposition}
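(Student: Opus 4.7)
The plan is to argue directly from the stochastic representation given in the definition of the $\mathcal{SNTH}$ distribution, which reduces the claim to the already established marginal property of the $\mathcal{SN}$ distribution recalled in Section~\ref{SNdistrib}. By construction, $\bm Y = \bm \xi + \bm \omega \bm \tau_{\bm h}(\bm Z)$ with $\bm Z \sim \mathcal{SN}_p(\bm 0, \bar{\bm \Psi}, \bm \eta)$, and all structural pieces respect the partition into blocks of sizes $p_1$ and $p_2$.

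First, I would exploit the fact that the multivariate Tukey-$h$ transformation acts coordinate-by-coordinate with separate tail parameters $h_i$, so it respects any block partition of its argument. Writing $\bm Z = (\bm Z_1^\top, \bm Z_2^\top)^\top$ compatibly with the partition of $\bm Y$, equation \eqref{eq:multivariate_Tukey-h_transformation} gives $\bm \tau_{\bm h}(\bm Z) = \{ \bm \tau_{\bm h_1}(\bm Z_1)^\top, \bm \tau_{\bm h_2}(\bm Z_2)^\top \}^\top$. Because $\bm \omega$ is diagonal, it is block-diagonal with blocks $\bm \omega_{11}$ and $\bm \omega_{22}$, and because $\bm \xi$ splits as $(\bm \xi_1^\top, \bm \xi_2^\top)^\top$, combining these observations with the stochastic representation immediately yields $\bm Y_i = \bm \xi_i + \bm \omega_{ii}\, \bm \tau_{\bm h_i}(\bm Z_i)$ for $i = 1,2$.

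Next, I would invoke the marginal property of the $\mathcal{SN}$ distribution stated in Section~\ref{SNdistrib}, from which $\bm Z_i \sim \mathcal{SN}_{p_i}(\bm 0, \bar{\bm \Psi}_{ii}, \bm \eta_i)$. The only subtlety worth checking is that the marginal scale matrix of $\bm Y_i$ must still be a correlation matrix for the $\mathcal{SNTH}$ parameterization to apply; this is automatic since a principal submatrix of a correlation matrix is itself a correlation matrix. Hence $\bm Y_i$ fits exactly the template of Definition~1 with parameters $(\bm \xi_i, \bm \omega_{ii}, \bar{\bm \Psi}_{ii}, \bm \eta_i, \bm h_i)$, which is the desired conclusion.

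I do not anticipate a genuine obstacle: the entire argument is a short diagram-chase through the stochastic representation, leveraging (a) the component-wise nature of $\bm \tau_{\bm h}$, (b) the block-diagonal form of $\bm \omega$, and (c) the inherited marginal structure from the $\mathcal{SN}$ layer. This is precisely the kind of property the authors had in mind when they chose to build $\mathcal{SNTH}$ on top of the $\mathcal{SN}$ parameterization in Equation~\eqref{eq:SN_mpdf} rather than the $\mathcal{ASN}$ one, since under the $\mathcal{ASN}$ parameterization the skewness parameter of the marginal would not be a simple subvector of the original.
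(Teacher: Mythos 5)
Your proposal is correct and follows essentially the same route as the paper's proof: partition the latent $\bm Z$, use the component-wise nature of $\bm \tau_{\bm h}$ and the block-diagonality of $\bm \omega$ to get $\bm Y_i = \bm \xi_i + \bm \omega_{ii}\,\bm \tau_{\bm h_i}(\bm Z_i)$, and invoke the $\mathcal{SN}$ marginal closure property. Your added remark that a principal submatrix of a correlation matrix remains a correlation matrix is a small point of care the paper leaves implicit.
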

\begin{proof}
Since, $\bm Y \sim \mathcal{SNTH}_p (\bm \xi , \bm \omega,\bar{\bm \Psi},\bm \eta, \bm h)$, then by definition there exists a random vector $\bm Z \sim \mathcal{SN}_p (\bm 0, \bar{\bm \Psi}, \bm \eta)$ such that $\bm Y = \bm \xi + \bm \omega \bm \tau_{\bm h} (\bm Z)$. Consider the partition $\bm Z = (\bm Z_1^\top, \bm Z_2^\top)^\top$, similar to $\bm Y$. Then, $\bm Z_i \sim \mathcal{SN}_{p_i} (\bm 0, \bar{\bm \Psi}_{ii},\bm \eta_i),~i=1,2$, and $\bm Y_i = \bm \xi_i + \bm \omega_{ii} \bm \tau_{\bm h_i}(\bm Z_i)$. Hence, $\bm Y_i \sim \mathcal{SNTH}_{p_i} (\bm \xi_i, \bm \omega_{ii},\bar{\bm \Psi}_{ii},\bm \eta_i,\bm h_i)$, $i=1,2.$ \qed
\end{proof}

Although the marginals of the $\mathcal{SNTH}$ remain in the same family, the same cannot be said for any general affine transformation of the $\mathcal{SNTH}$ distribution. The distribution of an arbitrary affine transformation of a $\mathcal{SNTH}$ random vector is not of a known type.

\subsection{Mean and Variance-Covariance of $\mathcal{SNTH}$}
The mean vector and the variance-covariance matrix of the $\mathcal{SNTH}$ distribution can be obtained in closed form. The next proposition presents these results. 
\begin{proposition}\label{prop:exp_var_cov}
Let $\bm Y \sim \mathcal{SNTH}_p(\bm \xi,\bm \omega, \bar{\bm \Psi}, \bm \eta, \bm h)$. The mean vector $\bm \mu=\mathbb{E}(\bm Y)$ and variance-covariance matrix $\bm\Sigma=(\sigma_{ij})=\mathbb{V} \text{ar} (\bm Y)$ are defined by:
\begin{align}
\mu_i&=\xi_i+   \omega_{ii} \sqrt{\dfrac{2}{\pi}} \dfrac{\eta_i}{\sqrt{1 - h_i}\{1 - h_i(1+\eta_i^2)\}},\quad \text{~if~} h_i < \dfrac{1}{1+\eta_i^2},\, \nonumber \\
    \sigma_{ii}&= \omega_{ii}^2 \left[ \dfrac{1+\eta_i^2}{\{ 1-  2h_i (1 +\eta_i^2)\}^{3/2}}-\dfrac{2}{\pi} \dfrac{\eta_i^2}{(1-h_i)\{1 - h_i(1+\eta_i^2)\}^2} \right], \quad \text{~if~} h_i < \dfrac{1}{2(1+\eta_i^2)}, 
 \nonumber \\
\sigma_{ij}&= \omega_{i}\omega_{j} \Bigg [\dfrac{\sqrt{\det(\bm A^{(ij)})}}{\sqrt{\det(\bar{\bm \Psi}_{i,j} + \bm \eta _{i,j} \bm \eta _{i,j} ^\top)}} a_{12}^{(ij)}-{\dfrac{2}{\pi}} \dfrac{\eta_i\eta_j}{\sqrt{(1 - h_i)(1 - h_j)} \{1 - h_i(1+\eta_i^2)\} \{1 - h_j(1+\eta_j^2)\}} \Bigg ],\nonumber \\
&~~~~~~~~~~~~~~~~~~~~~~~~~~~~~~~~~~~~~~~~~~~~~~~~~~~~~~~~~~~~~~~~~~~~~~~~~~~~~~~~~~~~~~~~~~~~~~~~~~~~~~~~~~~\text{~if~} \bm A^{(ij)} \text{~is~positive~definite}, \nonumber
\end{align}
where $\bm  \eta_{i,j} = (\eta_i, \eta_j)^\top,~ \bar{\bm \Psi}_{i,j} = \begin{psmallmatrix}
 1 & \bar{\Psi}_{ij} \\ \bar{\Psi}_{ij} & 1
 \end{psmallmatrix} $, $ \bm A^{(ij)} = \{(\bar{\bm \Psi}_{i,j} + \bm \eta_{i,j} \bm \eta_{i,j} ^\top)^{-1} - \bm H_{i,j} \} ^{-1}=\begin{pmatrix}
a_{11}^{(ij)} & a_{12}^{(ij)}\\ a_{12}^{(ij)} & a_{22}^{(ij)}
\end{pmatrix}$, $\bm H_{i,j} = \text{diag} (h_i,h_j)$, $i \neq j$, and $i,j = 1,\ldots,p$.
\end{proposition}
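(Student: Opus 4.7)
The plan is to exploit the stochastic representation $\bm Y = \bm\xi + \bm\omega\,\bm\tau_{\bm h}(\bm Z)$ with $\bm Z \sim \mathcal{SN}_p(\bm 0,\bar{\bm\Psi},\bm\eta)$ and the diagonal form of $\bm\omega$ to write $\mu_i = \xi_i + \omega_{ii}\mathbb{E}[Z_i e^{h_i Z_i^2/2}]$, $\sigma_{ii} = \omega_{ii}^2\,\mathrm{Var}\{\tau_{h_i}(Z_i)\}$, and $\sigma_{ij} = \omega_{ii}\omega_{jj}\{\mathbb{E}[\tau_{h_i}(Z_i)\tau_{h_j}(Z_j)] - \mathbb{E}[\tau_{h_i}(Z_i)]\mathbb{E}[\tau_{h_j}(Z_j)]\}$. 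Using the $\mathcal{SN}$ marginal property recalled in Section~\ref{SNdistrib}, each expectation collapses to a one- or two-dimensional integral against a skew-normal pdf, and the key mechanism throughout will be to absorb the Tukey-$h$ exponential into the Gaussian factor by completing the square.

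For the univariate pieces, I would take $Z_i\sim\mathcal{SN}_1(0,1,\eta_i)$, insert its pdf \eqref{eq:SN_mpdf} into $\mathbb{E}[Z_i^r e^{rh_i Z_i^2/2}]$ for $r\in\{1,2\}$, and merge $\phi(z;0,1+\eta_i^2)\,e^{rh_i z^2/2}$ into a rescaled Gaussian density of variance $\tilde\sigma_r^2 = (1+\eta_i^2)/\{1-rh_i(1+\eta_i^2)\}$; this step is valid exactly under the stated conditions $h_i < \{r(1+\eta_i^2)\}^{-1}$. What remains is $\mathbb{E}[T^r\Phi(bT)]$ for $T\sim N(0,\tilde\sigma_r^2)$ and $b=\eta_i/\sqrt{1+\eta_i^2}$. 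For $r=1$ this is the classical Gaussian expectation against a probit, equal to $b\tilde\sigma_1^2/\sqrt{2\pi(1+b^2\tilde\sigma_1^2)}$; for $r=2$, Stein's lemma applied to $g(T)=T\Phi(bT)$, combined with $\mathbb{E}[T\phi(bT)]=0$ by parity, yields the clean value $\tilde\sigma_2^2/2$. Substitution and routine algebra then reproduce $\mu_i$ and $\sigma_{ii}$ in the stated form, the second term of $\sigma_{ii}$ being $-(\mu_i-\xi_i)^2/\omega_{ii}^2$.

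For the covariance, the bivariate marginal $(Z_i,Z_j)^\top\sim\mathcal{SN}_2(\bm 0,\bar{\bm\Psi}_{i,j},\bm\eta_{i,j})$ reduces $\mathbb{E}[Z_iZ_j\,e^{(h_iZ_i^2+h_jZ_j^2)/2}]$ to a two-dimensional integral, and the same completion-of-the-square idea gives $\phi_2(\bm z;\bm 0,\bar{\bm\Psi}_{i,j}+\bm\eta_{i,j}\bm\eta_{i,j}^\top)\,e^{\bm z^\top\bm H_{i,j}\bm z/2} = \{\det\bm A^{(ij)}/\det(\bar{\bm\Psi}_{i,j}+\bm\eta_{i,j}\bm\eta_{i,j}^\top)\}^{1/2}\,\phi_2(\bm z;\bm 0,\bm A^{(ij)})$, which is a bona fide rescaled density precisely when $\bm A^{(ij)}$ is positive definite. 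The residual task is to evaluate $\mathbb{E}[T_iT_j\Phi(\bm b^\top\bm T)]$ for $\bm T\sim N_2(\bm 0,\bm A^{(ij)})$ and an explicit vector $\bm b$ inherited from the $\Phi$-factor in \eqref{eq:SN_mpdf}.

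This last bivariate expectation is the main obstacle. My plan is the orthogonal decomposition $T_k = \alpha_k V + R_k$ with $V=\bm b^\top\bm T$, $\alpha_k = (\bm A^{(ij)}\bm b)_k/(\bm b^\top\bm A^{(ij)}\bm b)$, and $R_k\perp V$ by Gaussian independence; expanding $T_iT_j\Phi(V)$ and using $\mathbb{E}[R_k]=0$, $\mathbb{E}[\Phi(V)]=1/2$, and $\mathbb{E}[V^2\Phi(V)] = \mathrm{Var}(V)/2$ (from Stein applied to $g(V)=V\Phi(V)$, noting $\mathbb{E}[V\phi(V)]=0$), I expect the $\bm b$-dependent cross terms to cancel and leave the striking closed form $\mathbb{E}[T_iT_j\Phi(\bm b^\top\bm T)] = a_{12}^{(ij)}/2$, independent of $\bm b$. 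Multiplying by the rescaling prefactor, subtracting the product of marginal means obtained in the first step, and rescaling by $\omega_{ii}\omega_{jj}$ then delivers the stated expression for $\sigma_{ij}$.
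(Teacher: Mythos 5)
Your proposal is correct and follows the same skeleton as the paper's proof: reduce to marginal one- and two-dimensional integrals via the stochastic representation and the $\mathcal{SN}$ marginal closure, then absorb the Tukey-$h$ exponential into the Gaussian factor by completing the square under exactly the stated moment-existence conditions, leaving a (bivariate) skew-normal moment to evaluate. The one genuine difference is the last step: the paper recognizes the residual integrals as first and second moments of univariate and bivariate $\mathcal{ASN}$ distributions and cites Azzalini and Capitanio (2014, Chapters 2 and 5) for their values, whereas you evaluate them from first principles --- Stein's lemma for $\mathbb{E}[T^2\Phi(bT)]=\tilde\sigma_2^2/2$, and the orthogonal decomposition $T_k=\alpha_k V+R_k$ with $V=\bm b^\top\bm T$ to show $\mathbb{E}[T_iT_j\Phi(\bm b^\top\bm T)]=a_{12}^{(ij)}/2$ independently of $\bm b$. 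Your self-contained computation is sound (I checked that the cross terms vanish and that the prefactors reproduce the stated $\mu_i$, $\sigma_{ii}$, and $\sigma_{ij}$), and it has the incidental benefit of transparently confirming the key fact the paper leans on implicitly, namely that the second-moment matrix of a multivariate skew-normal equals its scale matrix regardless of the skewness vector; it also makes clear that the slightly garbled final line of the paper's cross-moment display (where $\mathbb{E}(X_iX_j)$ and $a_{12}^{(ij)}$ both appear) should read simply $a_{12}^{(ij)}$. The paper's citation route is shorter; yours is more elementary and verifiable without external references.
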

\begin{proof}
Since $\bm Y \sim \mathcal{SNTH}_p(\bm \xi,\bm \omega, \bar{\bm \Psi}, \bm \eta, \bm h)$, then $\bm Y$ can be written as $\bm Y= \bm \xi + \bm \omega \bm \tau_{\bm h} (\bm Z)$, where $\bm Z \sim \mathcal{SN}_p(\bm 0,\bar{\bm \Psi},\bm \eta)$. Then using the fact $Z_i \sim \mathcal{SN}(0,1,\eta_i)$:
\begin{equation*}
\begin{split}
\mathbb{E}\{\tau_{h_i}(Z_i)\} &= \int_{\mathbb{R}} x \exp(h_i x^2 /2) 2 \phi(x;0,1+\eta_i ^2 ) \Phi \left( \dfrac{\eta_i x}{\sqrt{1+\eta_i^2}} \right) \text{d} x\\
& =  \int_{\mathbb{R}} \dfrac{\sqrt{1+\eta_i^2}}{\sqrt{1-h_i (1+\eta_i^2)}} t \dfrac{2}{\sqrt{2 \pi} \sqrt{1+\eta_i^2}} \exp(-t^2/2) \Phi \left( \dfrac{\eta_i t}{\sqrt{1 - h_i (1+\eta_i^2)}} \right) \dfrac{\sqrt{1+\eta_i^2}}{\sqrt{1 - h_i (1+ \eta_i^2)}} \text{d} t \\
&~~~~ \left(\text{ using the change of variable }t = \dfrac{\sqrt{1 - h_i (1 +\eta_i^2)}}{\sqrt{1+\eta_i^2}} x \right)\\
&= \dfrac{\sqrt{1+\eta_i^2}}{1 - h_i(1+\eta_i^2)} \mathbb{E}(X_i) \quad \text{with } X_i \sim \mathcal{ASN} \left(0,1,\frac{\eta_i}{\sqrt{1-h_i(1+\eta_i^2)}} \right)\\
&= \sqrt{\dfrac{2}{\pi}} \dfrac{\eta_i}{\sqrt{1 - h_i}\{1 - h_i(1+\eta_i^2)\}}, \quad h_i(1+\eta_i^2)<1, ~ i = 1,\ldots,p;
\end{split}
\end{equation*}
\begin{equation*}
\begin{split}
\mathbb{E}[\{\tau_{h_i}(Z_i)\}^2] &= \int_{\mathbb{R}} x^2 \exp(h_i x^2 ) 2 \phi(x;0,1+\eta_i ^2 ) \Phi \left( \dfrac{\eta_i x}{\sqrt{1+\eta_i^2}} \right) \text{d} x \hspace{7.7cm}
\end{split}
\end{equation*}
\begin{equation*}
\begin{split}
&= \int_{\mathbb{R}} \dfrac{1+\eta_i^2}{1 - 2 h_i (1 +\eta_i^2)} t^2 \dfrac{2}{\sqrt{2 \pi} \sqrt{1+\eta_i^2}} \exp(-t^2/2) \Phi \left( \dfrac{\eta_i t}{\sqrt{1 -2 h_i (1+\eta_i^2)}} \right) \dfrac{\sqrt{1+\eta_i^2}}{\sqrt{1 - 2h_i (1+ \eta_i^2)}} \text{d} t \\
&~~~~ \left(\text{ using the change of variable }t = \dfrac{\sqrt{1 - 2h_i (1 +\eta_i^2)}}{\sqrt{1+\eta_i^2}} x \right)\\
&= \dfrac{1+\eta_i^2}{\{ 1-  2h_i (1 +\eta_i^2)\}^{3/2}} \mathbb{E}(X_i^2) \quad \text{with } X_i \sim \mathcal{ASN} \left(0,1,\frac{\eta_i}{\sqrt{1-2h_i(1+\eta_i^2)}} \right)\\
&= \dfrac{1+\eta_i^2}{\{ 1-  2h_i (1 +\eta_i^2)\}^{3/2}}, \quad 2h_i(1+\eta_i^2) <1, ~ i = 1,\ldots,p.
\end{split}
\end{equation*}
Hence:
\begin{equation*}
    \mathbb{V} \text{ar} \{ \tau_{h_i}(Z_i) \}  = \dfrac{1+\eta_i^2}{\{ 1-  2h_i (1 +\eta_i^2)\}^{3/2}}-\dfrac{2}{\pi} \dfrac{\eta_i^2}{(1-h_i)\{1 - h_i(1+\eta_i^2)\}^2}, ~ h_i < \dfrac{1}{2(1+\eta_i^2)}, ~ i = 1,\ldots,p.
\end{equation*}
\begin{equation*}
\begin{split}
\mathbb{E} \{\tau_{h_i}(Z_i) \tau_{h_j}(Z_j)\} &= \int_{\mathbb{R}^2} x_1 x_2 \exp\{ (h_i x_1^2 + h_j x_2^2)/2 \} 2 \phi_2(\bm x; \bm 0,  \bar{\bm \Psi}_{i,j} + \bm \eta_{i,j} \bm \eta_{i,j}^\top )\Phi \left( \dfrac{\bm \eta_{i,j} ^\top \bar{\bm \Psi}_{i,j}^{-1} \bm x}{\sqrt{1+ \bm \eta_{i,j}^\top \bar{\bm \Psi}_{i,j}^{-1} \bm \eta_{i,j}}} \right) \text{d} \bm x\\
&= \int_{\mathbb{R}^2} x_1 x_2 \dfrac{\sqrt{\det(\bm A^{(ij)})}}{\sqrt{\det(\bar{\bm \Psi}_{i,j} + \bm \eta _{i,j} \bm \eta _{i,j} ^\top)}} 2 \phi_2 (\bm x; \bm 0, \bm A^{(ij)}) \Phi \left( \dfrac{\bm \eta_{i,j} ^\top \bar{\bm \Psi}_{i,j}^{-1} \bm \omega_{\bm A^{(ij)}} \bm \omega_{\bm A^{(ij)}}^{-1} \bm x}{\sqrt{1+ \bm \eta_{i,j}^\top \bar{\bm \Psi}_{i,j}^{-1} \bm \eta_{i,j}}} \right) \text{d} \bm x\\
& = \dfrac{\sqrt{\det(\bm A^{(ij)})}}{\sqrt{\det(\bar{\bm \Psi}_{i,j} + \bm \eta _{i,j} \bm \eta _{i,j} ^\top)}} \mathbb{E}(X_i X_j)\\
& = \dfrac{\sqrt{\det(\bm A^{(ij)})}}{\sqrt{\det(\bar{\bm \Psi}_{i,j} + \bm \eta _{i,j} \bm \eta _{i,j} ^\top)}} \mathbb{E}(X_i X_j) a_{12}^{(ij)}, \quad\text{if}\bm A^{(ij)}~\text{is}~\text{positive definite}, ~i \neq j, ~ i,j = 1,\ldots,p,
\end{split}
\end{equation*}
where $(X_i,X_j)^\top \sim \mathcal{ASN}_2 \left(\bm 0,\bm A^{(ij)}, \frac{ \bm \omega_{\bm A^{(ij)}} \bar{\bm \Psi}_{i,j}^{-1}  \bm \eta_{i,j} }{\sqrt{1+ \bm \eta_{i,j}^\top \bar{\bm \Psi}_{i,j}^{-1} \bm \eta_{i,j}}} \right)$ and 
$\bm \omega_{\bm A^{(ij)}} = \{\text{diag}(\bm A^{(ij)})\}^{1/2}$.
The moments related to the $\mathcal{ASN}$ distribution are obtained from Chapter 2 (univariate) and Chapter 5 (multivariate) of \cite{2013.A.A.A.C.CUP}. The rest of the proof is straightforward and hence omitted.\qed
\end{proof}

To this point, we have closed-form expressions of the mean vector and the variance-covariance matrix for the $\mathcal{SNTH}$ distribution. However, we cannot have a closed-form expression for its moment generating function or characteristic function. This is because the distribution of any general affine transformation of the $\mathcal{SNTH}$ distribution is not known.

\subsection{Marginal Skewness and Kurtosis of $\mathcal{SNTH}$}

Here we discuss some results related to the skewness and kurtosis of the $\mathcal{SNTH}$ distribution. The Mardia's measures of multivariate skewness and kurtosis \citep{1970.V.K.M.Biometrika} for the $\mathcal{SNTH}$ distribution cannot be derived in closed form. However, their univariate counterparts can be derived. Similar to the skew-$t$ distribution, the Pearson's measures of skewness and excess-kurtosis are also unbounded for the univariate $\mathcal{SNTH}$ distribution, suggesting that it is also the case in the multivariate setting.
\begin{proposition}\label{prop:gamma_1_gamma_2}
The Pearson's measures of skewness and excess-kurtosis of $Y \sim \mathcal{SNTH}_1(0,1,1,\eta,h)$ are
$\gamma_1 = \mu_3/\mu_2^{3/2}$ and $\gamma_2 =\mu_4/\mu_2^{2} - 3$,
where $\mu_2 = \mathbb{V} \text{ar}(Y)$, $\mu_3 = \mathbb{E}\{Y- \mathbb{E}(Y)\}^3 = \mathbb{E}(Y^3) - 3 \mathbb{E}(Y^2) \mathbb{E}(Y) + 2\mathbb{E}(Y)^2$, $\mu_4 = \mathbb{E}\{Y- \mathbb{E}(Y)\}^4 = \mathbb{E}(Y^4) - 4 \mathbb{E}(Y^3) \mathbb{E}(Y) + 6 \mathbb{E}(Y^2) \mathbb{E}(Y)^2 - 3\mathbb{E}(Y)^4$ with:
\begin{align*}
    \mathbb{E}(Y^3) &= \sqrt{\dfrac{2}{\pi}} \dfrac{(1+\eta^2)^{3/2}}{\{1 - 3 h (1+\eta^2)\}^2} \Bigg [ \dfrac{2 \eta^3 + 3 \eta \{1 - 3h (1+\eta^2)\}}{\{ (1+\eta^2)(1- 3 h)\}^{3/2}} \Bigg ], \quad h< \dfrac{1}{3(1+\eta^2)},\\
    \mathbb{E}(Y^4) &= \dfrac{3(1+\eta^2)}{\{1 - 4 h (1+\eta^2)\}^{5/2}}, \quad h< \dfrac{1}{4(1+\eta^2)}.
\end{align*}
\end{proposition}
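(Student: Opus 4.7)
The plan is to push the integral--change-of-variable strategy already used in the proof of Proposition~\ref{prop:exp_var_cov} up to the third and fourth raw moments. By the definition of $\mathcal{SNTH}_1(0,1,1,\eta,h)$ we may write $Y=\tau_h(Z)=Z\exp(hZ^2/2)$ with $Z\sim\mathcal{SN}_1(0,1,\eta)$, so $Y^k=Z^k\exp(khZ^2/2)$ and
\begin{equation*}
\mathbb{E}(Y^k)=\int_{\mathbb{R}} z^k\exp(khz^2/2)\,2\phi(z;0,1+\eta^2)\,\Phi\!\left(\frac{\eta z}{\sqrt{1+\eta^2}}\right)dz.
\end{equation*}
The same completion-of-square identity that did the work in Proposition~\ref{prop:exp_var_cov} turns $\phi(z;0,1+\eta^2)\exp(khz^2/2)$ into $\{1-kh(1+\eta^2)\}^{-1/2}\phi(z;0,\sigma_k^2)$ with $\sigma_k^2=(1+\eta^2)/\{1-kh(1+\eta^2)\}$, valid precisely when $kh(1+\eta^2)<1$; this already produces the stated moment conditions at $k=3,4$. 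After rescaling $z=\sigma_k t$, one recognises the remaining integral as the $k$th raw moment of $X_k\sim\mathcal{ASN}(0,\sigma_k^2,\alpha_k)$ with $\alpha_k=\sigma_k\eta/\sqrt{1+\eta^2}=\eta/\sqrt{1-kh(1+\eta^2)}$, so $\mathbb{E}(Y^k)=\{1-kh(1+\eta^2)\}^{-1/2}\,\mathbb{E}(X_k^k)$.

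From here the two cases separate. For $k=3$, I would write $X_3=\sigma_3 U_3$ with $U_3\sim\mathcal{ASN}(0,1,\alpha_3)$ and use the classical identity $\mathbb{E}(U_3^3)=\sqrt{2/\pi}\,\delta_3(3-\delta_3^2)$ where $\delta_3=\alpha_3/\sqrt{1+\alpha_3^2}$ (available, for instance, in Chapter~2 of \cite{2013.A.A.A.C.CUP}). Direct substitution gives $\delta_3^2=\eta^2/\{(1+\eta^2)(1-3h)\}$ and $3-\delta_3^2=[3(1+\eta^2)(1-3h)-\eta^2]/\{(1+\eta^2)(1-3h)\}$; factoring an $\eta$ out of the numerator $3(1+\eta^2)(1-3h)-\eta^2$ rewrites it exactly as $(1/\eta)\{2\eta^3+3\eta[1-3h(1+\eta^2)]\}$, which, combined with the powers of $\sigma_3$ and $\{1-3h(1+\eta^2)\}^{-1/2}$ accumulated along the way, reproduces the stated formula for $\mathbb{E}(Y^3)$. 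For $k=4$, the decisive observation is that the fourth raw moment of any standard skew-normal equals $3$: using the representation $U_4=\delta_4|Z_0|+\sqrt{1-\delta_4^2}\,Z_1$ with independent standard normals $Z_0,Z_1$, the cross terms involving $Z_1$ or $Z_1^3$ vanish and the surviving contribution $3\delta_4^4+6\delta_4^2(1-\delta_4^2)+3(1-\delta_4^2)^2$ collapses to $3$ independently of $\delta_4$. Therefore $\mathbb{E}(X_4^4)=3\sigma_4^4$ and $\mathbb{E}(Y^4)$ drops out at once in closed form in terms of $1+\eta^2$ and $1-4h(1+\eta^2)$.

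Finally, $\mathbb{E}(Y)$ and $\mathbb{E}(Y^2)$ come for free from Proposition~\ref{prop:exp_var_cov}, so once $\mathbb{E}(Y^3)$ and $\mathbb{E}(Y^4)$ are in hand, the central moments $\mu_2,\mu_3,\mu_4$ and hence $\gamma_1=\mu_3/\mu_2^{3/2}$ and $\gamma_2=\mu_4/\mu_2^2-3$ follow by unpacking the binomial identities already stated in the proposition. The only genuinely delicate step is the algebraic tidying in the $k=3$ case, where three intertwined factors $1+\eta^2$, $1-3h$ and $1-3h(1+\eta^2)$ must be combined to recover the claimed numerator; the $k=4$ case is essentially trivial thanks to the skew-normal fourth-moment identity. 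The unboundedness of $\gamma_1$ and $\gamma_2$ asserted just before the proposition is then inherited from the singularities of $\mathbb{E}(Y^3)$ and $\mathbb{E}(Y^4)$ as $h\uparrow 1/\{3(1+\eta^2)\}$ and $h\uparrow 1/\{4(1+\eta^2)\}$ respectively, at which $\mu_2$ (which requires only $h<1/\{2(1+\eta^2)\}$) remains finite.
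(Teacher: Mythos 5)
Your proposal follows essentially the same route as the paper's own proof: write $Y^k=Z^k\exp(khZ^2/2)$, absorb the exponential into the Gaussian kernel by completing the square (which produces the constraints $kh(1+\eta^2)<1$), and recognise the resulting integral as the $k$th raw moment of an $\mathcal{ASN}$ variable; the paper then simply quotes the third and fourth $\mathcal{ASN}$ moments from Chapter 2 of \cite{2013.A.A.A.C.CUP}, whereas you rederive them from the $\delta|Z_0|+\sqrt{1-\delta^2}\,Z_1$ representation, which is a harmless and self-contained substitute. Your algebra for $k=3$ checks out and reproduces the displayed expression for $\mathbb{E}(Y^3)$ exactly.

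One substantive remark on the $k=4$ case, which you leave as ``drops out at once'': carried to completion, your (correct) computation gives $\mathbb{E}(Y^4)=\{1-4h(1+\eta^2)\}^{-1/2}\cdot 3\sigma_4^4=3(1+\eta^2)^2/\{1-4h(1+\eta^2)\}^{5/2}$, with $(1+\eta^2)$ \emph{squared} in the numerator, whereas the proposition states $3(1+\eta^2)$. The squared version is the right one --- at $h=0$ one has $Y=Z\sim\mathcal{SN}_1(0,1,\eta)$ and $\mathbb{E}(Z^4)=3(1+\eta^2)^2$ by parity of $z^4$ --- so the stated formula contains a typo that your derivation silently exposes; you should write out the final expression and flag the discrepancy rather than asserting agreement with the statement.
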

\begin{proof}
The expressions of $\mathbb{E}(Y)$, $\mathbb{E}(Y^2)$, and $\mathbb{V}\text{ar}(Y)$ are given in Proposition \ref{prop:exp_var_cov}. Since $Y \sim \mathcal{SNTH}(0,1,1,\eta,h)$, we have $Y = \tau_h(Z)$, where $Z \sim \mathcal{SN}(0,1,\eta)$. Hence,
\begin{align*}
\mathbb{E}(Y^3) &= \int_{\mathbb{R}} x^3 \exp(3 h x^2/2 ) 2 \phi(x;0,1+\eta ^2 ) \Phi \left( \dfrac{\eta x}{\sqrt{1+\eta^2}} \right) \text{d} x\\
&= \dfrac{1}{\sqrt{1-3h (1+\eta^2)}} \int_{\mathbb{R}} x^3 2 \phi \left( x;0,\dfrac{(1+\eta^2)}{1 - 3h (1+\eta^2)} \right)\\
&~~~~~~~~~~~~~~~~~~~~~~~~~~~~~~~~~~~~~~~\times \Phi \Bigg (  \dfrac{\eta}{\{1-3h (1+\eta^2)\}^{1/2}}  \dfrac{(1+\eta^2)^{-1/2}}{\{1-3h (1+\eta^2)\}^{-1/2}} x \Bigg )\text{d} x\\
&= \dfrac{1}{\sqrt{1-3h (1+\eta^2)}} \mathbb{E}(X^3)\quad \text{with } X \sim \mathcal{ASN} \left(0,\frac{(1+\eta^2)}{1 - 3h (1+\eta^2)},\frac{\eta}{\{1-3h (1+\eta^2)\}^{1/2}} \right)\\
&=\sqrt{\dfrac{2}{\pi}} \dfrac{(1+\eta^2)^{3/2}}{\{1 - 3 h (1+\eta^2)\}^2} \Bigg [ \dfrac{2 \eta^3 + 3 \eta \{1 - 3h (1+\eta^2)\}}{\{ (1+\eta^2)(1- 3 h)\}^{3/2}} \Bigg ], \quad h< \dfrac{1}{3(1+\eta^2)},
\end{align*}
and
\begin{align*}
\mathbb{E}(Y^4) &= \int_{\mathbb{R}} x^4 \exp(2 h x^2 ) 2 \phi(x;0,1+\eta ^2 ) \Phi \left( \dfrac{\eta x}{\sqrt{1+\eta^2}} \right) \text{d} x\\
&=  \dfrac{1}{\sqrt{1-4h (1+\eta^2)}} \int_{\mathbb{R}} x^4 2 \phi \left( x;0,\dfrac{(1+\eta^2)}{1 - 4h(1+\eta^2)} \right)\\
&~~~~~~~~~~~~~~~~~~~~~~~~~~~~~~~~~~~~~~~\times \Phi \Bigg \{  \dfrac{\eta}{\{1-4h(1+\eta^2)\}^{1/2}}  \dfrac{(1+\eta^2)^{-1/2}}{\{1-4h (1+\eta^2)\}^{-1/2}} x \Bigg \} \text{d} x\\
&= \dfrac{1}{\sqrt{1-4h (1+\eta^2)}} \mathbb{E}(X^4) \quad \text{with } X \sim \mathcal{ASN} \left(0,\frac{(1+\eta^2)}{1 - 4h (1+\eta^2)},\frac{\eta}{\{1-3h (1+\eta^2)\}^{1/2}} \right)\\
&=\dfrac{3(1+\eta^2)}{\{1 - 4 h (1+\eta^2)\}^{5/2}}, \quad h< \dfrac{1}{4(1+\eta^2)}.
\end{align*}
The $3^{\text{rd}}$ and $4^{\text{th}}$ order moments of the $\mathcal{ASN}$ distribution are obtained from Chapter 2 of \cite{2013.A.A.A.C.CUP}.\qed
\end{proof}

\begin{figure}[t!]
\begin{center}
\centering
\begin{subfigure}{0.4\textwidth}
  \centering
  \includegraphics[width=\linewidth]{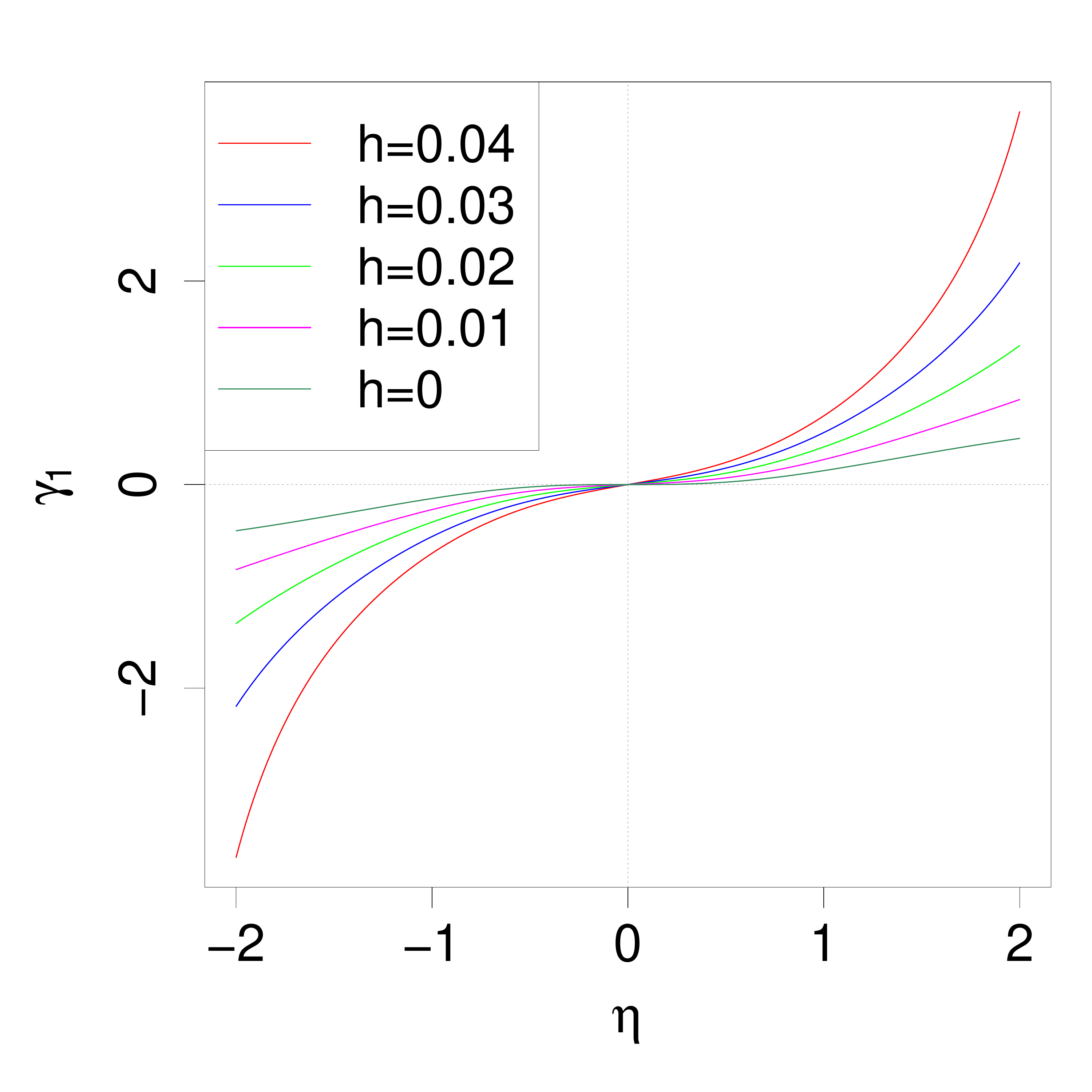}
  \caption{Plots of $\gamma_1$ against $\eta$ for different fixed $h$}
  
\end{subfigure}%
\begin{subfigure}{0.4\textwidth}
  \centering
  \includegraphics[width=\linewidth]{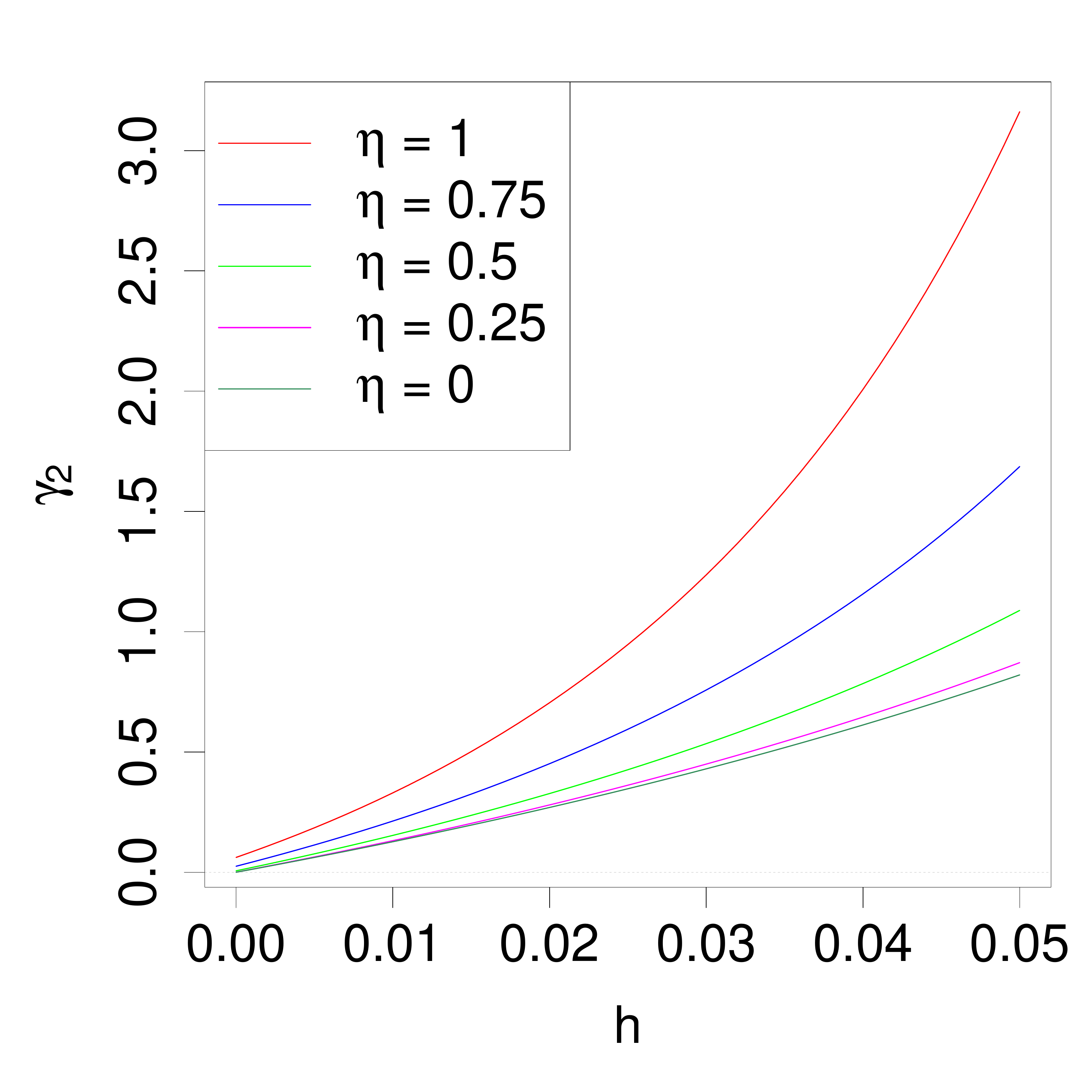}
  \caption{Plots of $\gamma_2$ against $h$ for different fixed $\eta$}
\end{subfigure}
\caption{Plots of the measures of skewness and kurtosis for the $\mathcal{SNTH}_1(0,1,1,\eta,h)$ distribution.}
\label{fig:skewness_kurtosis}
\end{center}
\end{figure}
We provide plots of the $\gamma_1$ and $\gamma_2$ measures for the $\mathcal{SNTH}_1(0,1,1,\eta,h)$ distribution against $\eta$ and $h$ for different fixed $h$ and $\eta$, respectively, in Figure \ref{fig:skewness_kurtosis}. From the plots, it is clear that the parameter $\eta$ dictates the extent of skewness in the distribution. Moreover, for a fixed $\eta$, the extent of skewness increases with increase in $h$ and vice-versa. Similarly, the extent of the tail-thickness is dictated by the parameter $h$ and for a fixed $h$, the tail-thickness increases with increase in $\eta$ and vice-versa. Here we only plot $\gamma_2$ against $h$ for positive $\eta$ as $\gamma_2$ is only a function of $\eta^2$. The plots show how the effect of $\eta$ and $h$ on skewness and kurtosis are intertwined. Nevertheless, we associate the parameter $\eta$ with the skewness and the parameter $h$ with the tail-thickness of the $\mathcal{SNTH}$ distribution. It is also worth pointing out from the plots that the $\gamma_2$ measure cannot be less than zero for the $\mathcal{SNTH}$ distribution. Hence, the $\mathcal{SNTH}$ distribution is not suitable for scenarios when tail-thickness of the data is less than that of the Gaussian distribution.

\begin{figure}[b!]
\begin{center}
\centering
\begin{subfigure}{0.4\textwidth}
  \centering
  \includegraphics[width=\linewidth]{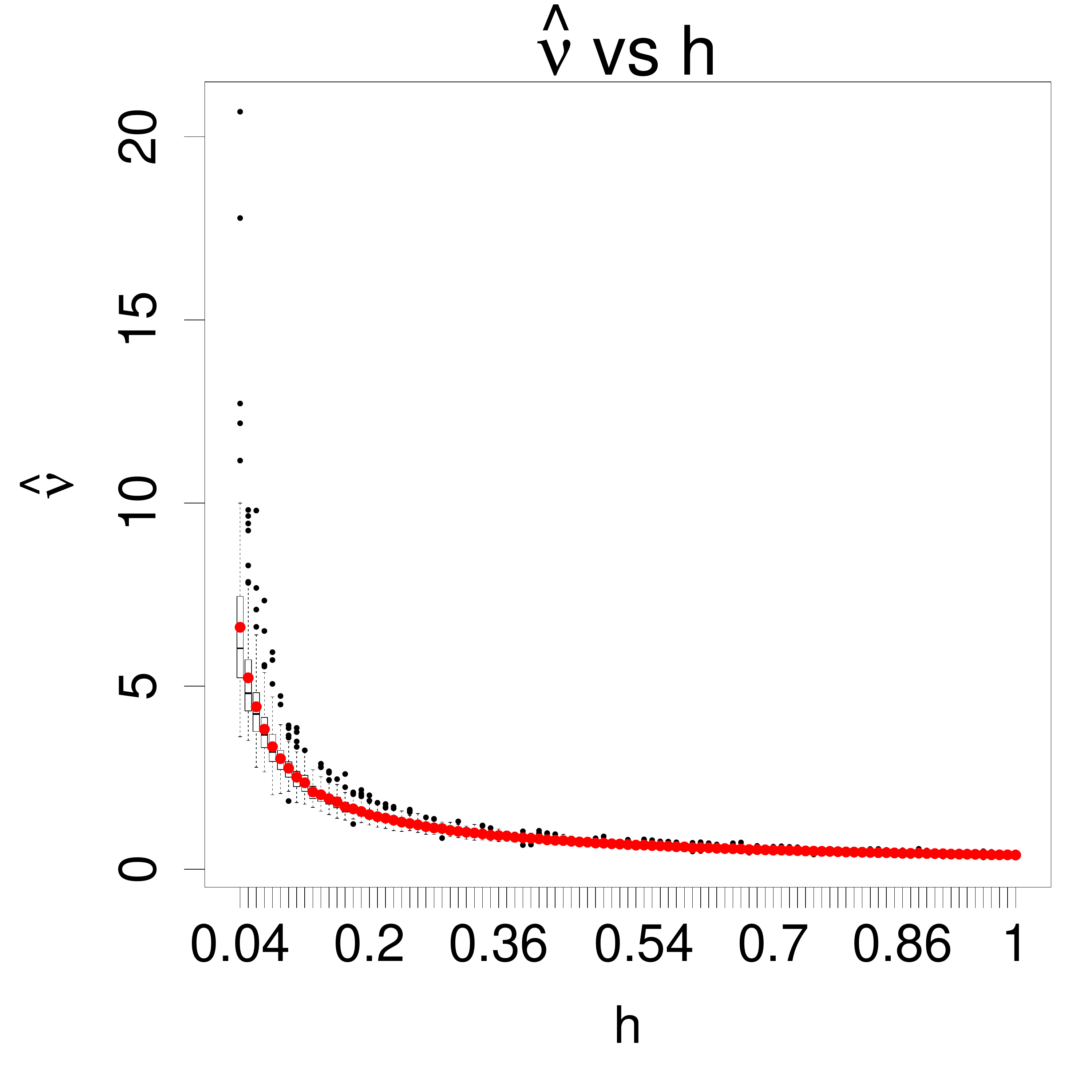}
  \caption{Boxplots of $\hat{\nu}$ against the true $h$ parameter for data simulated from $\mathcal{SNTH}$}
  
 \label{fig:true_h_vs_nu_est}
\end{subfigure}%
\hspace{0.25in}
\begin{subfigure}{0.4\textwidth}
  \centering
  \includegraphics[width=\linewidth]{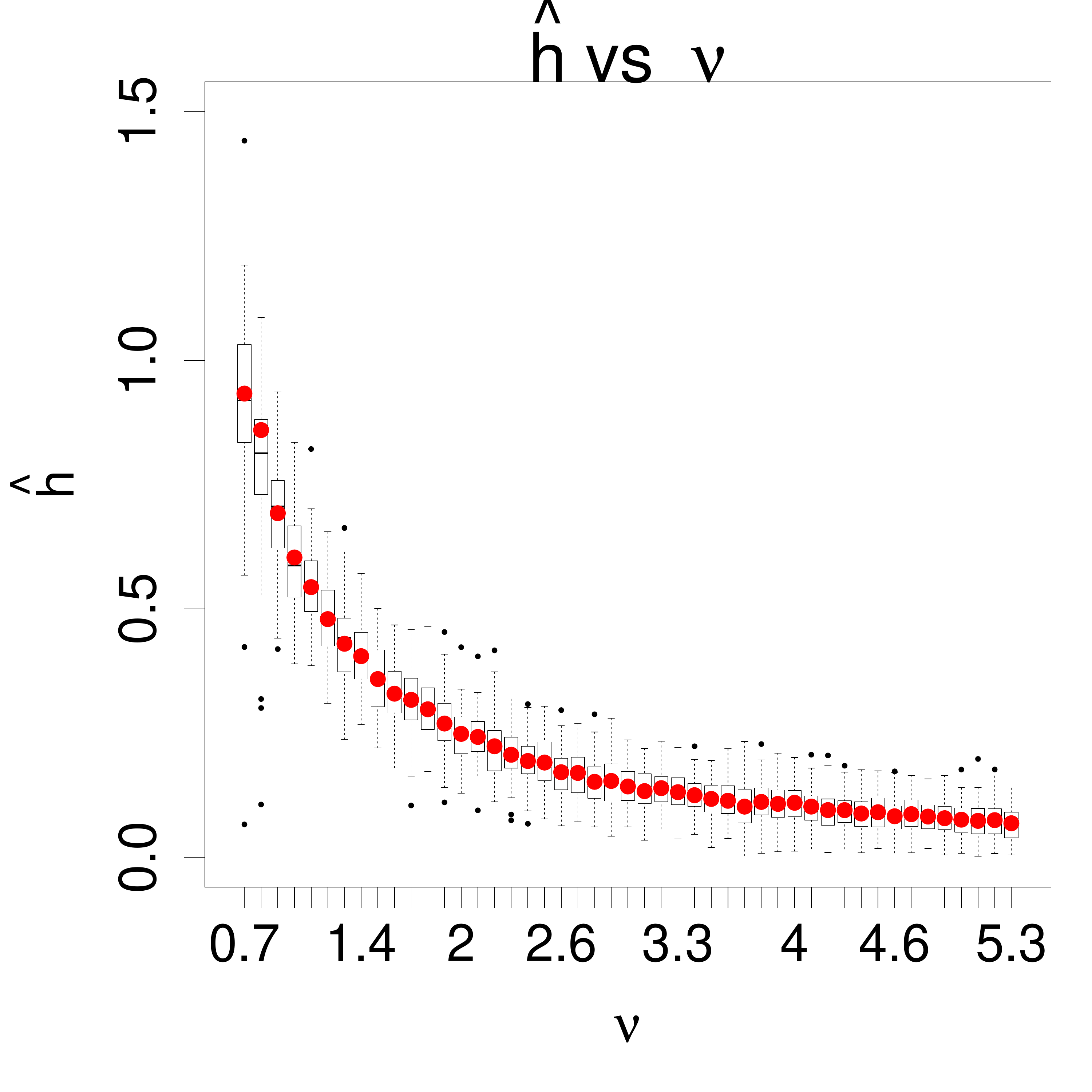}
  \caption{Boxplots of $\hat{h}$ against the true $\nu$ parameter for data simulated from skew-$t$}
 \label{fig:true_nu_vs_h_est}
\end{subfigure}
\caption{Boxplots of estimated $\nu$ parameter against the true $h$ parameter in (a) and estimated $h$ against the true $\nu$ parameter in (b). The red dots in each plot correspond to the means of the estimates based on $100$ replicates.}
\label{fig:relation_h_nu_1d}
\end{center}
\end{figure}

The $h$ parameter of the $\mathcal{SNTH}$ distribution is the counterpart of the $\nu$ parameter of the skew-$t$ distribution since these two parameters primarily control the tail-thickness in their respective distributions. The relationship between $h$ and $\nu$ is studied here using two simulation experiments. In the first experiment we simulate $500$ realizations from $\mathcal{SNTH}_1(0,1,1,1.5,h)$, where $h$ varies in the interval $[0.02,1]$. We fit the skew-$t$ distribution to the simulated $\mathcal{SNTH}$ data for varying $h$ with the R \citep{R-Core-Team:2022aa} package \texttt{sn} \citep{2015.AA.R} and note the estimate of $\nu$. For each $h$, we repeat this experiment $100$ times and present the boxplots of $\nu$ estimates as a function of $h$ in Figure \ref{fig:true_h_vs_nu_est}. Moreover, the estimates' means are indicated by the red dots. Similar experiment results are provided in Figure \ref{fig:true_nu_vs_h_est}, where we present the boxplots of the $100$ estimates of $h$ obtained by fitting the  $\mathcal{SNTH}$ distribution to $100$ replicates of size $500$ from the skew-$t$ distribution with location, scale, and skewness parameter as $0$, $1$, and $1.5$, with varying degrees of freedom $\nu \in [0.7,5.3]$. From the two boxplots in Figure \ref{fig:relation_h_nu_1d}, we can see  how the two tail-thickness parameters of the $\mathcal{SNTH}$ and the skew-$t$ are related. As $\nu$ in the skew-$t$ distribution increases, the kurtosis decreases and that corresponds to the decrease in $h$ in the $\mathcal{SNTH}$ distribution and vice-versa.

A similar experiment, done in the bivariate case, yields some interesting results. In this experiment, we simulate $500$ realizations from $\mathcal{SNTH}_2 \left( \begin{psmallmatrix}0\\0\end{psmallmatrix},\textbf{I}_2,\begin{psmallmatrix}1 & 0.3\\0.3 & 1\end{psmallmatrix},\begin{psmallmatrix} -1.5\\2\end{psmallmatrix},\begin{psmallmatrix} h_1 \\ h_2\end{psmallmatrix} \right)$
with varying $h_2 \in [0.01,1]$, for fixed $h_1 \in \{0.2,0.4,0.6,0.8,1\}$. We fit a bivariate skew-$t$ distribution to the $\mathcal{SNTH}$ observation and note the estimate of $\nu$. Based on $100$ replicates, we plot the median of $\hat \nu$s against $h_2$ for different $h_1$ in Figure \ref{fig:true_h2_vs_est_nu}. Moreover, we smooth the curve using local polynomial fitting. From this plot we see that a particular $\hat{\nu}$ can be obtained for different pairs of $h_1$ and $h_2$. To emphasize that, we have highlighted the line $\hat{\nu} = 1$ which cuts all the curves in the plot. From here we conclude that the skew-$t$ distribution is not suitable for scenarios when there is a great disparity between marginal kurtosis. When $h_2$ is very small and $h_1$ is large, the skew-$t$ model puts more emphasis on $h_1$ and the overall estimate of $\nu$ in that case becomes small, which corresponds to heavier tail in the fitted distribution. As $h_2$ increases, the true distribution becomes more heavy-tailed but the fitted distribution becomes less heavy-tailed.
\begin{figure}[t!]
\begin{center}
\centering
  \includegraphics[width=0.8\linewidth]{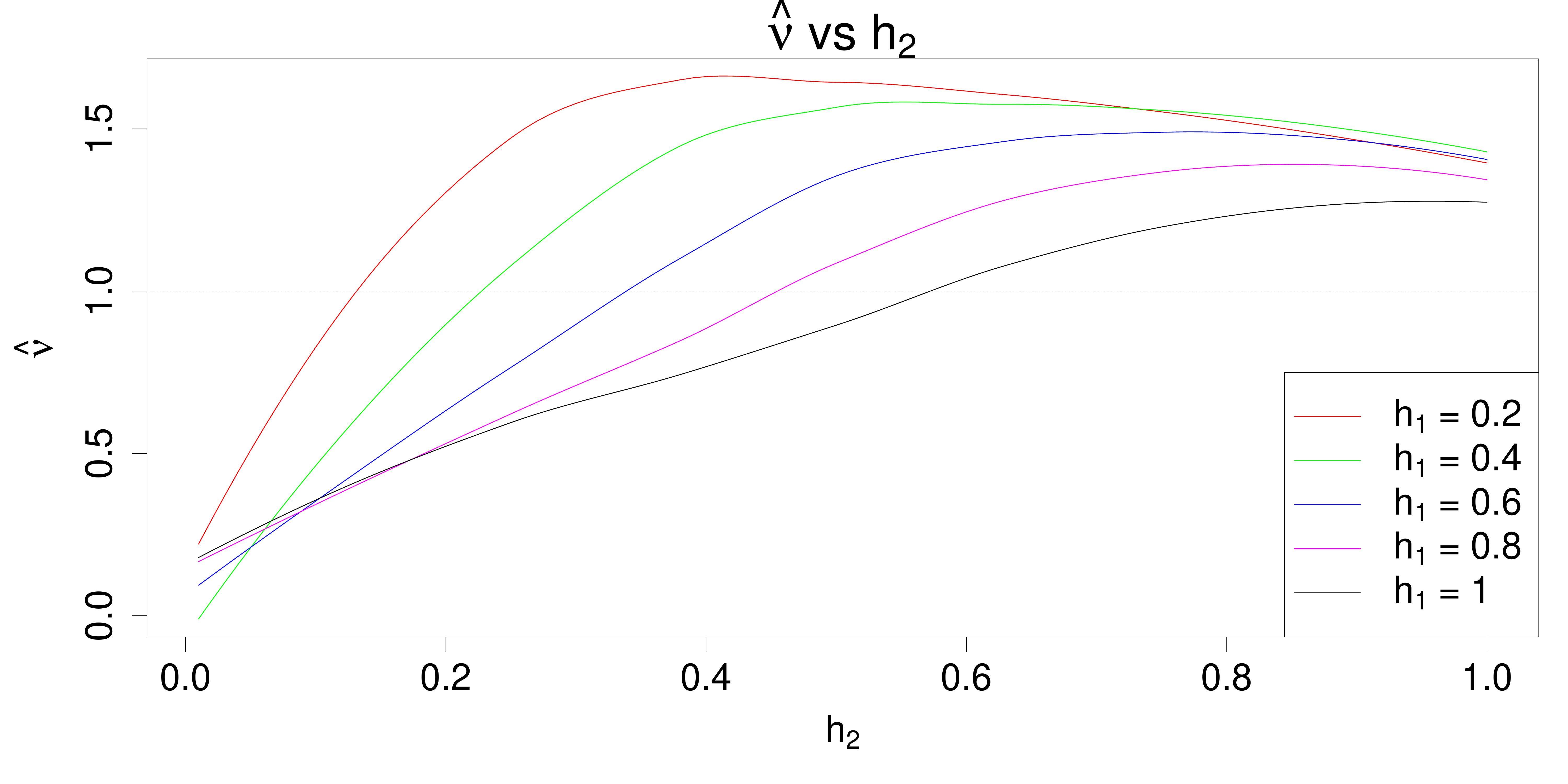}
\caption{Curves obtained by smoothing the median of $\hat{\nu}$s from fitted bivariate skew-$t$ based on $100$ replicates from $\mathcal{SNTH}_2$ as a function of true $h_2$ for different values of $h_1$.}
\label{fig:true_h2_vs_est_nu}
\end{center}
\end{figure}

\subsection{Conditional Distribution of $\mathcal{SNTH}$}

Before deriving the conditional distribution of the $\mathcal{SNTH}$ family, we first discuss the result about the conditional distribution of the $\mathcal{SN}$ distribution. To do that, we need to revisit the family of the extended skew-normal distribution \citep{2001.C.J.A.K.S.Financialmodelling, 2000.B.A.R.B.S, 2003.A.C.A.A.E.S.SJS, 2010.R.B.A.V.M.G.G.M} but with the $\bm \Psi$-$\bm \eta$ parameterization, similar to the definition of the $\mathcal{SN}$ distribution in Section~\ref{SNdistrib}. A $p$-variate random vector $\bm Y$ has an extended skew-normal distribution if its pdf is
\begin{equation}\label{eq:esn_mpdf}
    f_{\bm Y}(\bm y) = \dfrac{1}{\Phi(\tau)} \phi_p(\bm y; \bm \xi + \tau \bm \eta, \bm \Psi + \bm \eta \bm \eta^\top) \Phi \left\{ \dfrac{\tau + \bm \eta^\top \bm \Psi^{-1} (\bm y - \bm \xi)}{\sqrt{1+\bm \eta^\top \bm \Psi^{-1} \bm \eta}} \right\},\quad \bm y \in \mathbb{R}^p,
\end{equation}
where $\bm \xi \in \mathbb{R}^p$ is the location parameter, $\bm \Psi \in \mathbb{R}^{p \times p}$ is the symmetric positive definite scale matrix, $\bm \eta \in \mathbb{R}^p$ is the skewness parameter, and $\tau \in \mathbb{R}$ is the extension parameter. We denote $\bm Y \sim \mathcal{ESN}_p (\bm \xi,\bm \Psi,\bm \eta,\tau)$. From the pdf of the $\mathcal{ESN}$ distribution in Equation \eqref{eq:esn_mpdf} we have, when this extension parameter $\tau = 0$, that the $\mathcal{ESN}$ distribution reduces to the $\mathcal{SN}$ distribution. Like the $\mathcal{SN}$ distribution, a random vector $\bm Y \sim \mathcal{ESN}_p (\bm \xi,\bm \Psi,\bm \eta,\tau)$ also has a concise stochastic representation  
\begin{equation}\label{eq:esn_stochastic}
    \bm Y = \bm \xi + \tau \bm \eta + \bm \eta U + \bm W,
\end{equation}
where $U \,{\buildrel d \over =}\, (Z| Z+ \tau >0)$, $Z \sim \mathcal{N}(0,1)$, $\bm W \sim \mathcal{N}_p(\bm 0 ,\bm \Psi)$, and $Z$ and $\bm W$ are independently distributed. The last statement is directly obtained from Proposition 1 of \cite{2010.R.B.A.V.M.G.G.M} (see their Equation (10) with $\nu \rightarrow \infty$). As a consequence of this stochastic representation, the marginals of the $\mathcal{ESN}$ distribution also remain in the same family and the parameters of the marginal distribution are just the corresponding marginal parameters, similar to the $\mathcal{SN}$ distribution. We need this definition of the $\mathcal{ESN}$ distribution because the conditionals of the $\mathcal{SN}$ family belongs to the $\mathcal{ESN}$ family.

Let $\bm Y \sim \mathcal{SN}_p(\bm \xi,\bm \Psi,\bm \eta)$, and consider the partition of $\bm Y = (\bm Y_1^\top, \bm Y_2^\top)^\top$ with $\bm Y_i$ of size $p_i$ ($i=1,2$) and such that $p_1+p_2 = p$, with corresponding partitions of the parameters in blocks of matching sizes. Then the conditional distribution of $\bm Y_1$ given $\bm Y_2 = \bm y_2$, $\bm y_2 \in \mathbb{R}^{p_2}$, is 
\begin{equation}\label{eq:sn_conditional}
(\bm Y_1|\bm Y_2 = \bm y_2) \sim \mathcal{ESN}_{p_1} (\bm \xi_{1.2},\bar{\bm \Psi}_{11.2}, \bar{\bm \eta}_{1.2},\bar{\tau}_{1.2}),
\end{equation}
where $\bm \xi_{1.2} = \bm \xi_1 + \bm \Psi_{12} \bm \Psi_{22}^{-1} (\bm y_2-  \bm \xi_2)$, $\bm \Psi_{11.2} = \bm \Psi_{11} - \bm \Psi_{12} \bm \Psi_{22}^{-1} \bm \Psi_{21}$, $\bm \eta_{1.2} = \bm \eta_1 - \bm \Psi_{12} \bm \Psi_{22}^{-1} \bm \eta_2$,
$$\bar{\bm \eta}_{1.2} = \dfrac{{\bm \eta}_{1.2}}{\sqrt{1+ {\bm \eta}_{2}^\top \bm \Psi_{22}^{-1} {\bm \eta}_{2}}},\text{~and~} \bar{\tau}_{1.2} = \dfrac{\bm \eta_2^\top \bm \Psi_{22}^{-1} (\bm y_2 - \bm \xi_2)}{\sqrt{1+ {\bm \eta}_{2}^\top \bm \Psi_{22}^{-1} {\bm \eta}_{2}}}.$$
This result can be verified by the fact that the conditional distribution of the $\mathcal{ASN}$ family belongs to the extended skew-normal distribution proposed by  \cite{2010.R.B.A.V.M.G.G.M} (see Section 5.3.2 in \cite{2013.A.A.A.C.CUP}) and by reparameterizing to the $\bm \Psi$-$\bm \eta$ parameterization.

In the next proposition we derive the conditional distribution of the $\mathcal{SNTH}$ family. We show that the conditional distributions of the $\mathcal{SN}$ family and the $\mathcal{SNTH}$ family are related.
\begin{proposition}\label{prop:SNTH_conditional}
Let $\bm Y \sim \mathcal{SNTH}_p(\bm \xi,\bm \omega,\bar{\bm \Psi},\bm \eta,\bm h)$, and consider the partition of $\bm Y = (\bm Y_1^\top, \bm Y_2^\top)^\top$ with $\bm Y_i$ of size $p_i$ ($i=1,2$) and such that $p_1+p_2 = p$, with corresponding partitions of the parameters in blocks of matching sizes. Then the conditional distribution of $\bm Y_1$ given $\bm Y_2 = \bm y_2$ is 
$$(\bm Y_1|\bm Y_2 = \bm y_2) \,{\buildrel d \over =}\, \bm \tau_{\bm h_1}(\bm Y_0),\quad \bm Y_0  \sim \mathcal{ESN}_{p_1} (\bm \xi_{1.2},\bar{\bm \Psi}_{11.2}, \bar{\bm \eta}_{1.2},\bar{\tau}_{1.2}),$$
where $\bm \xi_{1.2} = \bar{\bm \Psi}_{12} \bar{\bm \Psi}_{22}^{-1} \bm g_2 (\bm y_2)$, $\bar{\bm \Psi}_{11.2} = \bar{\bm \Psi}_{11} - \bar{\bm \Psi}_{12} \bar{\bm \Psi}_{22}^{-1} \bar{\bm \Psi}_{21}$,
$\bm \tau_{\bm h_1} (\cdot)$ is the same as in Equation \eqref{eq:multivariate_Tukey-h_transformation}, $\bm g(\bm y)$ is the same as in Equation \eqref{eq:snth-density},
$\bm g(\bm y) = \{\bm g_1(\bm y_1),\bm g_2(\bm y_2)\}^\top$ with
$\bm g_1(\bm y_1) = \{g_{1}(y_{1}),\ldots,g_{p_1}(y_{p_1})\}^\top$ and $\bm g_2(\bm y_2) = \{g_{p_1+1}(y_{p_1+1}),\ldots,g_{p}(y_p)\}^\top$, and 
$$\bar{\bm \eta}_{1.2} = \dfrac{\bm \eta_1 - \bar{\bm \Psi}_{12} \bar{\bm \Psi}_{22}^{-1} \bm \eta_2}{\sqrt{1+\bm \eta_2^\top \bar{\bm \Psi}_{22}^{-1} \bm \eta_2}}, \quad \bar{\tau}_{1.2} = \dfrac{\bm \eta_2 ^\top \bar{\bm \Psi}_{22}^{-1} \bm g_2 (\bm y_2)}{\sqrt{1+\bm \eta_2^\top \bar{\bm \Psi}_{22}^{-1} \bm \eta_2}}.$$
\end{proposition}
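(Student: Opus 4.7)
The plan is to reduce the problem to the known conditional distribution of the $\mathcal{SN}$ family recalled in Equation \eqref{eq:sn_conditional}, exploiting the fact that the multivariate Tukey-$h$ transformation acts component-wise and is invertible. By the definition of the $\mathcal{SNTH}$ family I would start from the stochastic representation $\bm Y = \bm \xi + \bm \omega \bm \tau_{\bm h}(\bm Z)$ with $\bm Z \sim \mathcal{SN}_p(\bm 0, \bar{\bm \Psi}, \bm \eta)$, and partition $\bm Z = (\bm Z_1^\top, \bm Z_2^\top)^\top$ conformably with $\bm Y$, so that $\bm Y_i = \bm \xi_i + \bm \omega_{ii} \bm \tau_{\bm h_i}(\bm Z_i)$ for $i=1,2$. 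From Proposition \ref{prop:marginal_SNTH}, $\bm Z_i \sim \mathcal{SN}_{p_i}(\bm 0, \bar{\bm \Psi}_{ii}, \bm \eta_i)$, and the joint partitioned $\mathcal{SN}$ structure of $\bm Z$ is available directly.

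The first key step is to observe that, because $\bm \omega$ is diagonal and each scalar Tukey-$h$ map $\tau_{h_i}$ is one-to-one for $h_i \geq 0$ with explicit inverse given in Equation \eqref{eq:inv_tukey_h}, the event $\{\bm Y_2 = \bm y_2\}$ is identical to the event $\{\bm Z_2 = \bm g_2(\bm y_2)\}$, where the components of $\bm g_2$ are precisely the $\tau_{h_i}^{-1}\{(y_i - \xi_i)/\omega_{ii}\}$ appearing in $\bm g(\bm y)$ of Equation \eqref{eq:snth-density}. This replaces conditioning on $\bm Y_2$ by the more tractable conditioning on $\bm Z_2$ inside the underlying skew-normal vector, and it is the step that makes the whole argument work cleanly.

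Next I would apply Equation \eqref{eq:sn_conditional} to $\bm Z$. Since $\bm Z$ has location $\bm 0$, scale blocks $\bar{\bm \Psi}_{ij}$, and skewness blocks $\bm \eta_i$, substituting $\bm \xi \to \bm 0$, $\bm \Psi \to \bar{\bm \Psi}$, and the conditioning value $\bm g_2(\bm y_2)$ in place of $\bm y_2$ in the $\mathcal{SN}$-to-$\mathcal{ESN}$ conditional formula yields precisely the parameters $\bm \xi_{1.2} = \bar{\bm \Psi}_{12}\bar{\bm \Psi}_{22}^{-1} \bm g_2(\bm y_2)$, $\bar{\bm \Psi}_{11.2}$, $\bar{\bm \eta}_{1.2}$, and $\bar{\tau}_{1.2}$ stated in the proposition. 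Hence $(\bm Z_1 \mid \bm Z_2 = \bm g_2(\bm y_2)) \sim \mathcal{ESN}_{p_1}(\bm \xi_{1.2}, \bar{\bm \Psi}_{11.2}, \bar{\bm \eta}_{1.2}, \bar{\tau}_{1.2})$, which is the distribution assigned to $\bm Y_0$ in the statement.

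The conclusion then follows by transforming back through $\bm Y_1 = \bm \xi_1 + \bm \omega_{11} \bm \tau_{\bm h_1}(\bm Z_1)$, giving that $(\bm Y_1 \mid \bm Y_2 = \bm y_2)$ equals in distribution the component-wise Tukey-$\bm h_1$ transformation of $\bm Y_0$, up to the outer location-scale action of $\bm \xi_1$ and $\bm \omega_{11}$ inherited from the definition of the $\mathcal{SNTH}$. The only genuine obstacle, and it is a routine one, is the parameter bookkeeping: carefully matching the substitutions $\bm \xi \to \bm 0$ and $\bm \Psi \to \bar{\bm \Psi}$ in Equation \eqref{eq:sn_conditional} to confirm that $\bar{\bm \eta}_{1.2}$ and $\bar{\tau}_{1.2}$ take exactly the forms stated. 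Once this matching is verified, the rest of the argument is a direct composition of the invertibility of $\bm \tau_{\bm h}$ with the already-established $\mathcal{SN}$ conditional result.
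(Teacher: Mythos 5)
Your proposal is correct and follows essentially the same route as the paper: the paper verifies the identification $\{\bm Y_2=\bm y_2\}\leftrightarrow\{\bm Z_2=\bm g_2(\bm y_2)\}$ by explicitly forming the density ratio $f_{\bm Y}/f_{\bm Y_2}$ and recognizing it as the density of $\bm\tau_{\bm h_1}(\bm Y_0)$ with $\bm Y_0\overset{d}{=}[\bm Z_1\mid\bm Z_2=\bm g_2(\bm y_2)]$, then invokes Equation \eqref{eq:sn_conditional} exactly as you do. Your closing remark about the outer location-scale action of $\bm\xi_1$ and $\bm\omega_{11}$ is in fact more careful than the paper's own statement, which absorbs (or omits) that factor in writing $(\bm Y_1\mid\bm Y_2=\bm y_2)\overset{d}{=}\bm\tau_{\bm h_1}(\bm Y_0)$.
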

\begin{proof}
 From Proposition \ref{prop:marginal_SNTH}, the marginal pdf of $\bm Y_2$ is
\begin{equation*}\label{eq:snth-density_Y_2}
\begin{split}
f_{\bm Y_2} (\bm y_2) &= 2 \phi_{p_2} \{ \bm g_2(\bm y_2) ; \bm 0, (\bar{ \bm \Psi}_{22} + \bm \eta_2 \bm \eta_2^\top) \} \Phi \Bigg \{ \dfrac{\bm \eta_2^\top \bar{ \bm \Psi}_{22}^{-1} \bm g_2(\bm y_2)}{\sqrt{1+\bm \eta_2^\top \bar{ \bm \Psi}_{22}^{-1} \bm \eta_2}} \Bigg \}\\ &~~~~\times \prod_{i = p_1+1}^{p} \Bigg \{ \dfrac{1}{\omega_{ii}} \left( \dfrac{\exp[\frac{1}{2} W_0 \{ h_i (\frac{y_i - \xi_i}{\omega_{ii}})^2 \}]}{h_i (\frac{y_i - \xi_i}{\omega_{ii}})^2 + \exp[ W_0 \{ h_i (\frac{y_i - \xi_i}{\omega_{ii}})^2 \}]} \right) \Bigg \}, \quad \bm y_2 \in \mathbb{R}^{p_2}.
\end{split}
\end{equation*}
Hence, the conditional pdf of $\bm Y_1|\bm Y_2 = \bm y_2$ is 
\begin{equation*}\label{eq:conditional_snth}
\begin{split}
    f_{\bm Y_1| \bm Y_2 = \bm y_2}(\bm y_1) = \dfrac{f_{\bm Y}(\bm y)}{f_{\bm Y_2}(\bm y_2)} &= \dfrac{\phi_{p} \{ \bm g (\bm y) ; \bm 0, ( \bar{\bm \Psi} + \bm \eta \bm \eta^\top) \} \Phi \Bigg \{ \dfrac{\bm \eta^\top \bar{\bm \Psi}^{-1} \bm g(\bm y)}{\sqrt{1+\bm \eta^\top \bar{\bm \Psi}^{-1} \bm \eta}} \Bigg \}}{\phi_{p_2} \{ \bm g_2 (\bm y_2) ; \bm 0, ( \bar{\bm \Psi}_{22} + \bm \eta_2 \bm \eta_2^\top) \} \Phi \Bigg \{ \dfrac{\bm \eta_2^\top \bar{\bm \Psi}_{22}^{-1} \bm g_2(\bm y_2)}{\sqrt{1+\bm \eta_2^\top \bar{\bm \Psi}_{22}^{-1} \bm \eta_2}} \Bigg \}} \\
    &~~~~ \times \prod_{i = 1}^{p_1} \Bigg \{ \dfrac{1}{\omega_{ii}} \left( \dfrac{\exp[\frac{1}{2} W_0 \{ h_i (\frac{y_i - \xi_i}{\omega_{ii}})^2 \}]}{h_i (\frac{y_i - \xi_i}{\omega_{ii}})^2 + \exp[ W_0 \{ h_i (\frac{y_i - \xi_i}{\omega_{ii}})^2 \}]} \right) \Bigg \}, \quad \bm y_1 \in \mathbb{R}^p_1.
\end{split}
\end{equation*}
From the pdf given above, we can see that it is the density function of $\bm \tau _{\bm h_1} (\bm Y_0)$, where $\bm Y_0 \overset{\text{d}}{=} [\bm Z_1 | \{\bm Z_2 = \bm g_2 (\bm y_2)\}]$ and $\bm Z = (\bm Z_1^\top,\bm Z_2^\top)^\top \sim \mathcal{SN}_p (\bm 0, \bar{\bm \Psi}, \bm \eta)$. Hence, from Equation \eqref{eq:sn_conditional}, we have $\bm Y_{0} \sim \mathcal{ESN}_{p_1} (\bm \xi_{1.2},\bar{\bm \Psi}_{11.2}, \bar{\bm \eta}_{1.2},\bar{\tau}_{1.2}).$\qed
\end{proof}

Since the conditional distribution of the $\mathcal{SNTH}$ family can be viewed as a component-wise Tukey-$h$ transformation on the $\mathcal{ESN}$, closed-form expressions of its mean vector and variance-covariance matrix can be derived. The conditional mean and the variance-covariance matrix will be helpful for using the $\mathcal{SNTH}$ model for various formal statistical purposes such as regression modeling, time-series analysis, and spatial modeling. In the next three propositions we provide the mathematical expressions of the elements of the conditional mean vector and the conditional variance-covariance matrix. The proofs of Proposition \ref{prop:var_condition_snth} and \ref{prop:cov_condition_snth} below are very similar to the proof of Proposition \ref{prop:expec_condition_snth}, hence they are omitted in the main article and are given in Sections S1 and S2 in the supplementary material.
\begin{proposition}\label{prop:expec_condition_snth}
Let $\bm Y_0$ be defined as in Proposition \ref{prop:SNTH_conditional}. The mean vector $\bm \mu = \mathbb{E}\{\bm \tau_{\bm h_1} (\bm Y_0)\}$ is:
\begin{align*}
    \mu_i &= \dfrac{1}{\sqrt{1-(\bar{ \Psi}_{11.2_{ii}}+{\bar{ \eta}_{1.2_i}}^2)h_i}}
    \exp \left \{ \dfrac{(\xi_{1.2_i}+\bar{\tau}_{1.2} {\bar{ \eta}_{1.2_i}})^2 h_i}{2(1-(\bar{ \Psi}_{11.2_{ii}}+{\bar{ \eta}_{1.2_i}}^2)h_i)} \right \}   \dfrac{\Phi(\Tilde{\tau}_i)}{\Phi(\bar{\tau}_{1.2})} \left \{ \tilde{\xi}_i + \tilde{\omega}_i \tilde{\delta}_i \dfrac{\phi(\tilde{\tau}_i)}{\Phi(\tilde{\tau}_i)} \right \}, 
\end{align*}
where $\bm \xi_{1.2} = (\xi_{1.2_1},\ldots,\xi_{1.2_{p_1}})^\top$, $\text{diag}(\bar{\bm \Psi}_{11.2}) = (\bar{ \Psi}_{11.2_{11}},\ldots,\bar{ \Psi}_{11.2_{p_1 p_1}})^\top$, $\bar{\bm \eta}_{1.2} = (\bar{ \eta}_{1.2_1},\ldots,\bar{ \eta}_{1.2_{p_1}})^\top$,
 $\Tilde{\xi}_i = \frac{\xi_{1.2_i} + \bar{\tau}_{1.2} {\bar{ \eta}_{1.2_i}}}{1-(\bar{ \Psi}_{11.2_{ii}} + {\bar{ \eta}_{1.2_i}}^2)h_i}$, 
$\Tilde{\omega}_i = \sqrt{\frac{\bar{ \Psi}_{11.2_{ii}} +{\bar{ \eta}_{1.2_i}}^2}{1-(\bar{ \Psi}_{11.2_{ii}} + {\bar{ \eta}_{1.2_i}}^2)h_i}}$, 
$\Tilde{\alpha}_i = \frac{{\bar{ \eta}_{1.2_i}}}{\sqrt{\bar{ \Psi}_{11.2_{ii}}}} \frac{1}{\sqrt{1-(\bar{ \Psi}_{11.2_{ii}} + {\bar{ \eta}_{1.2_i}}^2)h_i}}$,\\
$\Tilde{\alpha}_{0_i} = \frac{\bar{\tau}_{1.2} \sqrt{\bar{ \Psi}_{11.2_{ii}}} + \frac{{\bar{ \eta}_{1.2_i}}}{\sqrt{\bar{ \Psi}_{11.2_{ii}}}} \left\{\frac{\bar{\tau}_{1.2} {\bar{ \eta}_{1.2_i}} +  \xi_{1.2_i} (\bar{ \Psi}_{11.2_{ii}} +{\bar{ \eta}_{1.2_i}}^2)h_i}{1-(\bar{ \Psi}_{11.2_{ii}}+{\bar{ \eta}_{1.2_i}}^2)h_i}\right \}}{\sqrt{\bar{ \Psi}_{11.2_{ii}} + {\bar{ \eta}_{1.2_i}}^2}}$, $\Tilde{\delta}_i = \frac{\Tilde{\alpha}_i}{\sqrt{1+\Tilde{\alpha}_i}^2}$,  $\Tilde{\tau}_i = \frac{\Tilde{\alpha}_{0_i}}{\sqrt{1+\Tilde{\alpha}_i^2}}$, and $h_i < \frac{1}{\bar{ \Psi}_{11.2_{ii}} + {\bar{ \eta}_{1.2_i}}^2}$, $i = 1,\ldots,p_1$.
\end{proposition}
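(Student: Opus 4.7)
The plan is to reduce the multivariate expectation to a univariate one and then exploit a classical Gaussian-times-$\Phi$ identity. Since the marginals of the $\mathcal{ESN}$ family remain in the same family with the corresponding marginal parameters (as stated in the excerpt via the stochastic representation (\ref{eq:esn_stochastic})), the $i$-th component of $\bm Y_0$ satisfies $Y_{0,i}\sim\mathcal{ESN}_1(\xi_{1.2_i},\bar\Psi_{11.2_{ii}},\bar\eta_{1.2_i},\bar\tau_{1.2})$. Consequently, $\mu_i=\mathbb{E}\{\tau_{h_i}(Y_{0,i})\}$ is a one-dimensional integral:
\begin{equation*}
\mu_i=\frac{1}{\Phi(\bar\tau_{1.2})}\int_{\mathbb{R}} x\exp(h_i x^2/2)\,\phi(x;\mu^\star,\sigma^{\star 2})\,\Phi(\alpha_0+\alpha_1 x)\,\mathrm{d}x,
\end{equation*}
with $\mu^\star=\xi_{1.2_i}+\bar\tau_{1.2}\bar\eta_{1.2_i}$, $\sigma^{\star 2}=\bar\Psi_{11.2_{ii}}+\bar\eta_{1.2_i}^2$, and $(\alpha_0,\alpha_1)$ read off from the ESN density (\ref{eq:esn_mpdf}).

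First, I would absorb the $\exp(h_ix^2/2)$ factor into the Gaussian density by completing the square. Exactly as in the proof of Proposition~\ref{prop:exp_var_cov}, one obtains
\begin{equation*}
\exp(h_i x^2/2)\,\phi(x;\mu^\star,\sigma^{\star 2}) = \frac{1}{\sqrt{1-\sigma^{\star 2}h_i}}\exp\!\left\{\frac{(\mu^\star)^2 h_i}{2(1-\sigma^{\star 2}h_i)}\right\}\phi(x;\tilde\xi_i,\tilde\omega_i^2),
\end{equation*}
valid precisely under the stated constraint $h_i<1/(\bar\Psi_{11.2_{ii}}+\bar\eta_{1.2_i}^2)$, and with $\tilde\xi_i$, $\tilde\omega_i$ matching the proposition.

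Next, I would evaluate the remaining integral $\int x\,\phi(x;\tilde\xi_i,\tilde\omega_i^2)\,\Phi(\alpha_0+\alpha_1 x)\,\mathrm{d}x$ using the standard identity
\begin{equation*}
\mathbb{E}[X\,\Phi(a+bX)] = \mu\,\Phi\!\left(\tfrac{a+b\mu}{\sqrt{1+b^2\sigma^2}}\right) + \tfrac{b\sigma^2}{\sqrt{1+b^2\sigma^2}}\,\phi\!\left(\tfrac{a+b\mu}{\sqrt{1+b^2\sigma^2}}\right),\qquad X\sim\mathcal{N}(\mu,\sigma^2),
\end{equation*}
which one can view as the mean of an $\mathcal{ASN}_1(\tilde\xi_i,\tilde\omega_i^2,\tilde\alpha_i)$ with extension $\tilde\tau_i$ (cf.\ Chapter~2 of \cite{2013.A.A.A.C.CUP}). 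With $b=\alpha_1$, $\mu=\tilde\xi_i$, $\sigma=\tilde\omega_i$, the resulting slope parameter is $\tilde\alpha_i=\alpha_1\tilde\omega_i$ and the shift is $\tilde\alpha_{0_i}=(\alpha_0+\alpha_1\tilde\xi_i)/\sqrt{1+\alpha_1^2\tilde\omega_i^2}\cdot\sqrt{1+\tilde\alpha_i^2}/\sqrt{1+\tilde\alpha_i^2}$; algebraic rearrangement then yields the exact expressions for $\tilde\alpha_i$, $\tilde\alpha_{0_i}$, and $\tilde\tau_i=\tilde\alpha_{0_i}/\sqrt{1+\tilde\alpha_i^2}$ given in the statement.

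Finally, I would combine the prefactor $\Phi(\bar\tau_{1.2})^{-1}$, the completing-the-square constant, and the result of the Gaussian-times-$\Phi$ integral, rewriting the coefficient $\alpha_1\tilde\omega_i^2/\sqrt{1+\alpha_1^2\tilde\omega_i^2}$ as $\tilde\omega_i\tilde\delta_i$. The bookkeeping to show that the particular $\alpha_0$ coming out of the ESN density produces the stated $\tilde\alpha_{0_i}$ (note the cross term $\bar\tau_{1.2}\bar\eta_{1.2_i}$ that appears both in $\mu^\star$ and in $\alpha_0$) is the only genuinely fiddly step; everything else reduces to the two identities above and matches the proposition's expression for $\mu_i$ term by term.
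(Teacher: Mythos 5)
Your proposal is correct and follows essentially the same route as the paper's proof: reduce to the univariate $\mathcal{ESN}_1$ marginal of $\bm Y_0$, absorb $\exp(h_i x^2/2)$ into the Gaussian factor by completing the square under the constraint $h_i<1/(\bar\Psi_{11.2_{ii}}+\bar\eta_{1.2_i}^2)$, and evaluate the remaining Gaussian-times-$\Phi$ integral. The only (immaterial) difference is that you derive that last step from the elementary identity for $\mathbb{E}\{X\,\Phi(a+bX)\}$, whereas the paper reads it off as the mean of a univariate extended skew-normal and cites Section 5.3.4 of \cite{2013.A.A.A.C.CUP}; the parameter identifications $\tilde\alpha_i=\alpha_1\tilde\omega_i$ and $\tilde\alpha_{0_i}=\alpha_0+\alpha_1\tilde\xi_i$ you describe do reproduce the stated expressions exactly.
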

\begin{proof}
From Equation \eqref{eq:esn_stochastic} it can be established that
 $Y_{0_i} \sim \mathcal{ESN}_{1} (\xi_{1.2_i},\bar{ \Psi}_{11.2_{ii}}, {\bar{ \eta}_{1.2_i}},\bar{\tau}_{1.2}),\quad i = 1,\ldots,p_1$. 
Then, 
\begin{align*}
    \mu_i &=\mathbb{E}(Y_{0_i}) = \int_{\mathbb{R}} x \exp(h_i x^2/2) \dfrac{1}{\Phi(\bar{\tau}_{1.2})} \phi(x;\xi_{1.2_i}+\bar{\tau}_{1.2} {\bar{ \eta}_{1.2_i}}, \bar{ \Psi}_{11.2_{ii}}+{\bar{ \eta}_{1.2_i}}^2)\\
    &~~~~~~~~~~~~~~~~~~~~~~~~~~\times \Phi \left\{ \dfrac{\bar{\tau}_{1.2} + {\bar{ \eta}_{1.2_i}} (x - \xi_{1.2_i})/\bar{ \Psi}_{11.2_{ii}}}{\sqrt{1+{\bar{ \eta}_{1.2_i}}^2/\bar{ \Psi}_{11.2_{ii}}}} \right\} \text{d} x\\
    &= \exp \left [ \dfrac{(\xi_{1.2_i}+\bar{\tau}_{1.2} \bar{ \eta}_{1.2_i})^2 h_i}{2 \{1-(\bar{ \Psi}_{11.2_{ii}}+\bar{ \eta}_{1.2_i}^2)h_i\}} \right ] \dfrac{1}{\Phi(\bar{\tau}_{1.2})} \dfrac{1}{\sqrt{2\pi} \sqrt{\bar{ \Psi}_{11.2_{ii}}+\bar{ \eta}_{1.2_i}^2}} \\&~~~~~\times \int_{\mathbb{R}}  x \exp \left [ -\dfrac{1}{2} \dfrac{\left \{x - \frac{\xi_{1.2_i} +\bar{\tau}_{1.2} \bar{ \eta}_{1.2_i}}{1-(\bar{ \Psi}_{11.2_{ii}}+\bar{ \eta}_{1.2_i}^2)h_i} \right \}^2}{\left \{ \frac{\bar{ \Psi}_{11.2_{ii}}+\bar{ \eta}_{1.2_i}^2}{1-(\bar{ \Psi}_{11.2_{ii}}+\bar{ \eta}_{1.2_i}^2)h_i} \right\} }  \right ]
    \Phi \left\{ \dfrac{\bar{\tau}_{1.2} + \bar{ \eta}_{1.2_i} (x - \xi_{1.2_i})/\bar{ \Psi}_{11.2_{ii}}}{\sqrt{1+\bar{ \eta}_{1.2_i}^2/\bar{ \Psi}_{11.2_{ii}}}} \right\} \text{d} x\\
    &= \exp \left \{ \dfrac{(\xi_{1.2_i}+\bar{\tau}_{1.2} {\bar{ \eta}_{1.2_i}})^2 h_i}{2(1-(\bar{ \Psi}_{11.2_{ii}}+{\bar{ \eta}_{1.2_i}}^2)h_i)} \right \}  \dfrac{1}{\sqrt{1-(\bar{ \Psi}_{11.2_{ii}}+{\bar{ \eta}_{1.2_i}}^2)h_i}} \dfrac{\Phi(\Tilde{\tau}_i)}{\Phi(\bar{\tau}_{1.2})}\\
    &~~~~~\times \int_{\mathbb{R}} x \dfrac{1}{\Phi(\Tilde{\tau}_i)} \phi(x;\Tilde{\xi}_i,\Tilde{\omega}_i^2) \Phi \{\Tilde{\alpha}_{0_i} + \Tilde{\alpha}_i \Tilde{\omega}_i^{-1} (x -\Tilde{\xi}_i) \} \text{d}x\\
    &= \dfrac{1}{\sqrt{1-(\bar{ \Psi}_{11.2_{ii}}+{\bar{ \eta}_{1.2_i}}^2)h_i}}
    \exp \left \{ \dfrac{(\xi_{1.2_i}+\bar{\tau}_{1.2} {\bar{ \eta}_{1.2_i}})^2 h_i}{2(1-(\bar{ \Psi}_{11.2_{ii}}+{\bar{ \eta}_{1.2_i}}^2)h_i)} \right \}   \dfrac{\Phi(\Tilde{\tau}_i)}{\Phi(\bar{\tau}_{1.2})} \left \{ \tilde{\xi}_i + \tilde{\omega}_i \tilde{\delta}_i \dfrac{\phi(\tilde{\tau}_i)}{\Phi(\tilde{\tau}_i)} \right \}.
\end{align*}
The last step is obtained from the moments of the extended skew-normal distribution from \cite{2013.A.A.A.C.CUP} (see Section 5.3.4). 
\qed
\end{proof}

\begin{proposition}\label{prop:var_condition_snth}
Let $\bm Y_0$ be defined as in Proposition \ref{prop:SNTH_conditional}, and let $\bm \Sigma = (\sigma_{ij}) = \mathbb{V}\text{ar} \{\bm \tau_{\bm h_1}(\bm Y_0)\}$. Then:
\begin{align*}
    \sigma_{ii} &= \dfrac{1}{\sqrt{1-2(\bar{ \Psi}_{11.2_{ii}} +{\bar{ \eta}_{1.2_i}}^2)h_i}} \exp \left \{ \dfrac{(\xi_{1.2_i}+\tau {\bar{ \eta}_{1.2_i}})^2 h_i}{1-2(\bar{ \Psi}_{11.2_{ii}}+{\bar{ \eta}_{1.2_i}}^2)h_i} \right \} \dfrac{\Phi(\Tilde{\tau}_i)}{\Phi(\bar{\tau}_{1.2})}\\
    &~~~~~\times \left \{ \Tilde{\xi}^2_i + \Tilde{\omega}^2_i - \Tilde{\tau}_i \frac{\phi(\Tilde{\tau}_i)}{\Phi(\Tilde{\tau}_i)} \Tilde{\omega}_i^2 \Tilde{\delta}_i^2 + 2 \frac{\phi(\Tilde{\tau}_i)}{\Phi(\Tilde{\tau}_i)} \Tilde{\xi}_i \Tilde{\omega} _i\Tilde{\delta}_i \right \} - \mu_i^2, 
\end{align*}
where $\bm \xi_{1.2} = (\xi_{1.2_1},\ldots,\xi_{1.2_{p_1}})^\top$, $\text{diag}(\bar{\bm \Psi}_{11.2}) = (\bar{ \Psi}_{11.2_{11}},\ldots,\bar{ \Psi}_{11.2_{p_1 p_1}})^\top$, $\bar{\bm \eta}_{1.2} = (\bar{ \eta}_{1.2_1},\ldots,\bar{ \eta}_{1.2_{p_1}})^\top$, $\Tilde{\xi}_i = \frac{\xi_{1.2_i} + \bar{\tau}_{1.2} {\bar{ \eta}_{1.2_i}}}{1-2(\bar{ \Psi}_{11.2_{ii}} + {\bar{ \eta}_{1.2_i}}^2)h_i}$, $\Tilde{\omega}_i = \sqrt{\frac{\bar{ \Psi}_{11.2_{ii}} +{\bar{ \eta}_{1.2_i}}^2}{1-2(\bar{ \Psi}_{11.2_{ii}} + {\bar{ \eta}_{1.2_i}}^2)h_i}}$, $\Tilde{\alpha}_i = \frac{{\bar{ \eta}_{1.2_i}}}{\sqrt{\bar{ \Psi}_{11.2_{ii}}}} \frac{1}{\sqrt{1-2(\bar{ \Psi}_{11.2_{ii}} + {\bar{ \eta}_{1.2_i}}^2)h_i}}$,\\ $\Tilde{\alpha}_{0_i} = \frac{\bar{\tau}_{1.2} \sqrt{\bar{ \Psi}_{11.2_{ii}}} + \frac{{\bar{ \eta}_{1.2_i}}}{\sqrt{\bar{ \Psi}_{11.2_{ii}}}} \left\{\frac{\bar{\tau}_{1.2} {\bar{ \eta}_{1.2_i}} + 2 \xi_{1.2_i} (\bar{ \Psi}_{11.2_{ii}} +{\bar{ \eta}_{1.2_i}}^2)h_i}{1-2(\bar{ \Psi}_{11.2_{ii}}+{\bar{ \eta}_{1.2_i}}^2)h_i}\right \}}{\sqrt{\bar{ \Psi}_{11.2_{ii}} + {\bar{ \eta}_{1.2_i}}^2}}$, $\Tilde{\delta}_i = \frac{\Tilde{\alpha}_i}{\sqrt{1+\Tilde{\alpha}_i}^2}$,  $\Tilde{\tau}_i = \frac{\Tilde{\alpha}_{0_i}}{\sqrt{1+\Tilde{\alpha}_i^2}}$, $\mu_i$ is the same as in Proposition \ref{prop:expec_condition_snth}, and $h_i < \frac{1}{2(\bar{ \Psi}_{11.2_{ii}} + {\bar{ \eta}_{1.2_i}}^2)}$, $i = 1,\ldots,p_1$.
\end{proposition}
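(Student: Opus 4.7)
The plan is to compute $\mathbb{E}[\{\tau_{h_i}(Y_{0_i})\}^{2}]=\mathbb{E}[Y_{0_i}^{2}\exp(h_i Y_{0_i}^{2})]$ and then use $\sigma_{ii}=\mathbb{E}[\{\tau_{h_i}(Y_{0_i})\}^{2}]-\mu_i^{2}$, with $\mu_i$ already supplied by Proposition \ref{prop:expec_condition_snth}. Everything proceeds by mimicking the template used there, only now the extra factor inside the expectation is $\exp(h_i x^{2})$ rather than $\exp(h_i x^{2}/2)$, which is precisely why the threshold condition sharpens to $h_i<\{2(\bar{\Psi}_{11.2_{ii}}+\bar{\eta}_{1.2_i}^{2})\}^{-1}$ and why every occurrence of $h_i$ in the tilde-parameters is replaced by $2h_i$.

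I would first record that the stochastic representation \eqref{eq:esn_stochastic} applied marginally gives $Y_{0_i}\sim\mathcal{ESN}_1(\xi_{1.2_i},\bar{\Psi}_{11.2_{ii}},\bar{\eta}_{1.2_i},\bar{\tau}_{1.2})$, so the expectation can be written as an integral of $x^{2}\exp(h_i x^{2})$ against the density \eqref{eq:esn_mpdf} with $p=1$. Next I would complete the square in the Gaussian exponent: multiplying $\phi(x;\xi_{1.2_i}+\bar{\tau}_{1.2}\bar{\eta}_{1.2_i},\bar{\Psi}_{11.2_{ii}}+\bar{\eta}_{1.2_i}^{2})$ by $\exp(h_i x^{2})$ produces a new Gaussian kernel of mean $\tilde{\xi}_i$ and variance $\tilde{\omega}_i^{2}$ (with the $2h_i$ appearing in their denominators) and pulls out the constant prefactor
\[
\frac{1}{\sqrt{1-2(\bar{\Psi}_{11.2_{ii}}+\bar{\eta}_{1.2_i}^{2})h_i}}\,\exp\!\left\{\frac{(\xi_{1.2_i}+\bar{\tau}_{1.2}\bar{\eta}_{1.2_i})^{2}h_i}{1-2(\bar{\Psi}_{11.2_{ii}}+\bar{\eta}_{1.2_i}^{2})h_i}\right\}.
\]
Then I would rewrite the $\Phi$ factor in the ESN density in terms of the rescaled variable $\tilde{\omega}_i^{-1}(x-\tilde{\xi}_i)$, which forces the shape and extension parameters to become exactly the $\tilde{\alpha}_i$ and $\tilde{\alpha}_{0_i}$ stated in the proposition, at the cost of a normalizing ratio $\Phi(\tilde{\tau}_i)/\Phi(\bar{\tau}_{1.2})$ with $\tilde{\tau}_i=\tilde{\alpha}_{0_i}/\sqrt{1+\tilde{\alpha}_i^{2}}$.

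Once the integrand is identified as $x^{2}$ times the $\mathcal{ESN}_1(\tilde{\xi}_i,\tilde{\omega}_i^{2},\tilde{\alpha}_i,\tilde{\alpha}_{0_i})$ density, the remaining integral is simply $\mathbb{E}(X^{2})$ for $X\sim\mathcal{ESN}_1(\tilde{\xi}_i,\tilde{\omega}_i^{2},\tilde{\alpha}_i,\tilde{\alpha}_{0_i})$. I would invoke the closed-form second-moment formula for the extended skew-normal distribution from Section 5.3.4 of \cite{2013.A.A.A.C.CUP}, namely $\mathbb{E}(X^{2})=\tilde{\xi}_i^{2}+\tilde{\omega}_i^{2}+2\tilde{\xi}_i\tilde{\omega}_i\tilde{\delta}_i\,\phi(\tilde{\tau}_i)/\Phi(\tilde{\tau}_i)-\tilde{\tau}_i\tilde{\omega}_i^{2}\tilde{\delta}_i^{2}\,\phi(\tilde{\tau}_i)/\Phi(\tilde{\tau}_i)$, which matches the bracketed term in the statement. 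Combining with the constants from the completion-of-the-square and subtracting $\mu_i^{2}$ yields the claimed expression.

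The main obstacle, exactly as in Proposition \ref{prop:expec_condition_snth}, is the bookkeeping of the algebraic manipulation that turns the perturbed Gaussian-times-$\Phi$ kernel into a bona fide ESN density with the specific parameters $(\tilde{\xi}_i,\tilde{\omega}_i^{2},\tilde{\alpha}_i,\tilde{\alpha}_{0_i})$; in particular, verifying that the argument of $\Phi$ after the affine change of variable reduces to $\tilde{\alpha}_{0_i}+\tilde{\alpha}_i\tilde{\omega}_i^{-1}(x-\tilde{\xi}_i)$ with the $\tilde{\alpha}_{0_i}$ of the proposition. This computation is lengthy but mechanical and is essentially the one already carried out in Proposition \ref{prop:expec_condition_snth} with $h_i$ replaced by $2h_i$, which is why the authors relegate it to the supplementary material.
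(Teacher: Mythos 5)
Your proposal matches the paper's own proof (given in Section S1 of the supplementary material): the same completion of the square with $\exp(h_i x^2)$ absorbed into the Gaussian kernel, the same identification of the resulting integrand as an $\mathcal{ESN}_1$ density with the tilde parameters and normalizing ratio $\Phi(\tilde{\tau}_i)/\Phi(\bar{\tau}_{1.2})$, and the same appeal to the second-moment formula from Section 5.3.4 of Azzalini and Capitanio (2013) before subtracting $\mu_i^2$. The argument is correct and essentially identical to the paper's.
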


\begin{proposition}\label{prop:cov_condition_snth}
Let $\bm Y_0$ be defined as in Proposition \ref{prop:SNTH_conditional}, and let $\bm \Sigma = (\sigma_{ij}) = \mathbb{V}\text{ar} \{\bm \tau_{\bm h_1}(\bm Y_0)\}$. Then:
\begin{align*}
    \sigma_{ij} &= \dfrac{\sqrt{\det \{ (\bm \Omega_{i,j}^{-1}- \bm H_{i,j})^{-1} \}}}{\sqrt{\det (\bm \Omega_{i,j})}} \exp \left[ -\frac{1}{2}  \{ \tilde{{\bm \mu}}_{i,j} ^\top \bm \Omega_{i,j}^{-1} \tilde{{\bm \mu}}_{i,j} - \tilde{{\bm \mu}}_{i,j}^\top (\bm \Omega_{i,j} - \bm \Omega_{i,j} \bm H_{i,j} \bm \Omega_{i,j})^{-1} \tilde{{\bm \mu}}_{i,j} \} \right ]\\
    &~~~~~\times \dfrac{\Phi(\Tilde{\tau}_{i,j})}{\Phi(\bar{\tau}_{1.2})}  \Bigg \{ (\Tilde{\bm \Omega}_{{i,j}})_{12} - \Tilde{\tau}_{i,j} \frac{\phi(\Tilde{\tau}_{i,j})}{\Phi(\Tilde{\tau}_{i,j})} (\Tilde{\bm \omega}_{{i,j}})_{11} (\Tilde{\bm \omega}_{{i,j}})_{22} (\Tilde{\bm \delta}_{{i,j}})_{1} (\Tilde{\bm \delta}_{{i,j}})_{2} + \xi_{1.2_i} \xi_{1.2_j} \\&~~~~~~~~~~~~~~~~~~~~~~~~~~+ \frac{\phi(\Tilde{\tau}_{i,j})}{\Phi(\Tilde{\tau}_{i,j})}\xi_{1.2_i} (\Tilde{\bm \omega}_{{i,j}})_{22} (\Tilde{\bm \delta}_{{i,j}})_{2} +  \frac{\phi(\Tilde{\tau}_{i,j})}{\Phi(\Tilde{\tau}_{i,j})} \xi_{1.2_j} (\Tilde{\bm \omega}_{{i,j}})_{11} (\Tilde{\bm \delta}_{{i,j}})_{1} \Bigg\} - \mu_i \mu_j, 
\end{align*}
where $\bm \xi_{i,j} = (\xi_{1.2_i}, \xi_{1.2_j})^\top$, $\bm \Psi_{i,j} = \begin{psmallmatrix}\bar{ \Psi}_{11.2_{ii}} & \bar{ \Psi}_{11.2_{ij}} \\ \bar{ \Psi}_{11.2_{ij}} & \bar{ \Psi}_{11.2_{jj}} \end{psmallmatrix}$, $\bm \eta_{i,j} = (\bar{ \eta}_{1.2_i},\bar{ \eta}_{1.2_{j}})^\top$, $\bm \Omega_{i,j}= \bm \Psi_{i,j} +\bm \eta_{i,j} \bm \eta_{i,j}^\top$, $\tilde{{\bm \mu}}_{i,j} = \bm \xi_{i,j} + \bar{\tau}_{1.2} \bm \eta_{i,j}$, $\bm H_{i,j}= \begin{psmallmatrix} h_i & 0\\0& h_j  \end{psmallmatrix}$,$\Tilde{\bm \xi}_{i,j} = (\textbf{I}_2 - \bm \Omega_{i,j} \bm H_{i,j} )^{-1} \tilde{{\bm \mu}}_{i,j}$, $\Tilde{\bm \Omega}_{i,j} = (\bm \Omega_{i,j}^{-1} -\bm H_{i,j})^{-1}$,\\ $\Tilde{\alpha}_{0_{i,j}} = \frac{\bar{\tau}_{1.2} + \bm \eta_{i,j}^\top \bm \Psi_{i,j} ^{-1} (\Tilde{\bm \xi}_{i,j} -\bm \xi_{i,j})}{\sqrt{1+\bm \eta_{i,j} ^\top \bm \Psi_{i,j} ^{-1} \bm \eta_{i,j}}}$, $\Tilde{\bm \alpha}_{i,j} = \frac{\Tilde{\bm \omega}_{i,j} \bm \Psi_{i,j} ^{-1} \bm \eta_{i,j}}{\sqrt{1+\bm \eta_{i,j} ^\top \bm \Psi_{i,j} ^{-1} \bm \eta_{i,j}}}$, $\Tilde{\bm \omega}_{i,j} = \{\text{diag} (\Tilde{\bm \Omega}_{i,j})\}^{1/2}$, $\bar{\Tilde{\bm \Omega}}_{i,j} =\Tilde{\bm \omega}_{i,j}^{-1} \Tilde{\bm \Omega}_{i,j} \Tilde{\bm \omega}_{i,j}^{-1}$, $\bm \delta_{i,j} = (1+\Tilde{\bm \alpha}_{i,j}^\top \bar{\Tilde{\bm \Omega}}_{i,j} \Tilde{\bm \alpha}_{i,j})^{-1/2}  \bar{\Tilde{\bm \Omega}}_{i,j} \Tilde{\bm \alpha}_{i,j} $, $\mu_i$, $\mu_j$ are the same as in Proposition \ref{prop:expec_condition_snth}, and $h_i < \frac{1}{2(\bar{ \Psi}_{11.2_{ii}} + {\bar{ \eta}_{1.2_i}}^2)}$, $h_j < \frac{1}{2(\bar{ \Psi}_{11.2_{jj}} + {\bar{ \eta}_{1.2_j}}^2)}$, $i = 1,\ldots,p_1$. $i,j = 1,\ldots,p_1$, $i \neq j$.
\end{proposition}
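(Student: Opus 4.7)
The plan is to follow the same template as Proposition \ref{prop:expec_condition_snth}, but working with the bivariate marginal $(Y_{0_i},Y_{0_j})^\top$ of $\bm Y_0$ instead of a single component, then expanding $\sigma_{ij}=\mathbb{E}\{\tau_{h_i}(Y_{0_i})\tau_{h_j}(Y_{0_j})\}-\mu_i\mu_j$ and reducing the cross-moment to a mixed moment of a suitable $\mathcal{ESN}_2$ distribution. Since $\mu_i\mu_j$ is already handled by Proposition \ref{prop:expec_condition_snth}, the whole task is to evaluate the cross-moment.

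First I would establish that the bivariate marginal has the form $(Y_{0_i},Y_{0_j})^\top\sim\mathcal{ESN}_2(\bm\xi_{i,j},\bm\Psi_{i,j},\bm\eta_{i,j},\bar\tau_{1.2})$, by invoking the marginalization property of the $\mathcal{ESN}$ family derived from the stochastic representation \eqref{eq:esn_stochastic} (the linear term $\bar\tau_{1.2}\bm\eta$, the half-normal mixture $U$, and the Gaussian error all restrict coordinate-wise, leaving $\bar\tau_{1.2}$ untouched). Writing $\bm x=(x_i,x_j)^\top$, the expectation becomes
\begin{equation*}
\mathbb{E}\{\tau_{h_i}(Y_{0_i})\tau_{h_j}(Y_{0_j})\}=\int_{\mathbb{R}^2} x_ix_j\exp\!\left(\tfrac{1}{2}\bm x^\top\bm H_{i,j}\bm x\right)\frac{\phi_2(\bm x;\tilde{\bm\mu}_{i,j},\bm\Omega_{i,j})}{\Phi(\bar\tau_{1.2})}\Phi\!\left(\frac{\bar\tau_{1.2}+\bm\eta_{i,j}^\top\bm\Psi_{i,j}^{-1}(\bm x-\bm\xi_{i,j})}{\sqrt{1+\bm\eta_{i,j}^\top\bm\Psi_{i,j}^{-1}\bm\eta_{i,j}}}\right)\mathrm{d}\bm x,
\end{equation*}
where $\tilde{\bm\mu}_{i,j}=\bm\xi_{i,j}+\bar\tau_{1.2}\bm\eta_{i,j}$ coincides with the mean of the Gaussian factor in the $\mathcal{ESN}_2$ pdf.

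The key algebraic step is to complete the square to absorb $\exp(\tfrac12\bm x^\top\bm H_{i,j}\bm x)$ into the Gaussian kernel. Writing $-\tfrac12(\bm x-\tilde{\bm\mu}_{i,j})^\top\bm\Omega_{i,j}^{-1}(\bm x-\tilde{\bm\mu}_{i,j})+\tfrac12\bm x^\top\bm H_{i,j}\bm x$ as a quadratic form with precision matrix $\bm\Omega_{i,j}^{-1}-\bm H_{i,j}$ and new mean $\tilde{\bm\xi}_{i,j}=(\mathbf{I}_2-\bm\Omega_{i,j}\bm H_{i,j})^{-1}\tilde{\bm\mu}_{i,j}$ yields a new Gaussian density $\phi_2(\bm x;\tilde{\bm\xi}_{i,j},\tilde{\bm\Omega}_{i,j})$ with $\tilde{\bm\Omega}_{i,j}=(\bm\Omega_{i,j}^{-1}-\bm H_{i,j})^{-1}$ (valid provided $\bm H_{i,j}$ is small enough to keep this matrix positive definite, which yields the stated constraints on $h_i$ and $h_j$), multiplied by the normalization ratio $\sqrt{\det\tilde{\bm\Omega}_{i,j}/\det\bm\Omega_{i,j}}$ and the exponential prefactor $\exp[-\tfrac12(\tilde{\bm\mu}_{i,j}^\top\bm\Omega_{i,j}^{-1}\tilde{\bm\mu}_{i,j}-\tilde{\bm\xi}_{i,j}^\top\tilde{\bm\Omega}_{i,j}^{-1}\tilde{\bm\xi}_{i,j})]$. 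Next I would rewrite the argument of $\Phi$ in terms of $\tilde{\bm\omega}_{i,j}^{-1}(\bm x-\tilde{\bm\xi}_{i,j})$, producing a new $\tilde\alpha_{0_{i,j}}+\tilde{\bm\alpha}_{i,j}^\top\tilde{\bm\omega}_{i,j}^{-1}(\bm x-\tilde{\bm\xi}_{i,j})$ with $\tilde\alpha_{0_{i,j}}$ and $\tilde{\bm\alpha}_{i,j}$ matching exactly the expressions in the statement. This is precisely the bivariate analogue of the algebraic manipulation used in the proof of Proposition \ref{prop:expec_condition_snth}, and is the main obstacle because tracking the 2D algebra through matrix inverses (while keeping the identifications $\tilde{\bm\delta}_{i,j}$, $\bar{\tilde{\bm\Omega}}_{i,j}$, and the normalizing $\Phi(\tilde\tau_{i,j})/\Phi(\bar\tau_{1.2})$ factors clean) is notationally heavy.

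After this reduction, the remaining integral is of the form $\mathbb{E}(X_iX_j)$ where $(X_i,X_j)^\top\sim\mathcal{ESN}_2(\tilde{\bm\xi}_{i,j},\tilde{\bm\Omega}_{i,j},\tilde{\bm\alpha}_{i,j},\tilde\alpha_{0_{i,j}})$ in the $\mathcal{ASN}$-style parameterization. I would then apply the closed-form second-moment formulas for the $\mathcal{ESN}$ distribution from Section 5.3.4 of \cite{2013.A.A.A.C.CUP}, which yields the bracketed expression involving $(\tilde{\bm\Omega}_{i,j})_{12}$, the $\phi/\Phi$ hazard-like term with $\tilde\tau_{i,j}$, and the cross terms in $\tilde{\bm\delta}_{i,j}$. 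Subtracting $\mu_i\mu_j$ from Proposition \ref{prop:expec_condition_snth} gives the stated formula. Since the calculation is a direct multivariate analogue of the univariate proof in Proposition \ref{prop:expec_condition_snth}, and the main novelty is the bookkeeping of the bivariate completing-the-square, I expect the argument to be delegated to the supplementary material as the authors indicate.
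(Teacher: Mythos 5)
Your proposal is correct and follows essentially the same route as the paper's proof in the supplementary material: reduce to the bivariate $\mathcal{ESN}_2$ marginal, complete the square to absorb $\exp(\frac{1}{2}\bm x^\top \bm H_{i,j}\bm x)$ into a new Gaussian kernel with mean $\tilde{\bm \xi}_{i,j}$ and scale $\tilde{\bm \Omega}_{i,j}$, and apply the $\mathcal{ESN}$ second-moment formulas from Section 5.3.4 of \cite{2013.A.A.A.C.CUP}. Your exponential prefactor written via $\tilde{\bm \xi}_{i,j}^\top \tilde{\bm \Omega}_{i,j}^{-1}\tilde{\bm \xi}_{i,j}$ is algebraically identical to the paper's $\tilde{\bm \mu}_{i,j}^\top(\bm \Omega_{i,j}-\bm \Omega_{i,j}\bm H_{i,j}\bm \Omega_{i,j})^{-1}\tilde{\bm \mu}_{i,j}$, so there is no discrepancy.
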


\subsection{Canonical Form of the $\mathcal{SNTH}$ Distribution}

Consider a $p$-variate random vector $\bm X \sim \mathcal{ASN}_p(\bm \xi,\bm \Omega,\bm \alpha)$. It can be shown that there exists a matrix $\bm H \in \mathbb{R}^{p \times p}$ such that $\bm H (\bm X - \bm \xi) \sim \mathcal{ASN}_p (\bm 0,\textbf{I}_p,\bm \alpha^*)$, where $\bm \alpha^* = (\alpha^*,0,\ldots,0)^\top$, $\alpha^* = \sqrt{\bm \alpha^\top \bar{\bm \Omega} \bm \alpha}$, and $\bm \Omega = \bm \omega \bar{\bm \Omega} \bm \omega$. \cite{2020.A.C.S} showed that the matrix $\bm H$ is of the form $\bm H = \bm Q \bm \Omega^{-1/2}$,
where $\bm Q$ is obtained from the spectral decomposition of $\bm Q^\top \bm \Lambda \bm Q = \bm \Omega^{-1/2} \bm \Sigma \bm \Omega^{-1/2}$, $\bm \Sigma = \mathbb{V}\text{ar}(\bm X) = \bm \Omega - \frac{2}{\pi} \bm \omega \bm \delta \bm \delta^\top \bm \omega$, and $\bm \delta = (1 + \bm \alpha^\top \bar{\bm \Omega} \bm \alpha)^{-1/2} \bar{\bm \Omega} \bm \alpha$. The distribution of $\bm H (\bm X - \bm \xi)$ is defined as the canonical form of the $\mathcal{ASN}$ distribution. 

Similarly, we can define the canonical form of the $\mathcal{SN}$ distribution. Consider a random vector $\bm X \sim \mathcal{SN}_p(\bm \xi ,\bm \Psi,\bm \eta)$. Using $\bm \Omega = \bm \Psi + \bm \eta \bm \eta^\top$ and $\bm \eta = \bm \omega \bm \delta$, the relations  between the parameterizations of the $\mathcal{ASN}$ and the $\mathcal{SN}$, the distribution of $\bm H (\bm X - \bm \xi)$ is obtained as $\mathcal{SN}_p (\bm 0, \textbf{I}_p - \frac{\bm \alpha^* {\bm \alpha^*}^\top}{1 + {\bm \alpha^*}^\top {\bm \alpha^*}}, \frac{\bm \alpha^* }{\sqrt{1 + {\bm \alpha^*}^\top {\bm \alpha^*}}} )$.
Hence, the canonical form of the $\mathcal{SN}$ distribution is defined by the distribution of $\bm H^* (\bm X - \bm \xi) \sim \mathcal{SN}_p (\bm 0, \textbf{I}_p, \bm \eta^*)$, where $\bm \eta^* = (\eta^*, 0,\ldots,0)^\top$, $\eta^* = \sqrt{\bm \alpha^\top \bar{\bm \Omega} \bm \alpha}$, and  $\bm H^* = \begin{psmallmatrix} \sqrt{1+\bm \alpha^\top \bar{\bm \Omega} \bm \alpha} & \bm 0^\top \\ \bm 0 & \textbf{I}_{p-1}\end{psmallmatrix} \bm H$. 

The canonical form of the $\mathcal{ASN}$ or the $\mathcal{SN}$ distribution is useful for deriving  Mardia's measures of multivariate skewness and kurtosis \citep{1970.V.K.M.Biometrika} and the measures of multivariate skewness and kurtosis introduced by \cite{1973.J.F.M.A.A.JASA} since they are invariant under affine transformations of the variable. Moreover, using the canonical form, the unique mode of the $\mathcal{ASN}$ distribution can be derived; see Proposition 5.14 in \cite{2013.A.A.A.C.CUP}. Hence, the canonical form is used mainly to reduce the dimensionality of various problems when applicable. 

For the $\mathcal{SNTH}$ distribution, we define the canonical form by taking the component-wise Tukey-$h$ transformation of the canonical form of the latent $\mathcal{SN}$ random vector.
\begin{proposition}
Suppose $\bm Y \sim \mathcal{SNTH}_p(\bm \xi, \bm \omega,\bar{\bm \Psi},\bm \eta,\bm h)$. We define the canonical form of the $\mathcal{SNTH}$ by the distribution of $$\bm \omega^{-1} (\bm Y^* - \bm \xi) = \bm \tau_{\bm h} [ \bm H^* \bm \tau_{\bm h}^{-1} \{ \bm \omega^{-1} (\bm Y - \bm \xi) \}] \sim \mathcal{SNTH}_p(\bm 0,\textbf{I}_p,\textbf{I}_p,\bm \eta^*,\bm h),$$
where $\bm \tau_{\bm h}^{-1}(\bm z) = \{\tau_{h_1}^{-1} (z_1),\ldots, \tau_{h_p}^{-1}(z_p) \}^\top$, $\tau_{h}^{-1}(z)$ is same as in Equation \eqref{eq:inv_tukey_h}, $\bm \eta^* = (\eta^*, 0,\ldots,0)^\top$, $\eta^* = \sqrt{\bm \alpha^\top \bar{\bm \Omega} \bm \alpha}$, $\bm \Omega = \bar{\bm \Psi} + \bm \eta \bm \eta^\top$, $\bm \alpha = (1 + \bm \eta^\top \bar{\bm \Psi}^{-1} \bm \eta)^{-1/2} \{\text{diag}(\bm \Omega)\}^{1/2} \bar{\bm \Psi}^{-1} \bm \eta$, 
$\bar{\bm \Omega} = \{\text{diag}(\bm \Omega)\}^{-1/2} \bm \Omega \{\text{diag}(\bm \Omega)\}^{-1/2}$, $\bm H^* = \begin{psmallmatrix} \sqrt{1+\bm \alpha^\top \bar{\bm \Omega} \bm \alpha} & \bm 0^\top \\ \bm 0 & \textbf{I}_{p-1}\end{psmallmatrix} \bm H$, $\bm H = \bm Q \bm \Omega^{-1/2}$, 
$\bm Q$ is obtained from the spectral decomposition of $\bm Q^\top \bm \Lambda \bm Q = \bm \Omega^{-1/2} \bm \Sigma \bm \Omega^{-1/2}$, and $\bm \Sigma = \bar{\bm \Psi} + \left(1-\frac{2}{\pi}\right)\bm \eta \bm \eta^\top$.
\end{proposition}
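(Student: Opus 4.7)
The plan is to chain together the defining stochastic representation of the $\mathcal{SNTH}$ distribution with the canonical-form result for the $\mathcal{SN}$ distribution that has already been spelled out in the paragraph immediately preceding the proposition. The whole argument is essentially bookkeeping, so the challenge is mainly notational rather than conceptual.

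First I would invoke the definition of $\bm Y \sim \mathcal{SNTH}_p(\bm \xi, \bm \omega, \bar{\bm \Psi}, \bm \eta, \bm h)$ to write $\bm Y = \bm \xi + \bm \omega\, \bm \tau_{\bm h}(\bm Z)$ for some latent $\bm Z \sim \mathcal{SN}_p(\bm 0, \bar{\bm \Psi}, \bm \eta)$. Because $\bm \omega$ is diagonal with strictly positive entries and each $\tau_{h_i}$ is one-to-one for $h_i \geq 0$ (as established in the proof of Proposition \ref{prop:density_SNTH} via the Lambert-$W$ inverse in Equation \eqref{eq:inv_tukey_h}), the composition $\bm \tau_{\bm h}\circ(\bm \omega^{-1}\cdot)$ is invertible, and one reads off the crucial identity
\begin{equation*}
\bm \tau_{\bm h}^{-1}\{\bm \omega^{-1}(\bm Y - \bm \xi)\} = \bm Z.
\end{equation*}

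Second I would apply the canonical-form result for the $\mathcal{SN}$ family derived earlier in this subsection. Setting $\bm \xi = \bm 0$ and scale $\bar{\bm \Psi}$ in that result (the $\bm \Psi$-$\bm \eta$ parameterization with correlation matrix $\bar{\bm \Psi}$), the matrix $\bm H^*$ defined in the proposition is precisely the one that produces the canonical transform; in particular $\bm H^*\bm Z \sim \mathcal{SN}_p(\bm 0, \textbf{I}_p, \bm \eta^*)$ with $\bm \eta^* = (\eta^*, 0, \ldots, 0)^\top$ and $\eta^* = \sqrt{\bm \alpha^\top \bar{\bm \Omega}\bm \alpha}$, where $\bm \Omega$, $\bm \alpha$, and $\bar{\bm \Omega}$ are obtained from $\bar{\bm \Psi}$ and $\bm \eta$ through the $\mathcal{ASN}$-$\mathcal{SN}$ parameter conversion recalled in Section \ref{SNdistrib}. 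Thus, substituting the identity from Step 1,
\begin{equation*}
\bm H^*\,\bm \tau_{\bm h}^{-1}\{\bm \omega^{-1}(\bm Y - \bm \xi)\} = \bm H^*\bm Z \sim \mathcal{SN}_p(\bm 0, \textbf{I}_p, \bm \eta^*).
\end{equation*}

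Finally I would apply the component-wise Tukey-$h$ transformation $\bm \tau_{\bm h}(\cdot)$ to both sides and invoke the definition of the $\mathcal{SNTH}$ distribution directly, with location $\bm 0$, scale $\bm \omega = \textbf{I}_p$, correlation $\bar{\bm \Psi} = \textbf{I}_p$, skewness $\bm \eta^*$, and tail-thickness $\bm h$. This yields
\begin{equation*}
\bm \tau_{\bm h}\bigl[\bm H^*\,\bm \tau_{\bm h}^{-1}\{\bm \omega^{-1}(\bm Y - \bm \xi)\}\bigr] \sim \mathcal{SNTH}_p(\bm 0, \textbf{I}_p, \textbf{I}_p, \bm \eta^*, \bm h),
\end{equation*}
which is exactly the claim after identifying the left-hand side with $\bm \omega^{-1}(\bm Y^* - \bm \xi)$.

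The only point that requires minor care is verifying that the $\bm H^*$ in this proposition really coincides with the $\bm H^*$ in the preceding $\mathcal{SN}$ canonical-form discussion when the scale in that discussion is taken to be $\bar{\bm \Psi}$ rather than a general $\bm \Psi$; this amounts to substituting $\bm \Omega \leftarrow \bar{\bm \Psi} + \bm \eta \bm \eta^\top$, $\bm \omega \leftarrow \{\text{diag}(\bm \Omega)\}^{1/2}$, and $\bm \Sigma \leftarrow \bar{\bm \Psi} + (1 - 2/\pi)\bm \eta\bm \eta^\top$ (the variance of $\bm Z$, obtained from Proposition \ref{prop:exp_var_cov} with $\bm h = \bm 0$), all of which match the formulas stated in the proposition. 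Beyond this bookkeeping, no further computation is needed, and the proof is essentially three lines plus the identification of $\bm H^*$.
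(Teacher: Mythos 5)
Your proposal is correct and follows essentially the same route as the paper's own proof: represent $\bm Y$ via the latent $\bm Z \sim \mathcal{SN}_p(\bm 0,\bar{\bm \Psi},\bm \eta)$, recover $\bm Z = \bm \tau_{\bm h}^{-1}\{\bm \omega^{-1}(\bm Y - \bm \xi)\}$, pass to the $\mathcal{SN}$ canonical transform $\bm H^*\bm Z \sim \mathcal{SN}_p(\bm 0,\textbf{I}_p,\bm \eta^*)$, and reapply $\bm \tau_{\bm h}$ to land in $\mathcal{SNTH}_p(\bm 0,\textbf{I}_p,\textbf{I}_p,\bm \eta^*,\bm h)$ by definition. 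Your identification of $\bm \Sigma = \mathbb{V}\mathrm{ar}(\bm Z) = \bar{\bm \Psi} + (1-2/\pi)\bm \eta\bm \eta^\top$ and of $\bm H^*$ through the $\mathcal{ASN}$--$\mathcal{SN}$ conversion matches the paper exactly.
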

\begin{proof}
We have $\bm Y = \bm \xi + \bm \omega \bm \tau_{\bm h} (\bm Z)$, where $\bm Z \sim \mathcal{SN}_p(\bm 0,\bar{\bm \Psi}, \bm \eta)$. Moreover, let $\bm Z^*$ be the canonical transform of $\bm Z$, and $\bm Z^* = \bm H^* \bm Z \sim \mathcal{SN}_p(\bm 0,\textbf{I}_p,\bm \eta^*)$. Here, $\bm H^* = \begin{psmallmatrix} \sqrt{1+\bm \alpha^\top \bar{\bm \Omega} \bm \alpha} & \bm 0^\top \\ \bm 0 & \textbf{I}_{p-1}\end{psmallmatrix} \bm H$, $\bm H = \bm Q \bm \Omega^{-1/2}$, 
$\bm Q$ is obtained from the spectral decomposition of $\bm Q^\top \bm \Lambda \bm Q = \bm \Omega^{-1/2} \bm \Sigma \bm \Omega^{-1/2}$, and $\bm \Sigma = \mathbb{V}\text{ar}(\bm Z) = \bar{\bm \Psi} + \left(1-\frac{2}{\pi}\right)\bm \eta \bm \eta^\top$. Hence, $\bm \omega^{-1} (\bm Y - \bm \xi) = \bm \tau_{\bm h}(\bm Z^*) \sim \mathcal{SNTH}_p(\bm 0,\textbf{I}_p,\textbf{I}_p,\bm \eta^*,\bm h). $\qed
\end{proof}

Since the canonical form of the $\mathcal{SNTH}$ distribution is not exactly an affine transformation, it cannot be used for deriving the measures of multivariate skewness and kurtosis introduced by \cite{1970.V.K.M.Biometrika} and \cite{1973.J.F.M.A.A.JASA}. However, it can be used for reducing the dimensionality of the problem, when applicable, such as simulating observations from the $\mathcal{SNTH}$ distribution.

\section{Inference for the $\mathcal{SNTH}$ Distribution}

In this section, we discuss how to estimate parameters and perform tests for the $\mathcal{SNTH}$ distribution. 

\subsection{Parameter Estimation for the $\mathcal{SNTH}$ Distribution}

To estimate the parameters of the $\mathcal{SNTH}$ distribution, we use the method of maximizing the likelihood function. Suppose $\bm Y_1, \ldots, \bm Y_n$ is a random sample of size $n$ from the $\mathcal{SNTH}_p (\bm \xi, \bm \omega,\bar{\bm \Psi},\bm \eta,\bm h)$ distribution with $\bm Y_i = (Y_{i1},\ldots,Y_{ip})^\top$, $i = 1,\ldots,n$. For an observed sample $\bm y_1, \ldots,\bm y_n$, with $\bm y_i = (y_{i1},\ldots,y_{ip})^\top$, $i = 1,\ldots,n$, the log-likelihood function based on Equation \eqref{eq:snth-density} is 
\begin{equation}\label{eq:SNTH_full_llh}
\begin{split}
    \ell(\bm \theta) &= \log(2) - \frac{np}{2} \log(2 \pi) - \frac{n}{2} \log\{\det(\bar{\bm \Psi} +\bm \eta \bm \eta^\top)\} - \frac{1}{2} \sum_{i = 1}^n  \bm g (\bm y_i)^\top (\bar{\bm \Psi} + \bm \eta \bm \eta^\top)^{-1} \bm g (\bm y_i)\\
    &~~~~~ + \sum_{i = 1}^n  \Phi \left \{ \dfrac{\bm \eta^\top \bar{\bm \Psi}^{-1} \bm g(\bm y_i)}{\sqrt{1+\bm \eta^\top \bar{\bm \Psi} \bm \eta}} \right\} - n \sum_{j = 1}^p \log(\omega_{jj})
    + \sum_{i = 1}^n \sum_{j = 1}^p   \frac{1}{2} W_0 \left \{ h_j \left( \frac{y_{ij} - \xi_j}{ \omega_{jj}} \right)^2 \right \} \\
    &~~~~~- \sum_{i = 1}^n \sum_{j = 1}^p \log \left ( h_j \left( \frac{y_{ij} - \xi_j}{ \omega_{jj}} \right)^2 + \exp \left[ W_0 \left\{h_j \left( \frac{y_{ij} - \xi_j}{ \omega_{jj}} \right)^2  \right\} \right] \right ),
\end{split}
\end{equation}
where $\bm \theta = (\bm \xi^\top, \mbox{diag}(\bm \omega)^\top, \text{vech}(\bar{\bm \Psi})^\top, \bm \eta^\top,\bm h^\top)^\top$, where $\text{vech}(\bar{\bm \Psi})^\top$ is the vector of all the upper-off-diagonal elements of $\bar{\bm \Psi}$. We estimate the parameters in $\bm \theta$ by maximizing $\ell(\bm \theta)$ with respect to $\bm \theta$. This maximization cannot be done analytically and has to be done numerically. Hence, for a $p$-dimensional problem, we need to perform a $\{4p +p(p-1)/2\}$-dimensional numerical optimization, which becomes difficult when $p$ is large. We can tackle this problem in a different way. 

Since $\bm Y_1 ,\ldots,\bm Y_n \overset{\text{i.i.d.}}{\sim} \mathcal{SNTH}_p (\bm \xi, \bm \omega,\bar{\bm \Psi},\bm \eta,\bm h)$, from Proposition \ref{prop:marginal_SNTH}
we also have that $Y_{1j},\ldots,Y_{nj} \overset{\text{i.i.d.}}{\sim} \mathcal{SNTH}_1 ( \xi_j,  \omega_{jj},1, \eta_j, h_j)$, $j = 1,\ldots,p$. Based on the $j^\text{th}$ marginal data, the marginal log-likelihood function is 
\begin{equation}\label{eq:SNTH_marginal_llh}
\begin{split}
    \ell_j (\xi_j,\omega_{jj},\eta_j,h_j) &= \log(2) - \frac{n}{2} \log(2 \pi) - \frac{n}{2} \log(1 +\eta_j^2) - \frac{1}{2} \sum_{i = 1}^n  \frac{g_j (y_{ij})^2}{1 +\eta_j^2} \\
    &~~~~~+ \sum_{i = 1}^n  \Phi \left \{ \dfrac{ \eta_j g_j ( y_{ij})}{\sqrt{1+\eta_j^2}} \right\} - n  \log(\omega_{jj})
    + \sum_{i = 1}^n   \frac{1}{2} W_0 \left \{ h_j \left( \frac{y_{ij} - \xi_j}{ \omega_{jj}} \right)^2 \right \} \\
    &~~~~~- \sum_{i = 1}^n  \log \left ( h_j \left( \frac{y_{ij} - \xi_j}{ \omega_{jj}} \right)^2 + \exp \left[ W_0 \left\{h_j \left( \frac{y_{ij} - \xi_j}{ \omega_{jj}} \right)^2  \right\} \right] \right ),
\end{split}
\end{equation}
$j = 1,\ldots,p$. We estimate $\xi_j$, $\omega_{jj}$, $\eta_j$, and $h_j$, by maximizing the log-likelihood function for the $j^\text{th}$ marginal $\ell_j (\xi_j,\omega_{jj},\eta_j,h_j)$, $j = 1,\ldots,p$. Therefore, by performing four-dimensional numerical optimization $p$ times, we obtain the marginal maximum likelihood estimates (MLEs) for $\bm \xi$, $\bm \omega$, $\bm \eta$, and $\bm h$. 

At this point, we are yet to obtain the estimate for $\bar{\bm \Psi}$. From the definition of the $\mathcal{SNTH}$ distribution, we have $\bm Y_i \overset{\text{d}}{=} \bm \xi + \bm \omega \bm \tau_{\bm h} (\bm Z_i)$, $i=1,\ldots,n$ and $\bm Z_1,\ldots,\bm Z_n \overset{\text{i.i.d.}}{\sim} \mathcal{SN}_p(\bm 0,\bar{\bm \Psi},\bm \eta)$. With the marginal MLEs $\widehat{\bm \xi}$, $\widehat{\bm \omega}$, $\widehat{\bm \eta}$, and $\widehat{\bm h}$ of $\bm \xi$, $\bm \omega$, $\bm \eta$, and $\bm h$, we can compute an estimate for the latent $\mathcal{SN}$ observations. Then, $\widehat{\bm Z}_i = \bm \tau_{\widehat{\bm h}} ^{-1} \{\widehat{\bm \omega}^{-1}( \bm Y_i - \widehat{\bm \xi}) \} $, $i = 1,\ldots,n$ are the estimates for $\bm Z_1,\ldots,\bm Z_n$. Assuming that, $\widehat{\bm Z}_1,\ldots,\widehat{\bm Z}_n \overset{\text{i.i.d.}}{\sim} \mathcal{SN}_p(\bm 0,\bar{\bm \Psi},\widehat{\bm \eta})$ we can estimate~$\bar{\bm \Psi}$. 

We use the EM algorithm for the $\mathcal{SN}$ distribution for estimating $\bar{\bm \Psi}$, keeping the location and the skewness parameter fixed at $\bm 0$ and $\widehat{\bm \eta}$. The EM algorithm does not ensure that the estimate of $\bar{\bm \Psi}$ will be a correlation matrix, but the estimate is a covariance matrix, which can be easily converted to its corresponding correlation matrix. We use this correlation matrix as an estimate for $\bar{\bm \Psi}$. In the next section, we will justify the effectiveness of the described method for estimating parameters using a simulation study. Moreover, if we use the marginal MLEs of $\bm \xi$, $\bm \omega$, $\bm \eta$ and $\bm h$ and the estimate of $\bar{\bm \Psi}$ obtained from the EM algorithm as the initial value for the numerical maximization of $\ell(\bm \theta)$ in Equation \eqref{eq:SNTH_full_llh}, we can converge to the joint MLEs of $\bm \theta$ in very few iterations. Although it does not completely tackle the problem of high-dimensional numerical maximization, this specific selection of initial values reduces the run-time of the numerical maximization greatly. Moreover, we will show in our simulation study that the initial parameter values obtained in the aforementioned way are close to the joint MLEs and can be directly used for high-dimensional problems as the computation required for estimating the initial estimates is linear in $p$. In the next subsection, we describe the EM algorithm for the $\mathcal{SN}$ distribution in details. Note that instead of computing the marginal MLEs of the parameters one can use the iterative generalized method of moments (IGMM) estimators proposed by \cite{2011.G.M.G.TAAS}. IGMM is also based on the estimates of the latent observations and from there estimating the parameters corresponding to the latent random vector. While using the IGMM estimators for the $\mathcal{SNTH}$ distribution one has to keep in mind that the location and the scale parameters used in its definition are not the mean and the marginal standard deviation of the latent random vectors, unlike the proposal of \cite{2011.G.M.G.TAAS}. The IGMM has to be adapted accordingly for getting the correct estimates of the parameters.

\subsection{EM Algorithm for the $\mathcal{SN}$ Distribution}
The EM algorithm for the skew-normal distribution is a well-researched topic. Interested readers are directed to the recent paper by \cite{2021.T.A.H.F.T.K.C.L.ES} and the references therein for more on this topic. In this section, we put forward an EM algorithm for the skew-normal distribution with $\bm \Psi$-$\bm \eta$ parameterization (see \eqref{eq:SN_mpdf}), which is new in the literature. Moreover, we are only concerned with the scenario when we need to estimate the scale parameter $\bm \Psi$ while the location $\bm \xi = \bm 0$ and the skewness parameter $\bm \eta$ is known.

Consider a random sample $\bm Z_1,\ldots,\bm Z_n \overset{\text{i.i.d.}}{\sim} \mathcal{SN}_p(\bm 0,\bm \Psi,\bm \eta_0)$, where $\bm \eta_0$ is given. The log-likelihood of an observed sample $\bm z_1,\ldots,\bm z_n$ is 
\begin{align*}
    \ell(\bm \Psi) &= -\frac{np}{2} \log(2 \pi) - \frac{n}{2} \log \{ \det(\bm \Psi + \bm \eta_0 \bm \eta_0^\top) \} - \frac{1}{2} \sum_{i= 1}^n \bm z_i^\top (\bm \Psi + \bm \eta_0 \bm \eta_0^\top)^{-1} \bm z_i + \sum_{i = 1}^n \log \left \{ 2 \Phi \left(\dfrac{\bm \eta_0^\top \bm \Psi^{-1} \bm z_i}{\sqrt{1+ \bm \eta_0 ^\top \bm \Psi^{-1} \bm \eta_0}}\right) \right\}.
\end{align*}

Using the stochastic representation of the $\mathcal{SN}$ distribution we can represent $\bm Z_1,\ldots,\bm Z_n$ as
$(\bm Z_i | U_i = u_i) \overset{\text{i.i.d.}}{\sim} \mathcal{N}_p(u_i \bm \eta_0, \bm \Psi)$, $U_i  \overset{\text{i.i.d}}{\sim} \mathcal{HN}(0,1)$,  $i = 1,\ldots,n$ and obtain
the conditional pdf of $(U_i|\bm Z_i=\bm z_i)$ as
\begin{align*}
    f_{U_i|(\bm Z_i = \bm z_i}(u) & \propto  \phi_p(\bm z_i;u_i \bm \eta_0,\bm \Psi) \phi(u;0,1)\\
    &= \phi_p(\bm z_i;\bm 0,\bm \Psi + \bm \eta_0 \bm \eta_0^\top) \phi\{u; \bm \eta_0^\top (\bm \Psi + \bm \eta_0 \bm \eta_0^\top)^{-1} \bm z_i, 1 - \bm \eta_0^\top (\bm \Psi + \bm \eta_0 \bm \eta_0^\top)^{-1} \bm \eta_0 \}\\
    &= \phi_p (\bm z_i;\bm 0,\bm \Psi + \bm \eta_0 \bm \eta_0^\top) \phi \left(u;\tau_i,\frac{1}{1+\alpha^2}\right),\quad u > 0,\quad i = 1,\ldots,n,
\end{align*}
where $\alpha^2 = \bm \eta_0^\top \bm \Psi^{-1} \bm \eta_0$ and $\tau_i = \frac{\bm \eta_0^\top \bm \Psi^{-1} \bm z_i}{1+\alpha^2}$. Hence, the conditional distribution of the latent
variables $U_i$ given the observable $\bm Z_i$ is
$$(U_i|\bm Z_i = \bm z_i) \overset{\text{i.i.d}}{\sim} \mathcal{TN} \left(0;\tau_i,\frac{1}{1+\alpha^2} \right), \quad i = 1,\ldots,n.$$
Moreover, the first and second order raw moments of $(U_i|\bm Z_i = \bm z_i)$ are
$$v_{1i} = \mathbb{E}(U_i|\bm Z_i = \bm z_i) = \dfrac{\bar{\tau}_i+ \frac{\phi(\bar{\tau}_i)}{\Phi(\bar{\tau}_i)}}{\sqrt{1+\alpha^2}},\quad v_{2i} = \mathbb{E}(U_i^2 |\bm Z_i = \bm z_i) = \dfrac{ 1 + \bar{\tau}_i^2+ \bar{\tau}_i\frac{\phi(\bar{\tau}_i)}{\Phi(\bar{\tau}_i)}}{1+\alpha^2} ,\quad i = 1,\ldots,n, $$
where $\bar{\tau}_i = \sqrt{1+\alpha^2} \tau_i = \frac{\bm \eta_0^\top \bm \Psi^{-1} \bm z_i}{\sqrt{1+\alpha^2}}$. 

From the hierarchical representation above, the complete log-likelihood for $\bm \Psi$ based on the observed data $\bm z = (\bm z_1,\ldots,\bm z_n)^\top$ and the missing data $\bm u = (u_1,\ldots,u_n)^\top$ is
\begin{align*}
    \ell_c (\bm \Psi| \bm z, \bm u) &= -\frac{np}{2} \log(2 \pi)+ \frac{n}{2} \log\{ \det(\bm \Lambda) \} - \frac{1}{2} \sum_{i=1}^n \bm z_i^\top \bm \Lambda \bm z_i + \bm \eta_0^\top \bm \Lambda \sum_{i = 1}^n u_i \bm z_i\\
    &~~~~~- \frac{1}{2} \bm \eta_0^\top \bm \Lambda \bm \eta_0 \sum_{i = 1}^n u_i^2 + \frac{n}{2} \log \left( \frac{2}{\pi} \right) - \frac{1}{2} \sum_{i = 1}^n u_i^2,
\end{align*}
where $\bm \Lambda = \bm \Psi^{-1}$.

Let $\bm Z = (\bm Z_1,\ldots,\bm Z_n)^\top$ be the observable random sample and $\bm U = (U_1,\ldots,U_n)^\top$ be the latent random sample. Then the E-Step at the $(k+1)^\text{th}$ iteration of the EM algorithm is
\begin{align*}
    Q(\bm \Psi|\bm \Psi^{(k)}) &= \mathbb{E}_{\bm \Psi^{(k)}} \{ \ell_c (\bm \Psi| \bm Z, \bm U) |\bm Z = \bm z \} \\
    &= -\frac{np}{2} \log(2 \pi)+ \frac{n}{2} \log\{ \det(\bm \Lambda ) \} - \frac{1}{2} \sum_{i=1}^n \bm z_i^\top \bm \Lambda \bm z_i + \bm \eta_0^\top \bm \Lambda \sum_{i = 1}^n v_{1i}^{(k)} \bm z_i\\
    &~~~~~- \frac{1}{2}\bm \eta_0^\top \bm \Lambda \bm \eta_0 \sum_{i = 1}^n v_{2i}^{(k)} + \frac{n}{2} \log \left( \frac{2}{\pi} \right) - \frac{1}{2} \sum_{i = 1}^n v_{2i}^{(k)},
\end{align*}
where $\bm \Psi^{(k)}$ is the estimated value of $\bm \Psi$ in the $k^{\text{th}}$ step, $\bm \Lambda^{(k)} = \{\bm \Psi^{(k)}\}^{-1}$,
$$v_{1i}^{(k)} =  \dfrac{\bar{\tau}_i^{(k)}+ \frac{\phi(\bar{\tau}_i^{(k)})}{\Phi(\bar{\tau}_i^{(k)})}}{\sqrt{1+\{\alpha^{(k)}\}^2}},\quad v_{2i}^{(k)}  = \dfrac{ 1 + \{\bar{\tau}_i^{(k)}\}^2+ \bar{\tau}_i ^{(k)} \frac{\phi(\bar{\tau}_i^{(k)})}{\Phi(\bar{\tau}_i^{(k)})}}{1+\{\alpha^{(k)}\}^2} ,$$
$\bar{\tau}_i^{(k)} = [1 + \{\alpha^{(k)}\}^2]^{-1/2} \bm \eta_0 ^\top \bm \Lambda^{(k)} \bm z_i$, $\alpha^{(k)} = \sqrt{\bm \eta_0 ^\top \bm \Lambda^{(k)} \bm \eta_0}$. To get the $(k+1)^\text{th}$ estimate of $\bm \Psi$, we maximize $Q(\bm \Psi|\bm \Psi^{(k)})$ with respect to $\bm \Psi$ and update $\bm \Psi^{(k+1)} = \text{argmax} \{Q(\bm \Psi|\bm \Psi^{(k)})\}$.

Since $\bm \Psi$ is a symmetric positive definite matrix according to our definition of the $\mathcal{SN}$ distribution, we can write $\bm \Psi^{-1} = \bm \Lambda = \bm C^\top \bm C$, where $\bm C \in \mathbb{R}^{p \times p}$ is a nonsingular matrix. Hence, 
\small{
\begin{align*}
   &~~~~~ Q(\bm \Psi|\bm \Psi^{(k)}) \propto  \frac{n}{2} \log\{ \det(\bm C^\top \bm C ) \} - \frac{1}{2} \sum_{i=1}^n \bm z_i^\top \bm C^\top \bm C \bm z_i + \bm \eta_0^\top \bm C^\top \bm C \sum_{i = 1}^n v_{1i}^{(k)} \bm z_i- \frac{1}{2} \bm \eta_0^\top  \bm C^\top \bm C \bm \eta_0 \sum_{i = 1}^n v_{2i}^{(k)} \\
   & \Rightarrow \dfrac{\partial Q(\bm \Psi|\bm \Psi^{(k)})}{\partial \bm C} = n (\bm C^\top)^{-1} - \bm C \sum_{i = 1}^n \bm z_i \bm z_i^\top + \bm C \sum_{i = 1}^n \left( \bm \eta_0  v_{1i}^{(k)} \bm z_i^\top +  v_{1i}^{(k)} \bm z_i \bm \eta_0^\top \right) - \bm C \bm \eta_0  \bm \eta_0^\top \sum_{i = 1}^n v_{2i}^{(k)} = \bm 0 \\
   & \Rightarrow (\bm C^\top \bm C)^{-1} = \frac{1}{n} \sum_{i = 1}^n \bm z_i \bm z_i^\top + \bm \eta_0  \bm \eta_0^\top \left( \frac{1}{n}\sum_{i = 1}^n v_{2i}^{(k)}\right) - \frac{1}{n} \sum_{i = 1}^n \left( \bm \eta_0  v_{1i}^{(k)} \bm z_i^\top +  v_{1i}^{(k)} \bm z_i \bm \eta_0^\top \right) .
\end{align*}
}
Therefore, we update
\begin{equation*}
    \bm \Psi^{(k+1)} = \frac{1}{n} \sum_{i = 1}^n \bm z_i \bm z_i^\top + \bm \eta_0  \bm \eta_0^\top \left( \frac{1}{n}\sum_{i = 1}^n v_{2i}^{(k)}\right) - \frac{1}{n} \sum_{i = 1}^n \left( \bm \eta_0  v_{1i}^{(k)} \bm z_i^\top +  v_{1i}^{(k)} \bm z_i \bm \eta_0^\top \right).
\end{equation*}
We stop the algorithm when $ \{\ell(\bm \Psi^{(k+1)})/\ell(\bm \Psi^{(k)}) - 1\}$ is sufficiently close to $0$.

\subsection{Tests Based on the $\mathcal{SNTH}$ Distribution}

It is a well-known fact \citep{2012.M.H.C.L.B} that the Fisher information matrix of the $\mathcal{ASN}$ and the $\mathcal{SN}$ distributions is singular when the skewness parameter, $\bm \alpha$ or $\bm \eta$, is set to zero. As a result, we cannot use the Wald type test or the likelihood ratio test (LRT) for testing the null hypothesis that the skewness parameter is zero based on the $\mathcal{ASN}$ or the $\mathcal{SN}$ distribution. Although the asymptotic distribution of the LRT statistic is $\chi^2_p$ for the univariate $\mathcal{ASN}$ or the univariate $\mathcal{SN}$ distribution, i.e. for $p = 1$, the same is not true for $p>1$; see \cite{2023.S.M.R.B.A.V.M.G.G.SP}. The explanation of why the asymptotic distribution of the LRT statistic is $\chi^2_1$ for the univariate $\mathcal{ASN}$ or the univariate $\mathcal{SN}$ is still an open problem. 

For the skew-$t$ distribution, this singularity of the Fisher information matrix does not occur when the skewness parameter is set to zero. Hence, we can perform the test of the null hypothesis that the skewness parameter is zero based on the skew-$t$ distribution using the Wald type test or the LRT. Next, we show that the Fisher information matrix of the $\mathcal{SNTH}_2$ distribution, when the skewness parameter is set to zero, remains nonsingular.

\begin{proposition}\label{prop:FIM_SNTH_bivariate}
The Fisher information matrix for a bivariate random vector $\bm Y \sim \mathcal{SNTH}_2 (\bm \xi,\bm \omega,\bar{\bm \Psi},\bm \eta,\bm h)$ is nonsingular when $\bm \eta = \bm 0$.
\end{proposition}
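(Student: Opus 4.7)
The plan is to compute the score vector at $\bm \eta = \bm 0$, exploit a symmetry of the null density to reduce the $9\times 9$ Fisher information matrix to two smaller blocks, and then argue that each block is nonsingular. At $\bm \eta = \bm 0$ the factor $2\Phi(0)=1$ so the density of Proposition \ref{prop:density_SNTH} collapses to
\begin{equation*}
f_{\bm Y}(\bm y)\big|_{\bm \eta=\bm 0} = \phi_2\{\bm g(\bm y);\bm 0,\bar{\bm \Psi}\}\, J_1(y_1) J_2(y_2),
\end{equation*}
where $J_j$ denotes the Tukey-$h$ Jacobian factor appearing in \eqref{eq:snth-density}. Differentiating $\log f_{\bm Y}$, the $\bm \eta$-derivatives of $\det(\bar{\bm \Psi}+\bm \eta \bm \eta^\top)$ and of $\bm g^\top(\bar{\bm \Psi}+\bm \eta \bm \eta^\top)^{-1}\bm g$ vanish at $\bm \eta = \bm 0$ because $\partial(\bm \eta \bm \eta^\top)/\partial \eta_k$ does; only the $\log\Phi$ term contributes, yielding $s_{\eta_k}|_{\bm \eta=\bm 0} = \sqrt{2/\pi}\,\bigl(\bar{\bm \Psi}^{-1}\bm g(\bm Y)\bigr)_k$. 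The remaining seven scores with respect to $\xi_j$, $\omega_{jj}$, $h_j$, and $\rho=\bar{\Psi}_{12}$ are routine.

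Next I would exploit the symmetry of the null density under the joint sign flip $\bm u \mapsto -\bm u$ with $\bm u = \bm \omega^{-1}(\bm Y - \bm \xi)$: $\bm g$ is componentwise odd in $\bm u$ (the inverse Tukey-$h$ map is odd), the Gaussian kernel is quadratic in $\bm g$, and each $J_j$ is even in $u_j$, so $f_{\bm Y}|_{\bm \eta=\bm 0}$ is reflection invariant. A direct parity check shows that $s_{\xi_1}, s_{\xi_2}, s_{\eta_1}, s_{\eta_2}$ are odd under this reflection while $s_{\omega_{11}}, s_{\omega_{22}}, s_{h_1}, s_{h_2}, s_{\rho}$ are even; since odd and even random variables are $L^2$-orthogonal under a reflection-symmetric density, the Fisher information matrix is block diagonal with a $4\times 4$ odd block $I_{\text{odd}}$ and a $5\times 5$ even block $I_{\text{even}}$. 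It therefore suffices to show that each block is nonsingular.

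The even block is the Fisher information of the centered Normal-Tukey-$h$ submodel parameterized by $(\bm \omega, \bm h, \rho)$; this is a regular, identifiable family (marginals identify each $(\omega_{jj}, h_j)$ and $\rho$ is identified by the latent-Gaussian correlation), so $I_{\text{even}}$ is positive definite by a standard argument. The genuine difficulty is the odd block. In the skew-normal limit $\bm h = \bm 0$ one has $s_{\xi_j} = \omega_{jj}^{-1}\bigl(\bar{\bm \Psi}^{-1}\bm g(\bm Y)\bigr)_j$, a constant multiple of $s_{\eta_j}$, producing the well-known SN singularity. For $\bm h \ne \bm 0$, the Tukey-$h$ transformation introduces two genuinely nonlinear $u_j$-dependent ingredients into
\begin{equation*}
s_{\xi_j} = -\,\frac{\partial g_j}{\partial \xi_j}\bigl(\bar{\bm \Psi}^{-1}\bm g(\bm Y)\bigr)_j + \frac{\partial \log J_j}{\partial \xi_j},
\end{equation*}
and these break the linear dependence that causes the SN singularity.

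To complete the argument I would assume a relation $\sum_j a_j s_{\xi_j} + \sum_k b_k s_{\eta_k} = 0$ holding almost surely, restrict the identity to the axis $u_1 = 0$ (which kills every $u_1$-dependent term since $g_1$ and $\partial \log J_1/\partial \xi_1$ both vanish there), and reduce the problem to verifying that the three univariate functions $g_2(u_2)$, $g_2(u_2)\,\partial g_2/\partial \xi_2$, and $\partial \log J_2/\partial \xi_2$ are linearly independent in $u_2$ for $h_2 > 0$. This linear-independence verification is the main obstacle; I would dispatch it by computing Taylor expansions around $u_2 = 0$ with the Lambert series $W_0(h u^2) = h u^2 - (h u^2)^2 + \tfrac{3}{2}(h u^2)^3 - \cdots$ and comparing the first few nonvanishing coefficients. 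Independence then forces $a_2 = 0$, the symmetric argument at $u_2 = 0$ forces $a_1 = 0$, and $b_1 = b_2 = 0$ follow from invertibility of $\bar{\bm \Psi}^{-1}$. Hence $I_{\text{odd}}$ is positive definite and the full Fisher information matrix is nonsingular.
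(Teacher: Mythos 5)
Your route is genuinely different from the paper's. The paper simply writes out all nine score functions at $\bm \eta = \bm 0$ (arriving at the same expressions you sketch, e.g.\ $s_{\eta_k} = \sqrt{2/\pi}\,(\bar{\bm \Psi}^{-1}\bm g(\bm y))_k$) and then asserts by inspection that ``they are not linearly dependent,'' so the Fisher information matrix, being the covariance of the score vector, is nonsingular. You add two structural ingredients the paper does not have: the parity decomposition under $\bm u \mapsto -\bm u$, which correctly splits the $9\times 9$ matrix into an odd $(\bm \xi,\bm \eta)$ block and an even $(\bm \omega,\bm h,\rho)$ block (your parity assignments check out against the paper's explicit score formulas), and the axis-restriction device $u_1=0$, which reduces the only genuinely delicate independence question --- whether the Tukey-$h$ transformation breaks the $s_{\xi_j}\propto s_{\eta_j}$ degeneracy of the skew-normal --- to the linear independence of three explicit univariate functions, verifiable by a Lambert-series expansion. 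This buys a proof that is actually checkable, correctly isolates why $h_j>0$ is needed for each $j$ (a hypothesis the proposition omits but which is essential, since $h_1=0$ alone already makes $s_{\xi_1}$ and $s_{\eta_1}$ proportional), and explains the mechanism behind the nonsingularity rather than asserting it. What it costs is that the final coefficient comparison is left undone, so in the end both arguments terminate in an unverified linear-independence claim.

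One step you should repair: for the even block you argue that identifiability of the centered normal-Tukey-$h$ submodel implies $I_{\text{even}}\succ 0$ ``by a standard argument.'' Identifiability does not imply a nonsingular Fisher information --- the skew-normal family itself is identifiable yet has a singular information matrix at $\bm \alpha = \bm 0$, which is the very phenomenon this proposition is about. You need to check linear independence of $s_{\omega_{11}}, s_{\omega_{22}}, s_{h_1}, s_{h_2}, s_{\rho}$ directly; the same axis-restriction trick works (at $u_1=0$ one has $s_{h_1}=0$ and $s_{\omega_{11}}$ constant, so the relation collapses onto functions of $u_2$ alone), and the verification is routine, but it cannot be waved away by identifiability.
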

\begin{proof}
From Equation \eqref{eq:SNTH_full_llh}, the log-likelihood function for $\bm Y = \bm y = (y_1,y_2)^\top$ is
\begin{align*}
    \ell (\bm \theta) &= - \log(\pi) - \frac{1}{2} \log \{ \det(\bar{\bm \Psi} + \bm \eta \bm \eta^\top)\} - \frac{1}{2} \bm g(\bm y)^\top (\bar{\bm \Psi} + \bm \eta \bm \eta^\top)^{-1} \bm g (\bm y) + \log \left[ \Phi \left\{ \dfrac{\bm \eta^\top \bar{\bm \Psi}^{-1} \bm g (\bm y) }{\sqrt{1+\bm \eta^\top \bar{\bm \Psi}^{-1} \bm \eta}} \right\} \right]\\
    &~~~~~+ \sum_{i=1}^2 \Bigg ( - \log(\omega_{ii}) + \frac{1}{2} W_0 \left(h_i x_i ^2\right)
    - \log \left[ h_i x_i ^2 + \exp\left\{W_0 \left(h_i x_i ^2\right)\right\}\right] \Bigg ),
\end{align*}
where $x_i = \left( \frac{y_i - \xi_i}{\omega_{ii}} \right)$, $i=1,2$. The score functions of all the parameters are obtained by differentiating the log-likelihood with respect to the parameters. Assuming that $\bar{\bm \Psi} = \begin{psmallmatrix} 1 & \rho \\ \rho & 1 \end{psmallmatrix}$,
the score functions of all the parameters, when $\bm \eta = \bm 0$, are listed below for $i=1,j=2$ or $i=2,j=1$:
\small{
\begin{align*}
    &S_{\xi_i} = \dfrac{1}{\omega_{ii}}  \left(  \dfrac{ x_i -  \rho x_j \exp \left \{\frac{1}{2} W_0 (h_i x_i^2)-\frac{1}{2} W_0 (h_j x_j^2) \right \} }{(1-\rho^2) [ h_i x_i^2 + \exp \left\{ W_0 (h_i x_i^2) \right\}]} + \dfrac{ h_ix_i [ h_i x_i^2 + 3  \exp \{ W_0 (h_i x_i^2)\}]}{[h_i x_i^2 + \exp \{ W_0 (h_i x_i^2)\}]^2} \right),\\
    &S_{\omega_{ii}}= \dfrac{1}{\omega_{ii}}  \left(  \dfrac{ x_i^2 -  \rho x_i x_j \exp \left \{\frac{1}{2} W_0 (h_i x_i^2)-\frac{1}{2} W_0 (h_j x_j^2) \right \} }{(1-\rho^2)[h_i x_i^2 + \exp \left\{ W_0 (h_i x_i^2) \right\}]} + \dfrac{ \exp \{ W_0 (h_i x_i^2)\} [ h_i x_i^2 -  \exp \{ W_0 (h_i x_i^2)\}]}{[h_i x_i^2 + \exp \{ W_0 (h_i x_i^2)\}]^2} \right), \\
    &S_{\eta_i} = \sqrt{\dfrac{2}{\pi}} \left \{ \dfrac{
     g_i(y_i)-\rho g_j(y_j)}{(1-\rho^2)} \right\},\\
    &S_{h_i}= \dfrac{1}{2}  \left(  \dfrac{ x_i^4 \exp\left\{-W_0(h_i x_i^2)\right\} -  \rho x_i^3 x_j \exp \left \{-\frac{1}{2} W_0 (h_i x_i^2)-\frac{1}{2} W_0 (h_j x_j^2) \right \} }{(1-\rho^2)[h_i x_i^2 + \exp \left\{ W_0 (h_i x_i^2) \right\}]}- \dfrac{ h_i x_i^4 + 3 x_i^2 \exp \{ W_0 (h_i x_i^2)  }{[h_i x_i^2 + \exp \{ W_0 (h_i x_i^2)\}]^2} \right),\\
    &S_\rho = \dfrac{g_1(y_1)g_2(y_2)}{(1-\rho^2)} - \dfrac{\rho}{(1-\rho^2)^2} \{ g_1^2(y_1) + g_2^2(y_2) - 2 \rho g_1(y_1) g_2(y_2) \} + \dfrac{\rho}{(1-\rho^2)}.
\end{align*}
}
From the form of the score functions we can observe that they are not linearly dependent when $\bm \eta = \bm 0$ and hence the Fisher information matrix, which is the variance-covariance matrix of the score vector, is nonsingular when $\bm \eta = \bm 0$.
\qed
\end{proof}

Proposition \ref{prop:FIM_SNTH_bivariate} demonstrates that the Fisher information matrix of the $\mathcal{SNTH}$ distribution is nonsingular when $\bm \eta = \bm 0$ for $p = 2$. Our conjecture is that this statement remains true for $p>2$. We justify this by plotting, in Figure \ref{fig:LRT}, the histogram of the LRT statistic for testing $H_0: \bm \eta = \bm 0$ vs $H_1: \bm \eta \neq \bm 0$ for $p = 2$, $3$, and $4$, based on samples of size $5000$ and $1000$ replicates. Along with the histograms, we also plot the $\chi_p^2$ pdf. The plots indicate that the asymptotic distribution of the LRT statistic indeed follows $\chi^2_p$, for $p = 2$,  $3$, and $4$. This would not have been the case if the Fisher information matrix was singular for $\bm \eta = \bm 0$.

\begin{figure}[htp!]
\begin{center}
\centering
  \includegraphics[width=\linewidth]{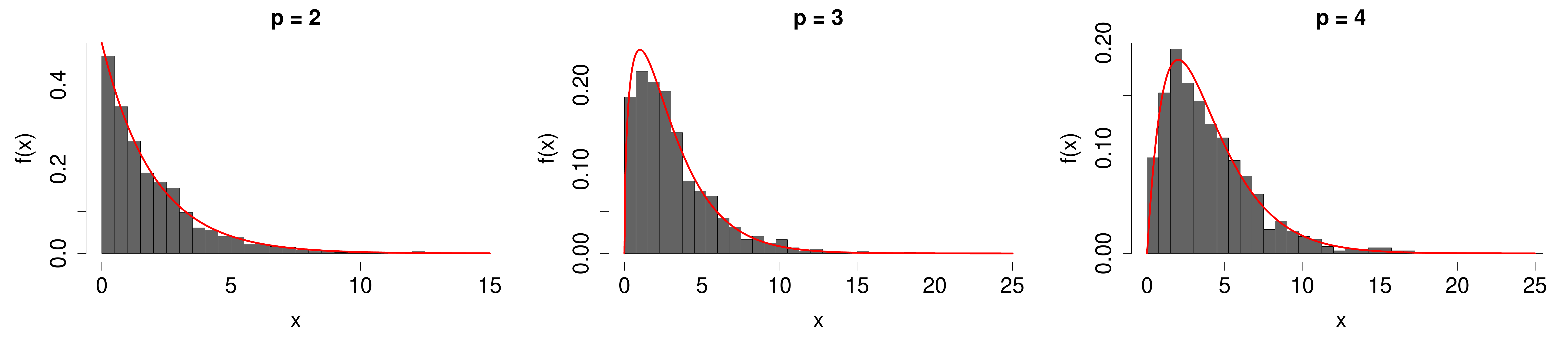}
\caption{Histograms of the LRT statistic for testing $H_0: \bm \eta = \bm 0$ vs $H_1: \bm \eta \neq \bm 0$ for $\mathcal{SNTH}_p$ when $p = 2$, $3$, and $4$ based on samples of size $5000$ and $1000$ replicates. The red curves indicate the pdf of the $\chi^2_p$ distribution.}
\label{fig:LRT}
\end{center}
\end{figure}

Although we have justified the nonsingularity of the Fisher information matrix for the $\mathcal{SNTH}$ distribution when $\bm \eta = \bm 0$, we do not have the mathematical form of the Fisher information matrix. As a result, we cannot use the Wald type test for testing $\bm \eta = \bm 0$. We have to rely on the LRT for that:
\begin{itemize}
    \item \textit{Testing $H_0:\bm \eta = \bm 0$ vs $H_1:\bm \eta \neq \bm 0$, given that $\bm h \neq \bm 0$}:\\ 
    Since the Fisher information matrix of the $\mathcal{SNTH}$ distribution when $\bm \eta = \bm 0$ is nonsingular, given that $\bm h \neq \bm 0$, we use the asymptotic distribution of the LRT statistic for conducting the test.
    \item \textit{Testing $H_0:\bm h = \bm 0$ vs $H_1:\bm h \neq \bm 0$, given that $\bm \eta \neq \bm 0$}:\\
    Under the null hypothesis the $\mathcal{SNTH}$ distribution becomes the $\mathcal{SN}$ distribution. The Fisher information matrix of the $\mathcal{SN}$ distribution is nonsingular when $\bm \eta \neq \bm 0$. Hence, under the null hypothesis we can use the asymptotic distribution of the LRT statistic for conducting the test.
    \item \textit{Testing $H_0:\bm \eta = \bm 0 \text{~and~} \bm h = \bm 0$ vs $H_1:\bm \eta \neq \bm 0 \text{~or~} \bm h \neq \bm 0$} :\\
    Under the null hypothesis, the Fisher information matrix is singular. Hence, we cannot use the LRT anymore for this testing problem. However, since the asymptotic distribution of the LRT statistic for testing $\eta = 0$ vs $\eta \neq 0$ based on the univariate $\mathcal{SN}$ is $\chi_1^2$, we can use the LRT for testing $H_{i0}: \eta_i = 0, h_i = 0$ vs $H_{i1}: \eta_i \neq 0 \text{~or~} h_i \neq 0$, $i = 1,\ldots,p$. We reject $H_0$ if any of the $H_{i0}$ gets rejected. Note here that the rejection region for testing $H_{i0}$ vs $H_{i1}$, $i = 1,\ldots,p$, has to be computed subject to Bonferroni's correction.
\end{itemize}

\section{Simulation Study}

We conduct two simulation studies in this section: one to demonstrate the effectiveness of the parameter estimation method described in Sections 4.1 and 4.2, and another to show in which scenarios the $\mathcal{SNTH}$ distribution is more suitable compared to the skew-$t$ distribution.

\subsection{$\mathcal{SNTH}$ Parameter Estimation}

We test the methodology for $\mathcal{SNTH}$ parameter estimation in a simulation study. We simulate observations of size $n = 50$, $100$ $200$, $500$, and $1000$ from a $\mathcal{SNTH}_3(\bm \xi,\bm \omega,\bar{\bm \Psi},\bm \eta,\bm h)$, with $\bm \xi = (0.8,-0.6,1.3)^\top$, $\bm \omega = \text{diag}(3,5,2)$, $\bar{\bm \Psi} = \begin{psmallmatrix}
1 &-0.5&0.3\\ -0.5&1&-0.2\\0.3&-0.2&1\\
\end{psmallmatrix}$, $\bm \eta = (-1.5,2,0.5)^\top$ and $\bm h = (0.02,0.08,0.03)^\top$. Based on the simulated data, we estimate the parameters by the methodology described in Sections 4.1 and 4.2. We repeat the process $100$ times and summarize the estimated parameter in boxplots in Figure \ref{fig:boxplot_snth}. Alongside the estimates obtained from the methodology described in Section~4.1 (indicated as mMLE (short for marginal MLE) for $\bm \xi$, $\bm \omega$, $\bm \eta$, $\bm h$ and as EM for $\bar{\bm \Psi}$ in Figure \ref{fig:boxplot_snth}) we also report the MLEs of all the parameters as well. The boxplots indicate that the methodology is working reasonably well for estimating the parameters from the $\mathcal{SNTH}$ model. Moreover, as the sample size increases, the variance of the estimates decreases, as it should. Hence, we can say that the parameter estimation methodology described in Sections 4.1 and 4.2 is justified. The boxplots also show that the estimates of the parameters obtained from the EM algorithm are not very different from the MLEs, although they have more variability. The variability difference between the two estimation methods also decreases as the sample size increases. For problems with high dimensions where the computation of the exact MLEs are infeasible, one can use the methodology described in Sections 4.1 and 4.2 as an alternative. Moreover, these estimates are an excellent choice for the starting values of the parameters when optimizing the exact log-likelihood for computing the MLEs.

\begin{figure}[h!]
\centering
\includegraphics[width=0.99\linewidth]{}
\caption{Boxplots of the parameter estimates (100 replicates) of a $\mathcal{SNTH}_3$ distribution obtained from the methodology in Sections 4.1 and 4.2 for different sample sizes $n$, given as mMLE (marginal MLE) for $\bm \xi$, $\bm \omega$, $\bm \eta$, $\bm h$ and as EM for $\bar{\bm \Psi}$ along with the MLE boxplots. The red line in each plot indicates the true parameter value.}
\label{fig:boxplot_snth}
\end{figure}

\subsection{Comparison Between the $\mathcal{SNTH}$ and the Skew-$t$ Distributions}

\begin{figure}[t!]
\begin{center}
\centering
\begin{subfigure}{0.25\textwidth}
  \centering
  \includegraphics[width=\linewidth]{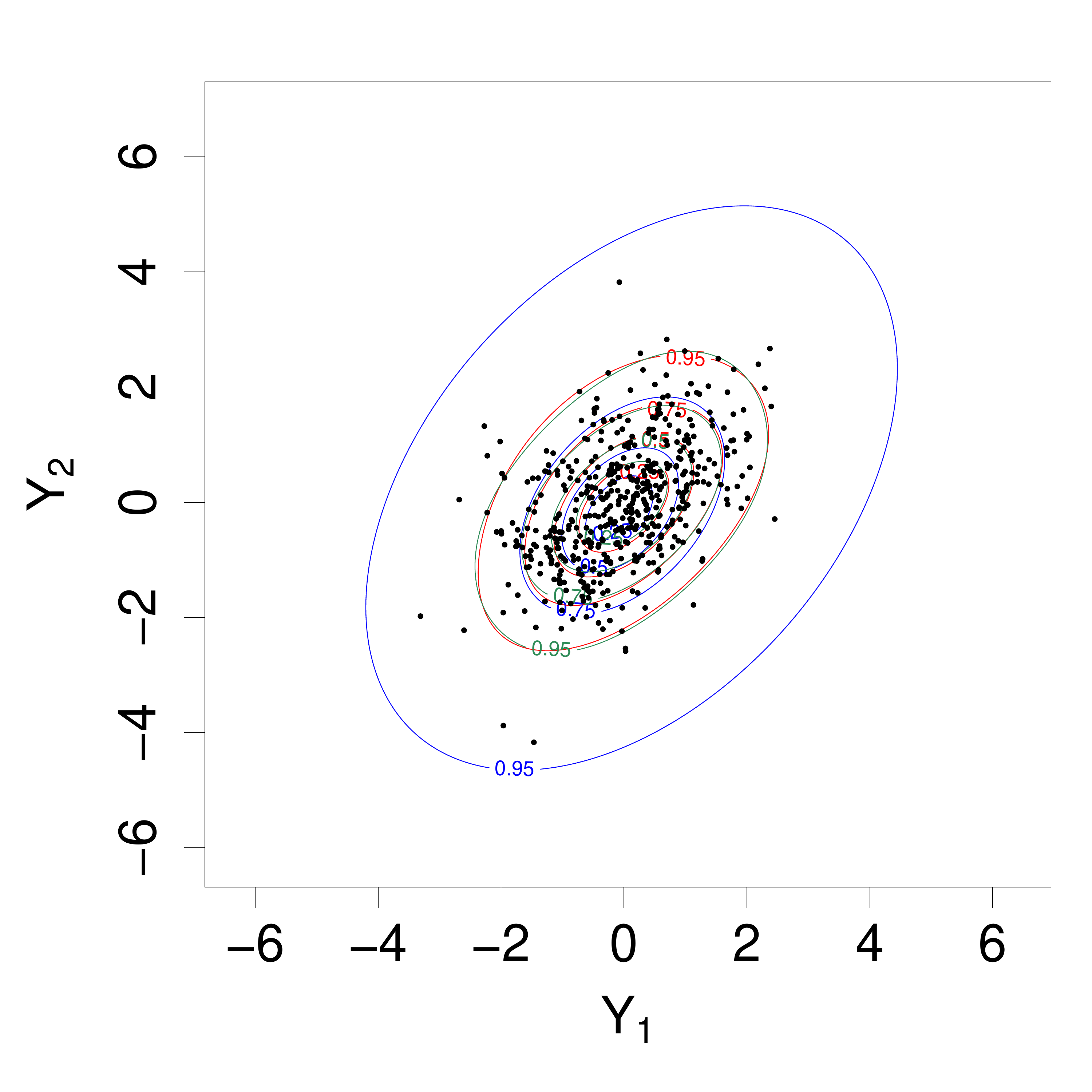}
  
\end{subfigure}%
\begin{subfigure}{0.25\textwidth}
  \centering
  \includegraphics[width=\linewidth]{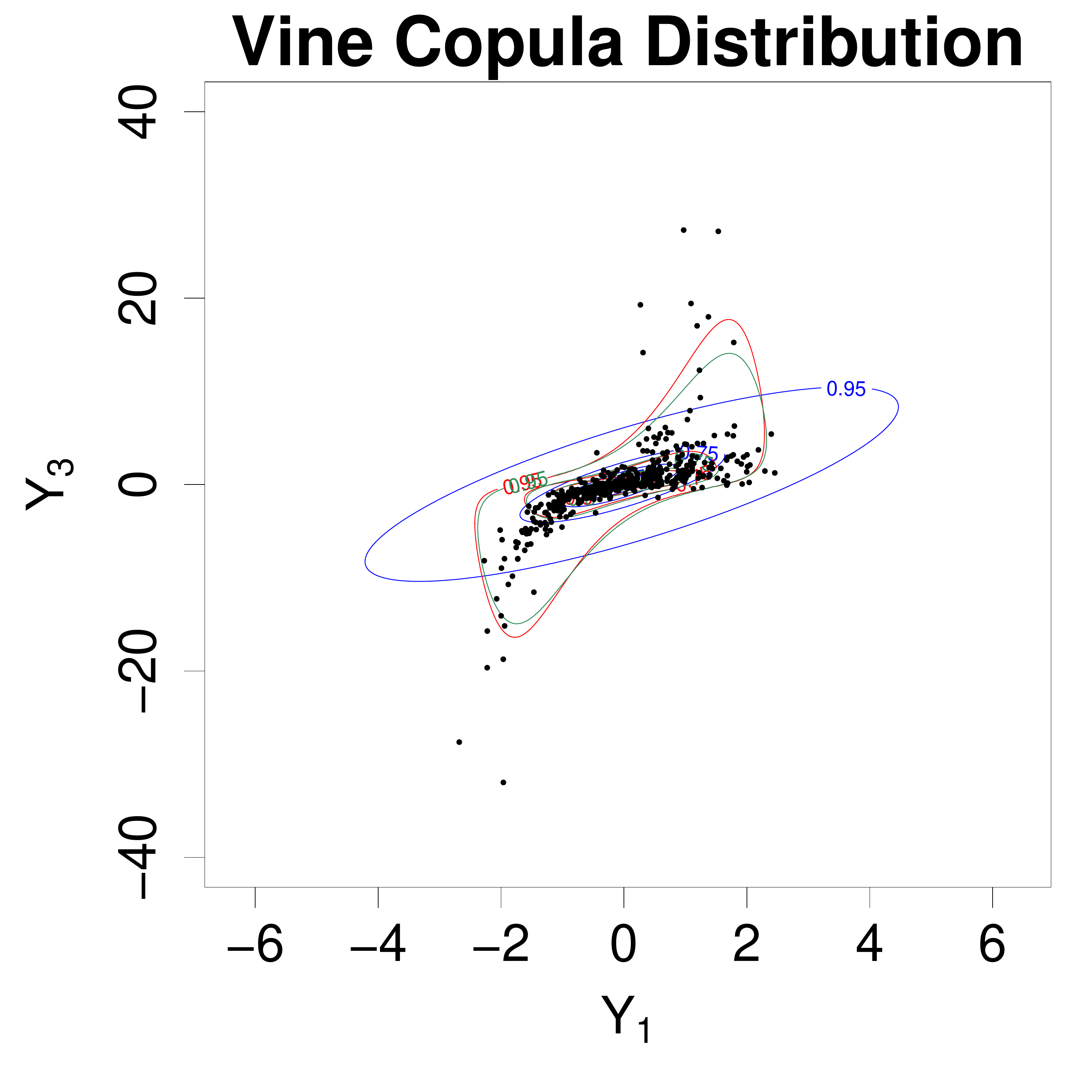}
  
\end{subfigure}
\begin{subfigure}{0.25\textwidth}
  \centering
  \includegraphics[width=\linewidth]{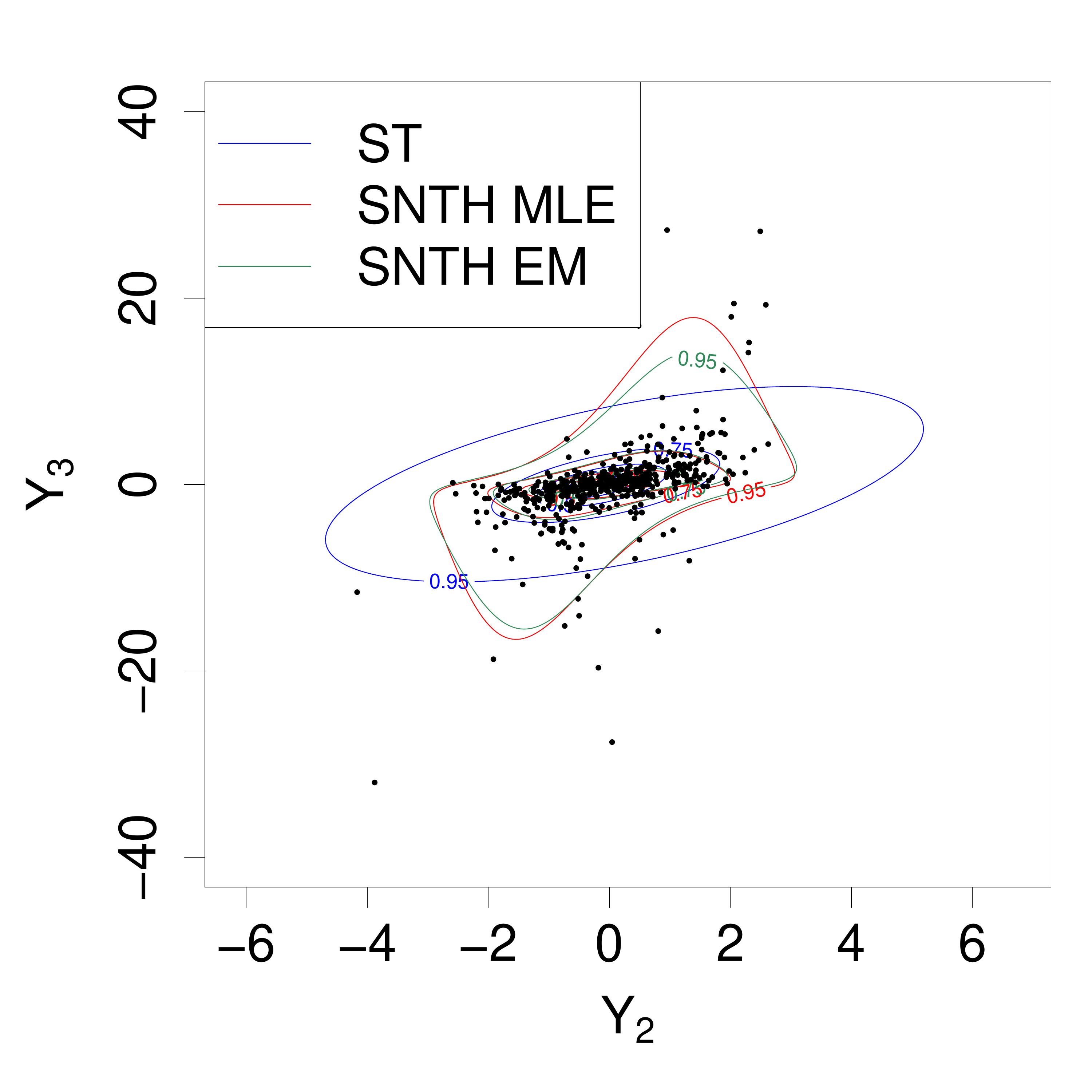}
  
\end{subfigure}
\begin{subfigure}{0.25\textwidth}
  \centering
  \includegraphics[width=\linewidth]{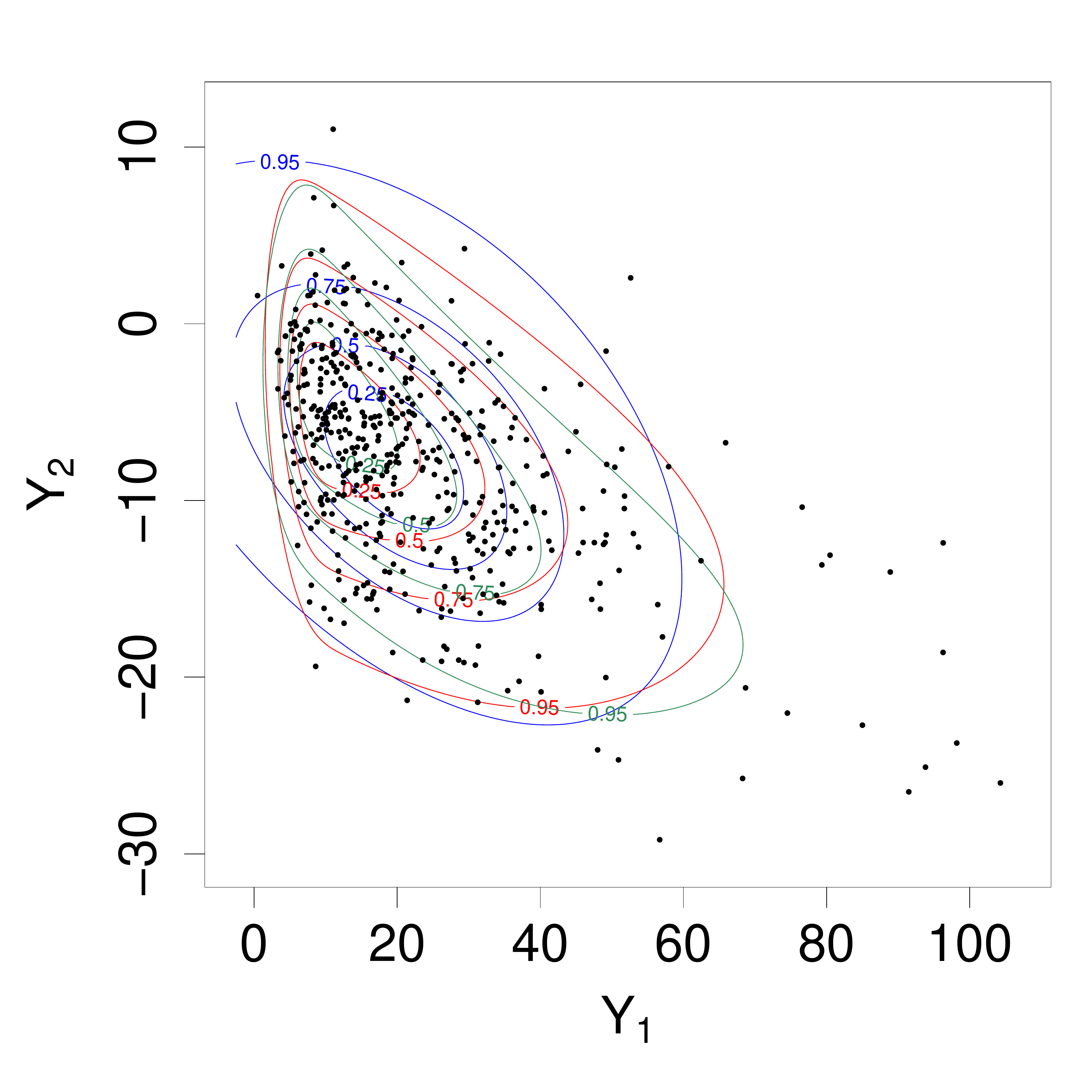}
  
\end{subfigure}%
\begin{subfigure}{0.25\textwidth}
  \centering
  \includegraphics[width=\linewidth]{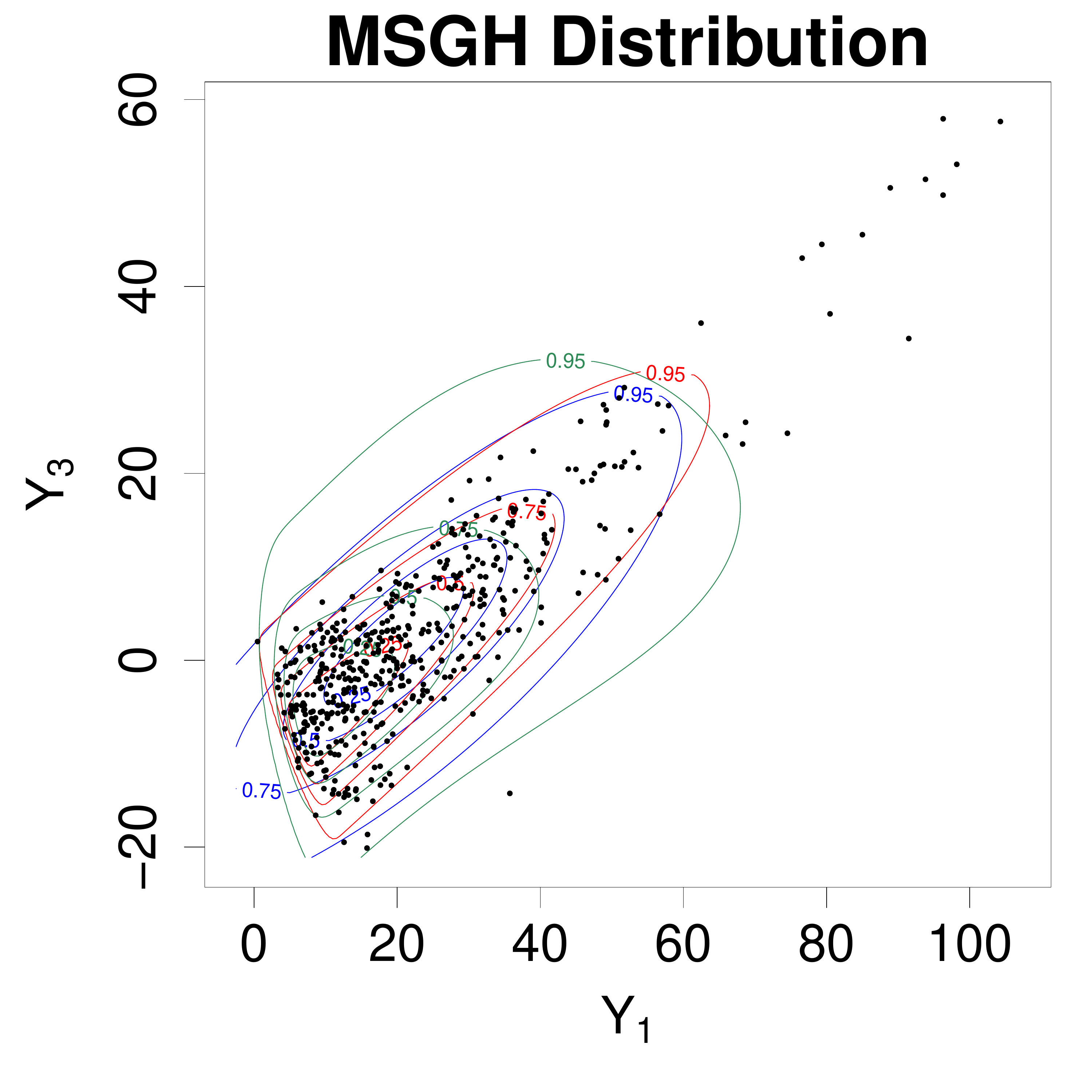}
  
\end{subfigure}
\begin{subfigure}{0.25\textwidth}
  \centering
  \includegraphics[width=\linewidth]{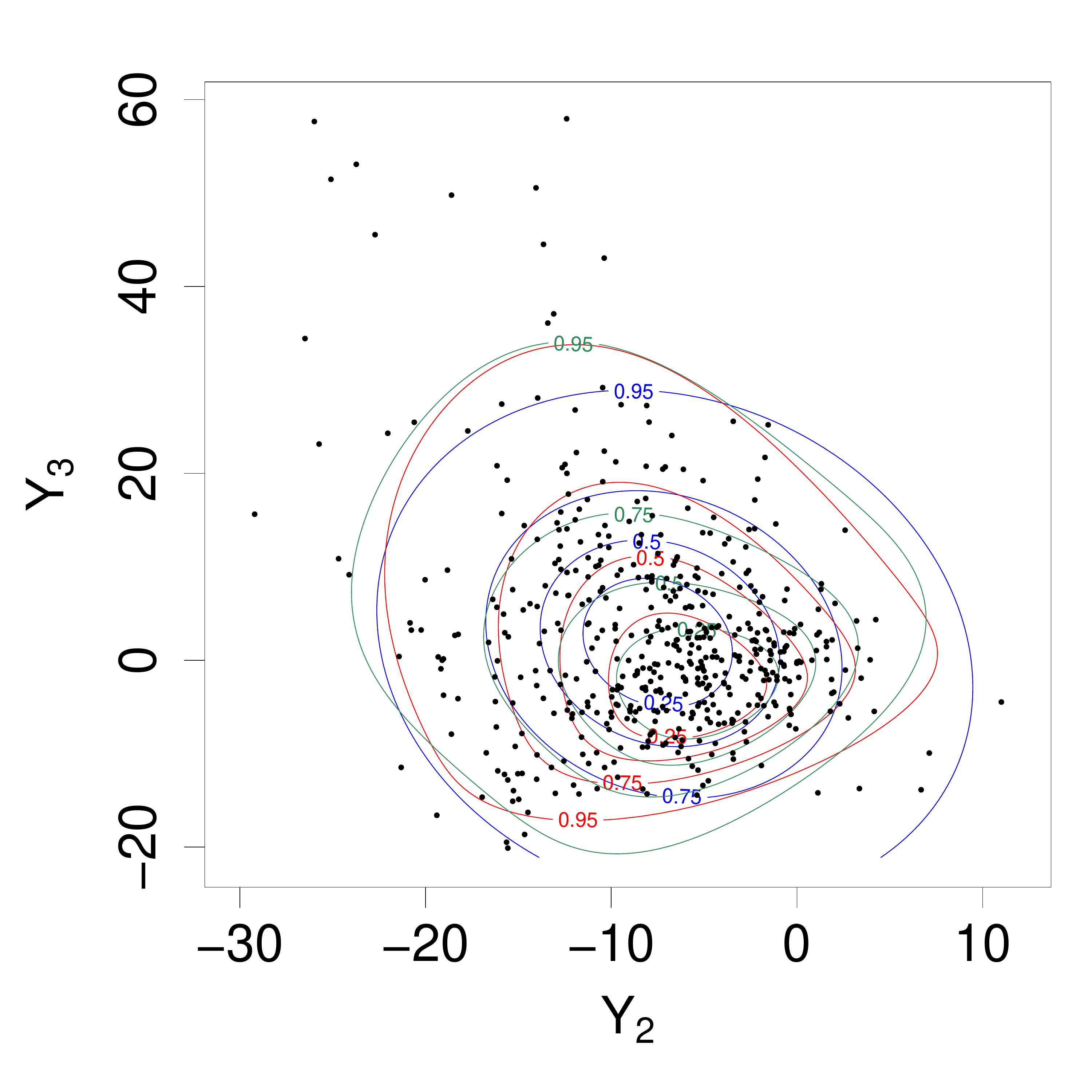}
  
\end{subfigure}

\begin{subfigure}{0.25\textwidth}
  \centering
  \includegraphics[width=\linewidth]{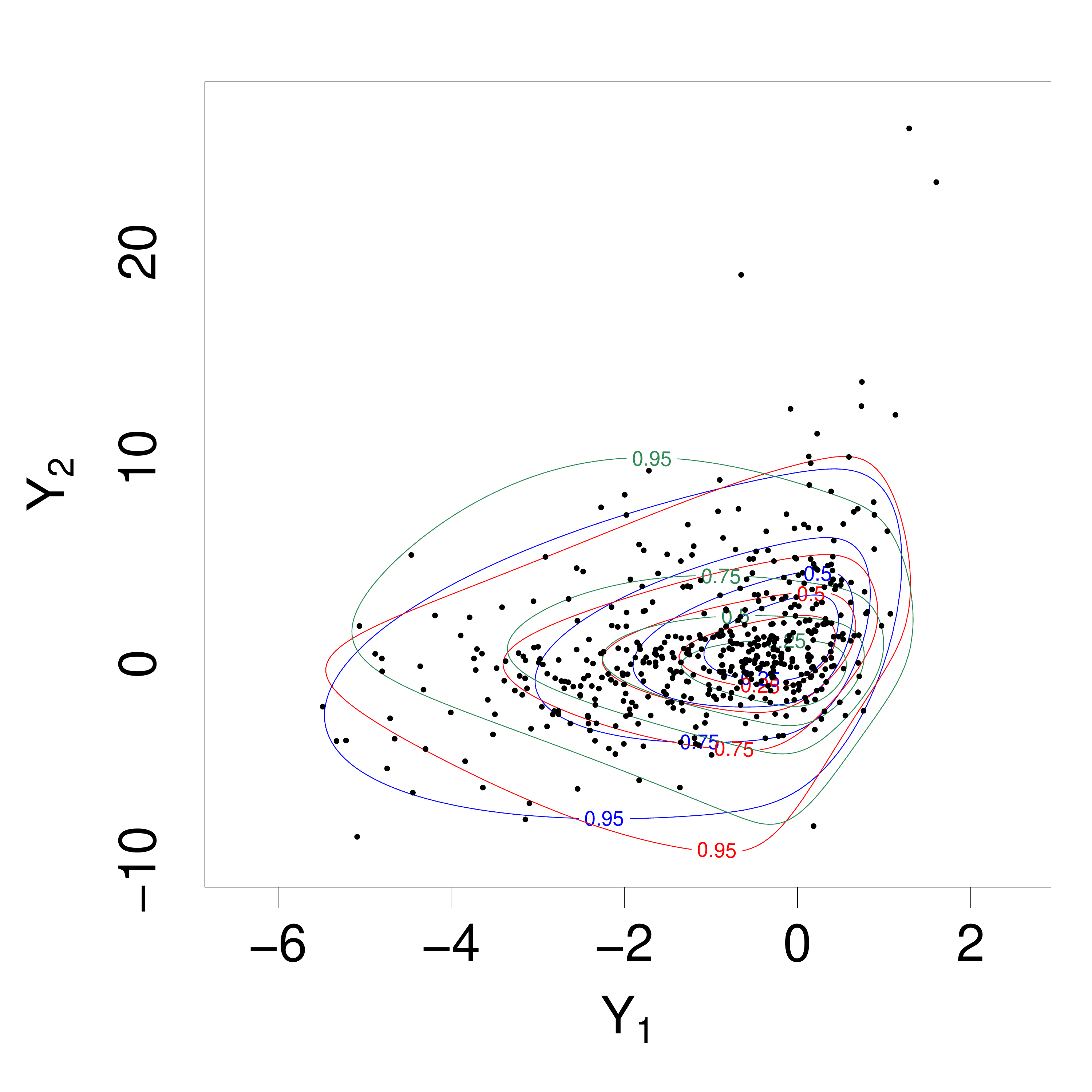}
  \vspace{-0.35in}
  
\end{subfigure}
\begin{subfigure}{0.25\textwidth}
  \centering
  \includegraphics[width=\linewidth]{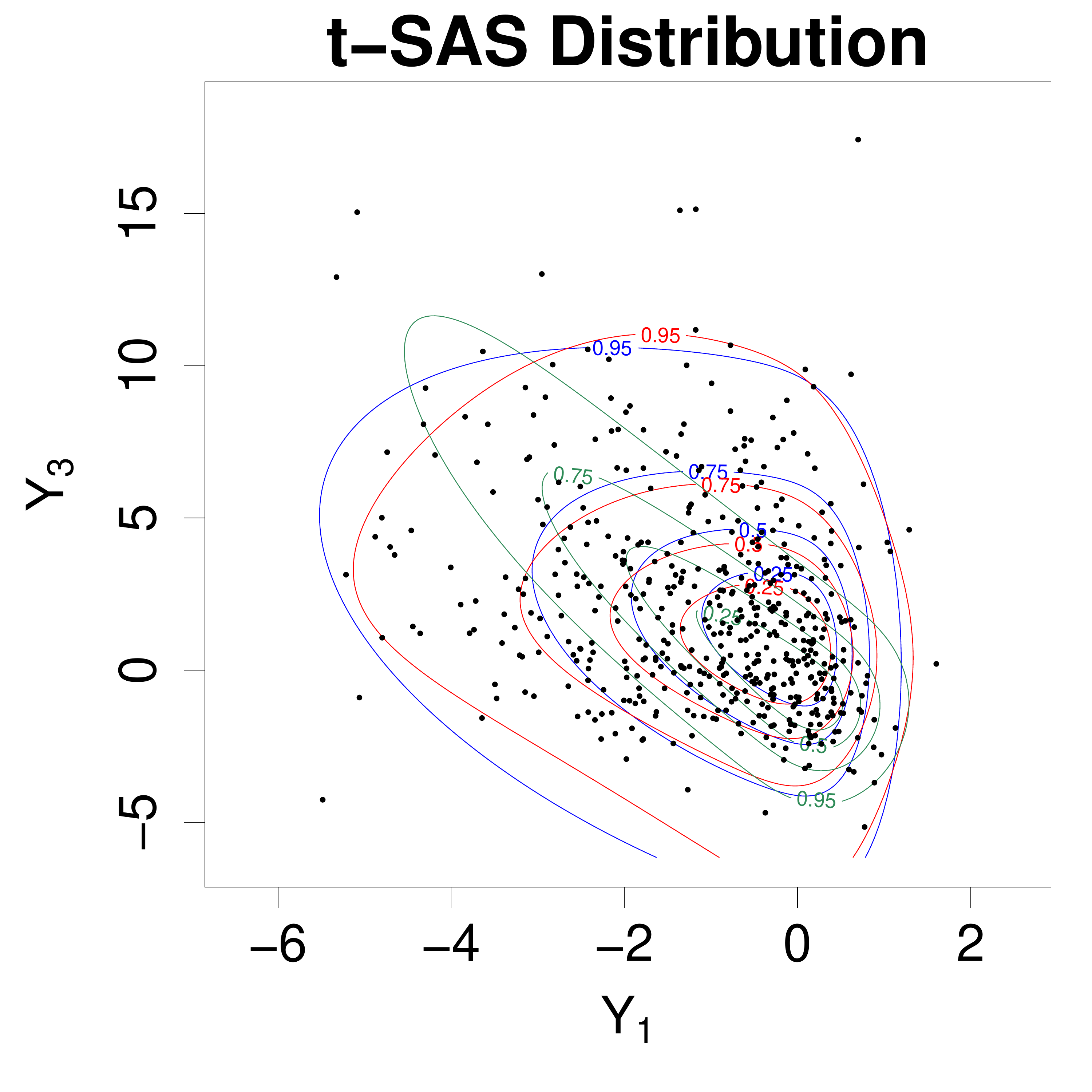}
  \vspace{-0.35in}
  
\end{subfigure}%
\begin{subfigure}{0.25\textwidth}
  \centering
  \includegraphics[width=\linewidth]{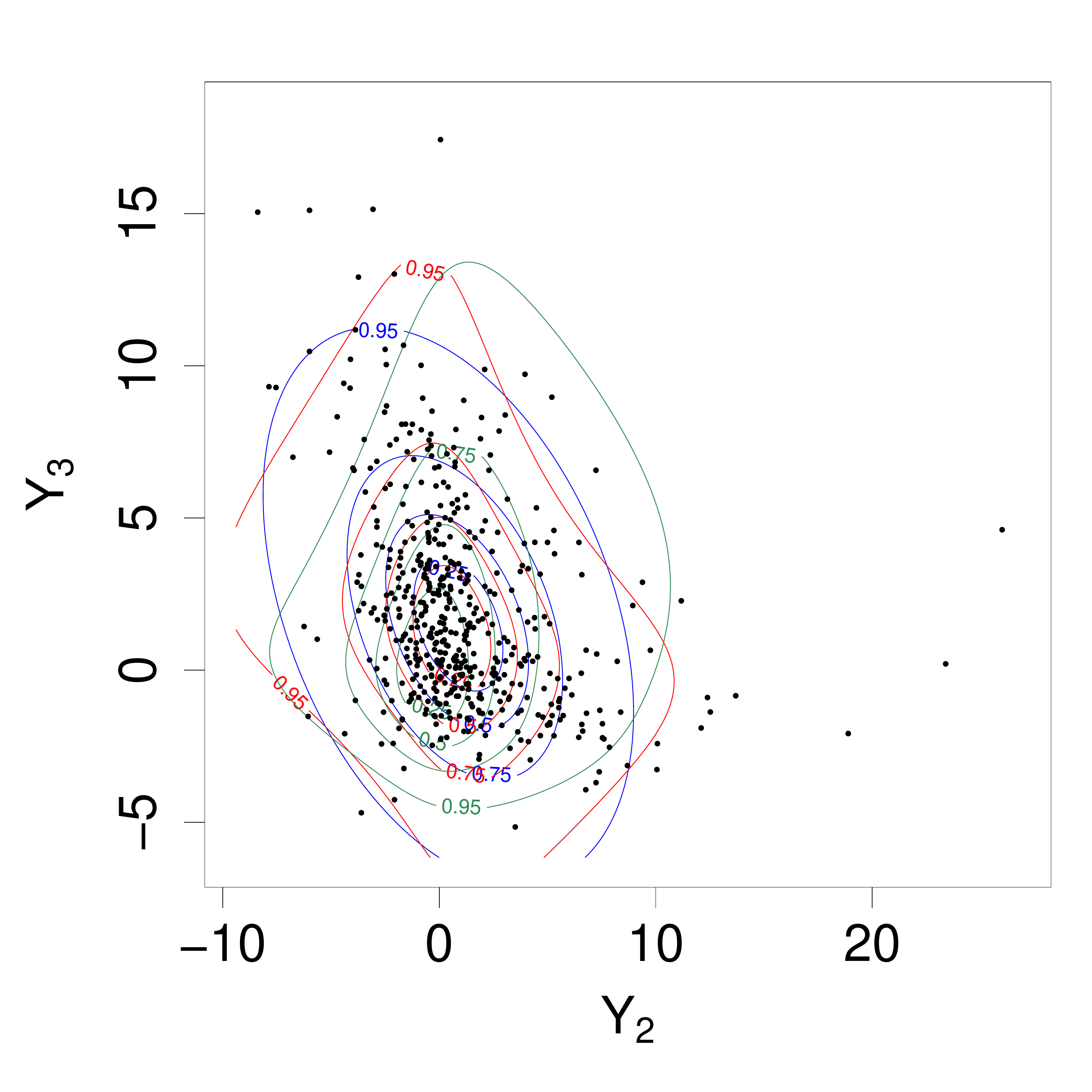}
  \vspace{-0.35in}
  
\end{subfigure}

\caption{Bivariate contours of the marginal bivariate pdfs obtained from the fitted $\mathcal{SNTH}$ using Sections 4.1 and 4.2 methodology (green), from the fitted $\mathcal{SNTH}$ using MLEs (red) and from the fitted skew-$t$ (blue) distributions to trivariate vine copula data (first row), $\mathcal{MSGH}$ data (second row), and $t$-SAS data (third row). The contours correspond  to $0.25$, $0.5$, $0.75$, and $0.95$ approximate probability regions.}
\label{fig:snth_vs_st_contour}
\end{center}
\end{figure}

In this simulation study we show that when there is a great disparity between the marginal kurtosis values in a multivariate dataset, the $\mathcal{SNTH}$ distribution is more appropriate than the skew-$t$ distribution. We generate $500$ random samples from a three-dimensional vine copula to create a trivariate dataset in Uniform$(0,1)$ scale. In this vine copula model, variables $1$ and $2$ are related with a Gaussian copula with $\rho = 0.5$, variables $1$ and $3$ are related with a Clayton copula with parameter $4.8$ and variables $2$ and $3$ given variable $1$ are related with a Gumbel copula with parameter $1.9$. On the trivariate simulated data, we transform the $1^{\text{st}}$ component to the standard normal scale, the $2^{\text{nd}}$ component to the Cauchy $t_1$ scale, and the $3^{\text{rd}}$ component to the Student's $t_{10}$ scale. We fit both the $\mathcal{SNTH}$ and the skew-$t$ distribution to this simulated data. The Akaike information criterion (AIC) computed for the $\mathcal{SNTH}$ and the skew-$t$ are $4393$ and $4848$, respectively, suggesting the $\mathcal{SNTH}$ distribution is more suitable for this simulated dataset, compared to the skew-$t$ distribution. 

We perform similar experiments where we generate $500$ observations from a three-dimensional multiple-scaled generalized hyperbolic $(\mathcal{MSGH})$ distribution \citep{2015.D.W.F.F.CSDA} and from a three-dimensional $t$-SAS distribution \citep{2019.S.B.C.L.D.V.S}. For the $\mathcal{MSGH}$ distribution we use the following parameters: $\bm \mu = (0,0,0)^\top$, $\bm \Sigma = \begin{psmallmatrix}
    1 & 0.3 & -0.2\\ 0.3 & 1 & -0.4\\ -0.2 & -0.4 &1
\end{psmallmatrix}$, $\bm \beta = (3,0.5,-0.2)^\top$, $\bm \lambda = (2,1,4)^\top$, $\bm \gamma = (\sqrt{3},\sqrt{0.2},\sqrt{0.25})^\top$, and $\delta = 1$. For the $t$-SAS distribution, we use a three-dimensional $t$-copula with correlation matrix $\begin{psmallmatrix}
    1 & 0.3 & -0.2\\ 0.3 & 1 & -0.4\\ -0.2 & -0.4 &1
\end{psmallmatrix}$ to generate observations on the uniform scale. For the Sinh-Arcsinh (SAS) transformation, we use $(-0.7,1)$, $(0.2,0.6)$, and $(0.5,0.8)$ as our $(g,h)$ (for skewness and tail-thickness, as used in \cite{2019.S.B.C.L.D.V.S}) parameters for the three marginals, respectively. Finally, we scale the marginals by $1$, $1.2$, and $1.8$, respectively. When the $\mathcal{SNTH}$ and the skew-$t$ model are fitted to the $\mathcal{MSGH}$ dataset the obtained AICs are $9982$ and $10110$, and for the $t$-SAS dataset, the AICs are $6606$ and $6634$. The AICs for both studies suggest that the $\mathcal{SNTH}$ is a better fit to these two datasets compared to the skew-$t$ model. 

We provide the contour plots of the bivariate marginal pdfs of the $\mathcal{SNTH}$ and the skew-$t$ distribution fitted to the three simulated datasets in Figure \ref{fig:snth_vs_st_contour}. The bivariate marginal pdfs for the $\mathcal{SNTH}$ distribution are obtained based on the MLEs and also based on the estimates from the EM algorithm. The contours are plotted for the $0.25$, $0.5$, $0.75$, and $0.95$ approximate probability regions. The plots show that, as expected, the skew-$t$ distribution cannot handle different tail-thickness for different marginals, and instead tries to find the best compromise with a single parameter, $\nu$. In scenarios like this, the $\mathcal{SNTH}$ distribution is more appropriate. Moreover, in the first row of Figure \ref{fig:snth_vs_st_contour} we see from the contour plots that the difference between the bivariate marginal pdfs obtained based on the MLE and the EM algorithm is small for the vine copula dataset. However, in the second and third rows of Figure \ref{fig:snth_vs_st_contour} the dissimilarity between the two $\mathcal{SNTH}$ parameter estimation methods is more prominent, especially for the $(Y_1,Y_3)$ pair. Finally, it is clear from the plots that the marginal bivariate $\mathcal{SNTH}$ pdf contours obtained from the MLEs are more suitable for all three datasets compared to the skew-$t$ counterparts.

\section{Data Applications}
We use two data applications to illustrate the effectiveness of the $\mathcal{SNTH}$ distribution over the skew-$t$ in certain situations. The parameter estimates and standard errors for the two data applications, as well as log-likelihood and AIC values along with computing times, are given in Sections S3 and S4 of the supplementary material.

\subsection{Italian Wine Dataset}

We consider a trivariate dataset consisting of the amount of chloride, glycerol and magnesium in a particular type of wine. The data were obtained from \cite{1986.M.F.C.A.M.C.M.U.V} and originally consist of measurements on $28$ chemicals from $178$ samples of Italian wines. Among these $178$ samples, $48$ originated from the Barbera region, $59$ from the Barolo region, and $71$ from the Grignolino region. Here we use the variables chloride, glycerol and magnesium for the Grignolino region as previously analyzed by \citet{2013.A.A.A.C.CUP} with a skew-$t$ distribution, hence $p = 3$ variables and $n = 71$ observations. 
\begin{figure}[b!]
\centering
\includegraphics[width=0.8\linewidth, height=5cm]{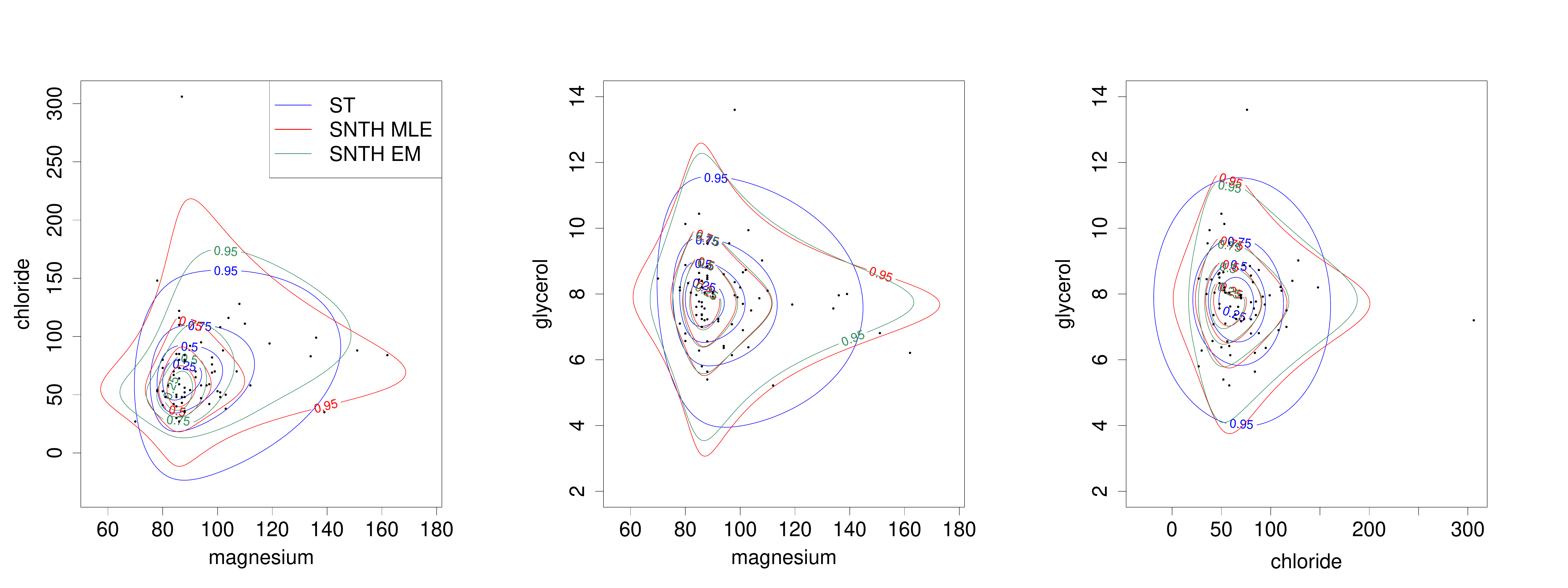}
  \caption{Bivariate contours of the marginal bivariate pdfs obtained from the fitted $\mathcal{SNTH}$ using Section 4.1 methodology (green), from the fitted $\mathcal{SNTH}$ using MLE (red) and the skew-$t$ (blue) distributions to the wine data. The contours correspond to $0.25$, $0.5$, $0.75$, and $0.95$ approximate probability regions.}
  \label{fig:data_application}
\end{figure}

The sample estimate of the marginal Pearson's measure of kurtosis for this dataset are $7.7$, $21.1$, and $7.9$, which suggest that the $\mathcal{SNTH}$ distribution might be more suitable for this dataset compared to the skew-$t$ distribution. We fit both the $\mathcal{SNTH}$ and the skew-$t$ distribution to this dataset. The contour plots of the bivariate marginal pdfs obtained from the two fitted distributions are presented in Figure \ref{fig:data_application}. For the $\mathcal{SNTH}$ model we have produced the contours of the bivariate marginal pdfs using MLEs (in red) and the EM algorithm estimates (in green) along with the skew-$t$ bivariate marginal pdfs (in blue). One can see visually that the $\mathcal{SNTH}$ distribution fits the data better than the skew-$t$. Moreover, the contour plots indicate that there are some discrepancies between the two estimation methodologies based on the $\mathcal{SNTH}$ distribution, specifically for the magnesium-chloride pair, but much less in the other two pairs. The difference is likely due to a relatively small sample size ($n = 71$). The AIC corresponding to the $\mathcal{SNTH}$ distribution and the skew-$t$ distribution are $1474$ and $1492$, respectively. Hence, for this dataset, the $\mathcal{SNTH}$ distribution is a better model than the skew-$t$ distribution. Moreover, assuming that $\bm \eta \neq \bm 0$, the $p$-value for testing $H_0:\bm h = \bm 0$ vs $H_1:\bm h \neq \bm 0$ is $2.53 \times10^{-14}$, using the LRT based on the $\mathcal{SNTH}$ distribution. This suggests that $\bm h \neq \bm 0$ for this dataset. Using the LRT for testing $H_0:\bm \eta = \bm 0$ vs $H_1:\bm \eta \neq \bm 0$ when $\bm h \neq \bm 0$ is $1.6 \times 10^{-5}$, hence confirming the apparent skewness in the data.

\subsection{Saudi Arabian Wind Speed Dataset}

We analyze the dependence structure of the daily average, minimum, and maximum wind speed in the city of Sharurah in southern Saudi Arabia, at $100$ meters in height (a typical hub height for wind turbines), in the year 2015. Understanding the dependence and distribution of these variables is important for setting up wind farms for harvesting wind energy. We remove a  quadratic trend from all three variables and fit an AR($1$) time series model to the detrended data marginally to obtain residuals. A Ljung-Box test shows that there is no significant serial correlation left in all three residuals. Hence, the residuals can be treated as a random sample of size $n=365$ from a trivariate distribution.

The sample estimates of the marginal Pearson's measure of kurtosis for the three variables are $3.0$, $7.5$, and $4.4$, which means that the residuals corresponding to the average windspeed have a Gaussian-like tail and the other two residuals have heavier tails than the Gaussian distribution. This indicates that the $\mathcal{SNTH}$ distribution may be more apt for this dataset compared to the skew-$t$ distribution. We fit both the $\mathcal{SNTH}$ and the skew-$t$ distribution to the residuals. Similar to the previous contour plots, we have produced in Figure \ref{fig:data_application_2} the contours of the bivariate marginal pdfs using MLEs (in red) and the EM algorithm estimates (in green) along with the skew-$t$ bivariate marginal pdfs (in blue). The plots indicate that the $\mathcal{SNTH}$ distribution is more suitable here for capturing different tail-thickness for different marginals, compared to the skew-$t$ distribution. This conclusion is further validated by the AIC which is $3274$ for the $\mathcal{SNTH}$ distribution and is $3432$ for the skew-$t$ distribution. Moreover, the difference between the contours obtained from the MLEs and from the EM algorithm estimates for the $\mathcal{SNTH}$ distribution are very close to each other. Similar to the wine dataset, we can perform the following tests: assuming that $\bm \eta \neq \bm 0$, the $p$-value for testing $H_0:\bm h = \bm 0$ vs $H_1:\bm h \neq \bm 0$ is $9.7 \times10^{-35}$, using the LRT based on the $\mathcal{SNTH}$ distribution, which confirms that the data here are not from a skew-normal distribution; $H_0:\bm \eta = \bm 0$ vs $H_1:\bm \eta \neq \bm 0$ when $\bm h \neq \bm 0$ is $2.56 \times 10^{-13}$, which confirms the presence of skewness in the data.
\begin{figure}[h!]
\centering
\includegraphics[width=0.8\linewidth, height=5cm]{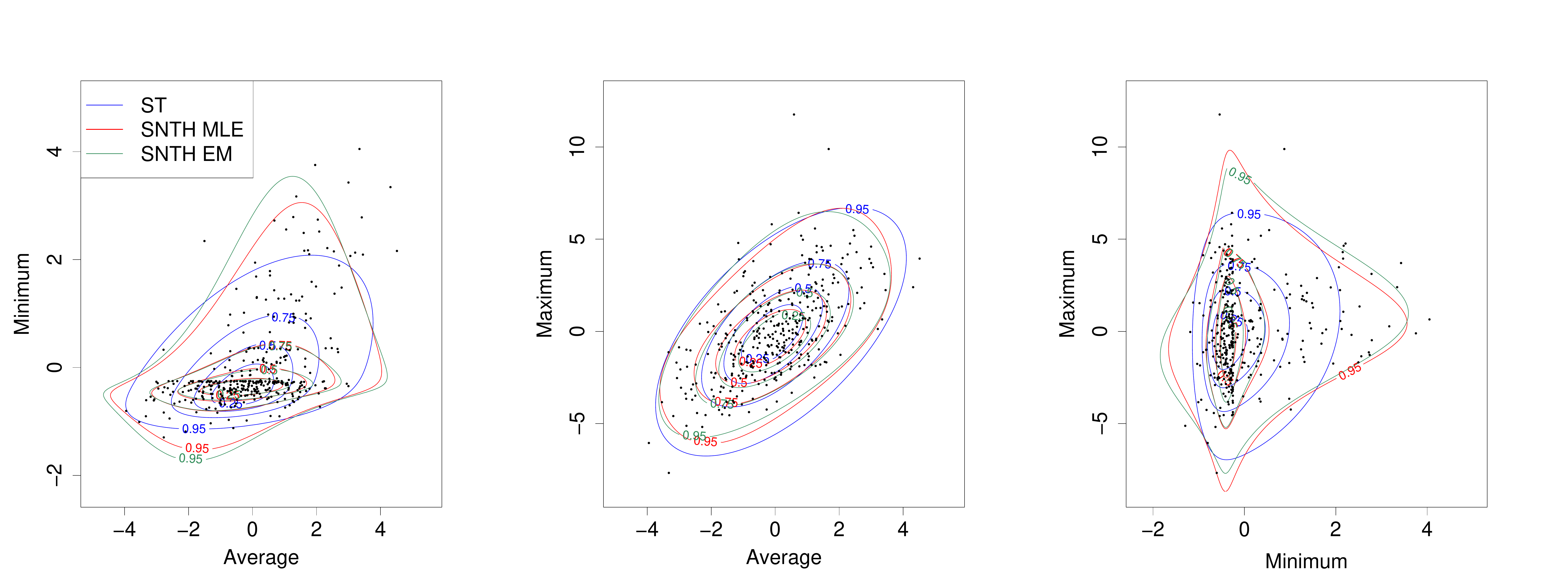}
  \caption{Bivariate contours of the marginal bivariate pdfs obtained from the fitted $\mathcal{SNTH}$ using the EM algorithm (green), from the fitted $\mathcal{SNTH}$ using MLE (red) and the skew-$t$ (blue) distributions to the wind speed residuals. The contours correspond to $0.25$, $0.5$, $0.75$, and $0.95$ approximate probability regions.}
  \label{fig:data_application_2}
\end{figure}

\vspace{-0.7cm}

\section{Discussion}

In this article, we have introduced the multivariate $\mathcal{SNTH}$ distribution, a new extension of the multivariate skew-normal distribution for modeling heavy-tailed data. We have compared our proposed distribution with the skew-$t$ distribution, another extension of the skew-normal distribution for adapting tail-thickness. Unlike the skew-$t$ distribution, our proposal is capable of handling data with different kurtosis for different marginals. As a consequence, the $\mathcal{SNTH}$ model can be used as a robust model, as suggested by \cite{2008.A.A.M.G.ISR} for the skew-$t$, for modeling outliers. Moreover, the $\mathcal{SNTH}$ distribution can capture outliers in some marginals while having Gaussian-like distributions in other marginals. We have discussed various appealing stochastic and inferential properties of the $\mathcal{SNTH}$ distribution in detail. A methodology for parameter estimation of the $\mathcal{SNTH}$ distribution was also provided. 

There are other proposals in the multivariate setup for modeling varying marginal tail-thickness, such as the $\mathcal{MSGH}$ distribution by \cite{2015.D.W.F.F.CSDA} and the $t$-SAS distribution by \cite{2019.S.B.C.L.D.V.S}. However, they lack appealing stochastic properties, such as a tractable conditional distribution and an explicit form of conditional mean and variance, unlike the $\mathcal{SNTH}$ model. How the $\mathcal{SNTH}$ model performs compared to these other multivariate models for modeling varying marginal tail-thickness is left as a future research direction.

The $\mathcal{SNTH}$ distribution can be further generalized by extending the idea of using transformation to induce tail-thickness in the distribution to the extended skew-normal ($\cal ESN$) family and the unified skew-normal ($\cal SUN$) family \citep{2006.R.B.A.V.A.A.SJS}. In Section 3.1, we have discussed how the $\mathcal{SNTH}$ distribution induces tail-thickness in the $\mathcal{SN}$ distribution by stretching the distribution along different axes, and this stretching can be different for different marginals. This idea could be further generalized where the stretching occurs along arbitrary directions. 

The EM algorithm in Section 4.2 discussed how we can estimate the scale matrix $\bm \Psi$ of an $\mathcal{SN}_p (\bm 0,\bm \Psi, \bm \eta_0)$ distribution, given that $\bm \eta_0$ is fixed. However, we need this $\bm \Psi$ to be a correlation matrix, not a covariance matrix. This is achieved by transforming the final estimate of $\bm \Psi$ from covariance to a correlation matrix. The EM algorithm for the scenario when $\bm \Psi$ is a correlation matrix is an open problem. 

The R-codes and real data for Sections 5 and 6 are available on a GitHub repository: \url{https://github.com/sagnikind/Skew-normal-Tukey-h}.

\setstretch{0.4}
\baselineskip 5pt
{\footnotesize
\bibliographystyle{abbrvnat}
\bibliography{SNTH_manuscript}
}

\pagebreak
\section*{Supplementary Material}
\subsection*{S1~~~ Proof of Proposition 8}
\begin{proof}
We have:
\begin{align*}
    \mathbb{E}(Y_{0_i}^2) &= \int_{\mathbb{R}} x^2 \exp(h_i x^2) \dfrac{1}{\Phi(\bar{\tau}_{1.2})} \phi(x;\xi_{1.2_i}+\bar{\tau}_{1.2} {\bar{ \eta}_{1.2_i}}, \bar{ \Psi}_{11.2_{ii}}+{\bar{ \eta}_{1.2_i}}^2) \Phi \left\{ \dfrac{\bar{\tau}_{1.2} + {\bar{ \eta}_{1.2_i}} (x - \xi_{1.2_i})/\bar{ \Psi}_{11.2_{ii}}}{\sqrt{1+{\bar{ \eta}_{1.2_i}}^2/\bar{ \Psi}_{11.2_{ii}}}} \right\} \text{d} x\\
    &= \exp \left \{ \dfrac{(\xi_{1.2_i}+\bar{\tau}_{1.2} {\bar{ \eta}_{1.2_i}})^2 h_i}{1-2(\bar{ \Psi}_{11.2_{ii}}+{\bar{ \eta}_{1.2_i}}^2) h_i} \right \} \dfrac{1}{\Phi(\bar{\tau}_{1.2})}   \dfrac{1}{\sqrt{2\pi} \sqrt{\bar{ \Psi}_{11.2_{ii}}+{\bar{ \eta}_{1.2_i}}^2}} \\
    &~~~~~\times \int_{\mathbb{R}} x^2 \exp \left [ -\dfrac{1}{2} \dfrac{\left\{x - \frac{\xi_{1.2_i} +\bar{\tau}_{1.2} {\bar{ \eta}_{1.2_i}}}{1-2(\bar{ \Psi}_{11.2_{ii}} +{\bar{ \eta}_{1.2_i}}^2) h_i} \right\} ^2}{\frac{\bar{ \Psi}_{11.2_{ii}}+{\bar{ \eta}_{1.2_i}}^2}{1-2(\bar{ \Psi}_{11.2_{ii}}+{\bar{ \eta}_{1.2_i}}^2)h_i}}  \right ] 
    \Phi \left\{ \dfrac{\bar{\tau}_{1.2} + {\bar{ \eta}_{1.2_i}} (x - \xi_{1.2_i})/\bar{ \Psi}_{11.2_{ii}}}{\sqrt{1+{\bar{ \eta}_{1.2_i}}^2/\bar{ \Psi}_{11.2_{ii}}}} \right\} \text{d}x\\
    &=\exp \left \{ \dfrac{(\xi_{1.2_i}+\bar{\tau}_{1.2} {\bar{ \eta}_{1.2_i}})^2 h_i}{1-2(\bar{ \Psi}_{11.2_{ii}}+{\bar{ \eta}_{1.2_i}}^2)h_i} \right \} \dfrac{1}{\sqrt{1-2(\bar{ \Psi}_{11.2_{ii}} +{\bar{ \eta}_{1.2_i}}^2)h_i}} \dfrac{\Phi(\Tilde{\tau}_i)}{\Phi(\bar{\tau}_{1.2})}\\
    &~~~~~\times \int_{\mathbb{R}} x^2 \dfrac{1}{\Phi(\Tilde{\tau}_i)} \phi(x;\Tilde{\xi}_i,\Tilde{\omega}_i^2) \Phi \{\Tilde{\alpha}_{0_i} + \Tilde{\alpha}_i \Tilde{\omega}_i^{-1} (x -\Tilde{\xi}_i) \} \text{d}x\\
    &= \dfrac{1}{\sqrt{1-2(\bar{ \Psi}_{11.2_{ii}} +{\bar{ \eta}_{1.2_i}}^2)h_i}} \exp \left \{ \dfrac{(\xi_{1.2_i}+\tau {\bar{ \eta}_{1.2_i}})^2 h_i}{1-2(\bar{ \Psi}_{11.2_{ii}}+{\bar{ \eta}_{1.2_i}}^2)h_i} \right \} \dfrac{\Phi(\Tilde{\tau}_i)}{\Phi(\bar{\tau}_{1.2})}\\
    &~~~~~\times \left \{ \Tilde{\xi}^2_i + \Tilde{\omega}^2_i - \Tilde{\tau}_i \frac{\phi(\Tilde{\tau}_i)}{\Phi(\Tilde{\tau}_i)} \Tilde{\omega}_i^2 \Tilde{\delta}_i^2 + 2 \frac{\phi(\Tilde{\tau}_i)}{\Phi(\Tilde{\tau}_i)} \Tilde{\xi}_i \Tilde{\omega} _i\Tilde{\delta}_i \right \},
\end{align*}
The last step is obtained from the moments of the extended skew-normal distribution from \cite{2013.A.A.A.C.CUP} (see Section 5.3.4). \qed
\end{proof}

\subsection*{S2~~~ Proof of Proposition 9}
\begin{proof}
We have:
\begin{align*}
    \mathbb{E} (Y_i Y_j) &= \int_{\mathbb{R}^2} x_i x_j \exp(h_i x_i^2/2) \exp(h_j x_j^2/2)  \dfrac{1}{\Phi(\bar{\tau}_{1.2})} \phi_2 (\bm x_{i,j}; \bm \xi_{i,j} + \bar{\tau}_{1.2} \bm \eta_{i,j}, \bm \Psi_{i,j} +\bm \eta_{i,j} \bm \eta_{i,j}^\top)\\&~~~~~\times \Phi \left \{ \dfrac{\bar{\tau}_{1.2} + \bm \eta_{i,j}^\top \bm \Psi_{i,j} ^{-1} (\bm x_{i,j} -\bm \xi_{i,j})}{\sqrt{1+\bm \eta_{i,j} ^\top \bm \Psi_{i,j} ^{-1} \bm \eta_{i,j}}} \right \} \text{d} \bm x_{i,j}, \quad \bm x_{i,j} = (x_i,x_j)^\top\\
    &= \dfrac{\sqrt{\det \{ (\bm \Omega_{i,j}^{-1}- \bm H_{i,j})^{-1} \}}}{\sqrt{\det (\bm \Omega_{i,j})}} \exp \left[ -\frac{1}{2}  \{ \tilde{{\bm \mu}}_{i,j} ^\top \bm \Omega_{i,j}^{-1} \tilde{{\bm \mu}}_{i,j} - \tilde{{\bm \mu}}_{i,j}^\top (\bm \Omega_{i,j} - \bm \Omega_{i,j} \bm H_{i,j} \bm \Omega_{i,j})^{-1} \tilde{{\bm \mu}}_{i,j} \} \right ]  \\
    &~~~~~\times \dfrac{1}{\Phi(\bar{\tau}_{1.2})} \int_{\mathbb{R}^2} x_i x_j  \phi_2 \Bigg ( \bm x_{i,j};  (\textbf{I}_2 - \bm \Omega_{i,j} \bm H_{i,j} )^{-1} \tilde{{\bm \mu}}_{i,j}, (\bm \Omega_{i,j}^{-1} -\bm H_{i,j})^{-1} \Bigg )\\
    &~~~~~~~~~~~~~~~~~~~~~~~~~~~~~~~~~~\times \Phi \left \{ \dfrac{\bar{\tau}_{1.2} + \bm \eta_{i,j}^\top \bm \Psi_{i,j} ^{-1} (\bm x_{i,j} -\bm \xi_{i,j})}{\sqrt{1+\bm \eta_{i,j} ^\top \bm \Psi_{i,j} ^{-1} \bm \eta_{i,j}}} \right \} \text{d} \bm x_{i,j}\\
    &=\dfrac{\sqrt{\det \{ (\bm \Omega_{i,j}^{-1}- \bm H_{i,j})^{-1} \}}}{\sqrt{\det (\bm \Omega_{i,j})}} \exp \left[ -\frac{1}{2}  \{ \tilde{{\bm \mu}}_{i,j} ^\top \bm \Omega_{i,j}^{-1} \tilde{{\bm \mu}}_{i,j} - \tilde{{\bm \mu}}_{i,j}^\top (\bm \Omega_{i,j} - \bm \Omega_{i,j} \bm H_{i,j} \bm \Omega_{i,j})^{-1} \tilde{{\bm \mu}}_{i,j} \} \right ]  \\
    &~~~~~\times \dfrac{1}{\Phi(\bar{\tau}_{1.2})} \int_{\mathbb{R}^2} x_i x_j \phi_2 (\bm x_{i,j}; \Tilde{\bm \xi}_{i,j} , \Tilde{\bm \Omega}_{i,j} ) \Phi \{ \Tilde{\alpha}_{0_{i,j}} + \Tilde{\bm \alpha}_{i,j}^\top  \Tilde{\bm \omega}_{i,j}^{-1} (\bm x_{i,j} - \Tilde{\bm \xi}_{i,j}) \} \text{d} \bm x_{i,j} \\
    &= \dfrac{\sqrt{\det \{ (\bm \Omega_{i,j}^{-1}- \bm H_{i,j})^{-1} \}}}{\sqrt{\det (\bm \Omega_{i,j})}} \exp \left[ -\frac{1}{2}  \{ \tilde{{\bm \mu}}_{i,j} ^\top \bm \Omega_{i,j}^{-1} \tilde{{\bm \mu}}_{i,j} - \tilde{{\bm \mu}}_{i,j}^\top (\bm \Omega_{i,j} - \bm \Omega_{i,j} \bm H_{i,j} \bm \Omega_{i,j})^{-1} \tilde{{\bm \mu}}_{i,j} \} \right ]\\
    &~~~~~\times \dfrac{\Phi(\Tilde{\tau}_{i,j})}{\Phi(\bar{\tau}_{1.2})}  \Bigg \{ (\Tilde{\bm \Omega}_{{i,j}})_{12} - \Tilde{\tau}_{i,j} \frac{\phi(\Tilde{\tau}_{i,j})}{\Phi(\Tilde{\tau}_{i,j})} (\Tilde{\bm \omega}_{{i,j}})_{11} (\Tilde{\bm \omega}_{{i,j}})_{22} (\Tilde{\bm \delta}_{{i,j}})_{1} (\Tilde{\bm \delta}_{{i,j}})_{2} + \xi_{1.2_i} \xi_{1.2_j} \\&~~~~~~~~~~~~~~~~~~~~~~~~~~+ \frac{\phi(\Tilde{\tau}_{i,j})}{\Phi(\Tilde{\tau}_{i,j})}\xi_{1.2_i} (\Tilde{\bm \omega}_{{i,j}})_{22} (\Tilde{\bm \delta}_{{i,j}})_{2} +  \frac{\phi(\Tilde{\tau}_{i,j})}{\Phi(\Tilde{\tau}_{i,j})} \xi_{1.2_j} (\Tilde{\bm \omega}_{{i,j}})_{11} (\Tilde{\bm \delta}_{{i,j}})_{1} \Bigg\}.
\end{align*}
The last step is obtained from the moments of the extended skew-normal distribution from \cite{2013.A.A.A.C.CUP} (see Section 5.3.4). \qed
\end{proof}

\subsection*{S3~~~ Wine Dataset Parameter Estimates}
\definecolor{MineShaft}{rgb}{0.2,0.2,0.2}
\begin{table}[h!]
\caption{Estimates of the $\mathcal{SNTH}$ distribution and the skew-$t$ distribution fitted to the wine dataset, along with the number of parameters in the model, maximized log-likelihood, the model AIC, and the model fitting time. The $\mathcal{SNTH}$ MLEs are estimated using the $\mathcal{SNTH}$ EM estimates as the optimization initial parameters. The standard error of each parameter is reported in parentheses. The computer used is a MacBook Pro (Retina, 16-inch), Processor 2.3 GHz 8-Core Intel Core i9, Memory 16 GB 2667 MHz DDR4.}
\centering
\begin{adjustbox}{max width=\textwidth}
\begin{tblr}{
  cells = {c},
  row{3} = {fg=MineShaft},
  row{4} = {fg=MineShaft},
  row{6} = {fg=MineShaft},
  row{7} = {fg=MineShaft},
  cell{2}{1} = {r=3}{},
  cell{2}{3} = {r=3}{},
  cell{2}{4} = {r=3}{},
  cell{2}{5} = {r=3}{},
  cell{2}{6} = {r=3}{},
  cell{5}{1} = {r=3}{},
  cell{5}{3} = {r=3}{},
  cell{5}{4} = {r=3}{},
  cell{5}{5} = {r=3}{},
  cell{5}{6} = {r=3}{},
  cell{8}{1} = {r=3}{},
  cell{8}{3} = {r=3}{},
  cell{8}{4} = {r=3}{},
  cell{8}{5} = {r=3}{},
  cell{8}{6} = {r=3}{},
  vlines,
  hline{1-2,5,8,11} = {-}{},
}
Model                & Parameter estimates                                                                                                                                                     & \# param & log-likelihood & AIC    & Time        \\
$\mathcal{SNTH}$ EM  & $\widehat{\bm \xi} = (83.2(2.27), 38.0(4.74), 7.6( 0.43))^\top$, $\widehat{\bm \omega} = \text{diag}(4.8(1.05),9.4(3.54), 0.9(0.13))$,~                                 & $15$         & $-731.6$       & $1493$ & $0.63$ s \\
                     & $\widehat{\Bar{\bm \Psi}} = \begin{psmallmatrix}1 & -0.36(0.44 ) & -0.09(0.20)\\ -0.36 & 1 & -0.09(0.21)\\ -0.09 & -0.09 & 1 \end{psmallmatrix}$,~                      &              &                &        &             \\
                     & $\widehat{\bm \eta} = (1.6(0.94), 3.7(2.07), 0.1(0.51))^\top$, $\widehat{\bm h} = (0.13(0.16),0.01(0.01),0.16(0.10))^\top$                                              &              &                &        &             \\
$\mathcal{SNTH}$ MLE & $\widehat{\bm \xi} = (84.4(1.67), 47.3(7.21), 7.8(0.28))^\top$, $\widehat{\bm \omega} = \text{diag}(5.2(1.12), 15.4(4.49), 0.9(0.13))$,                                 & $15$         & $-721.8$       & $1474$ & $2.12$ s \\
                     & $\widehat{\Bar{\bm \Psi}} = \begin{psmallmatrix}1 & -0.12(0.25 ) & -0.09(0.20)\\ -0.12 & 1 & -0.05(0.20)\\ -0.09 & -0.05 & 1\end{psmallmatrix}$,~                       &              &                &        &             \\
                     & $\widehat{\bm \eta} = (1.1(0.54), 1.3(0.89), -0.01( 0.33))^\top$, $\widehat{\bm h} = (0.26(0.19),0.11 (0.11),0.17(0.10))^\top$                                          &              &                &        &             \\
Skew-$t$             & $\widehat{\bm \xi} = (79.7(1.40), 60.4(4.95), 7.8(0.22 ))^\top$, ~                                                                                                      & $13$         & $-733.1$       & $1492$ & $0.20$ s  \\
                     & $\widehat{\bm \Omega} = \begin{psmallmatrix}\\ 237.0(74.54 ) &110.9(72.97)& -0.7(2.89)\\ 110.9& 522.3(130.74)& -0.8(3.05)\\ -0.7& -0.8 & 0.9(0.21) \end{psmallmatrix}$, &              &                &        &             \\
                     & $\widehat{\bm \alpha} = (4.31(1.67), 0.05(0.36), 0.18(0.38))^\top$,~$\widehat{\nu} = 3.4(0.95)$                                                                         &              &                &        &             
\end{tblr}
\end{adjustbox}
\end{table}

\subsection*{S4~~~ Windspeed Dataset Parameter Estimates}
\definecolor{MineShaft}{rgb}{0.2,0.2,0.2}
\begin{table}[htp!]
\caption{Estimates of the $\mathcal{SNTH}$ distribution and the skew-$t$ distribution fitted to the wind speed dataset, along with the number of parameters in the model, maximized log-likelihood, the model AIC, and the model fitting time. The $\mathcal{SNTH}$ MLEs are estimated using the $\mathcal{SNTH}$ EM estimates as the optimization initial parameters. The standard error of each parameter is reported in parentheses. The computer used is the same as above.}
\centering
\begin{adjustbox}{max width=\textwidth}
\begin{tblr}{
  cells = {c},
  row{3} = {fg=MineShaft},
  row{4} = {fg=MineShaft},
  row{6} = {fg=MineShaft},
  row{7} = {fg=MineShaft},
  cell{2}{1} = {r=3}{},
  cell{2}{3} = {r=3}{},
  cell{2}{4} = {r=3}{},
  cell{2}{5} = {r=3}{},
  cell{2}{6} = {r=3}{},
  cell{5}{1} = {r=3}{},
  cell{5}{3} = {r=3}{},
  cell{5}{4} = {r=3}{},
  cell{5}{5} = {r=3}{},
  cell{5}{6} = {r=3}{},
  cell{8}{1} = {r=3}{},
  cell{8}{3} = {r=3}{},
  cell{8}{4} = {r=3}{},
  cell{8}{5} = {r=3}{},
  cell{8}{6} = {r=3}{},
  vlines,
  hline{1-2,5,8,11} = {-}{},
}
Model                & Parameter estimates                                                                                                                                              & \# param & log-likelihood & AIC    & Time        \\
$\mathcal{SNTH}$ EM  & $\widehat{\bm \xi} = (0.12(2.70), -0.42(0.02), -1.77(0.51))^\top$, $\widehat{\bm \omega} = \text{diag}(1.44(0.18), 0.19(0.02), 1.85(0.20))$,~                    & $15$     & $-1633.4$      & $3297$ & $0.84$ s \\
                     & $\widehat{\Bar{\bm \Psi}} = \begin{psmallmatrix}1 & 0.55(0.04) & 0.74(0.03)\\ 0.55 & 1 & 0.05(0.07)\\ 0.74 & 0.05 & 1 \end{psmallmatrix}$,~                      &          &                &        &             \\
                     & $\widehat{\bm \eta} = (-0.10(2.37), 0.78(0.15), 1.11(0.46))^\top$, $\widehat{\bm h} = (0.00(0.02), 0.53(0.11), 0.03(0.03))^\top$                                 &          &                &        &             \\
$\mathcal{SNTH}$ MLE & $\widehat{\bm \xi} = (-0.99(0.22), -0.43(0.03), -1.14(0.33))^\top$, $\widehat{\bm \omega} = \text{diag}(1.25(0.10), 0.20(0.02), 2.00(0.14))$,                    & $15$     & $-1622.3$      & $3274$ & $4.31$ s \\
                     & $\widehat{\Bar{\bm \Psi}} = \begin{psmallmatrix}1 & 0.35(0.06) & 0.54(0.05)\\ 0.35 & 1 & 0.03(0.07)\\ 0.54 & 0.03 & 1\end{psmallmatrix}$,~                       &          &                &        &             \\
                     & $\widehat{\bm \eta} = (0.99(0.29), 0.89(0.18), 0.59(0.21))^\top$, $\widehat{\bm h} = (0.00(9.52\times10^-9), 0.42(0.10), 0.07(0.02))^\top$                       &          &                &        &             \\
Skew-$t$             & $\widehat{\bm \xi} = (-0.56(0.11), -0.74(0.03), -0.55(0.19))^\top$, ~                                                                                            & $13$     & $-1702.9$      & $3432$ & $0.42$ s \\
                     & $\widehat{\bm \Omega} = \begin{psmallmatrix}\\ 1.83(0.21) &0.66(0.13)& 1.80(0.24)\\ 0.66 & 0.78(0.12)& 0.46(0.16)\\ 1.80& 0.46 & 4.77(0.47) \end{psmallmatrix}$, &          &                &        &             \\
                     & $\widehat{\bm \alpha} = (-0.61(0.27) , 5.46(0.81) , 0.34(0.24))^\top$,~$\widehat{\nu} = 6.76(1.50)$                                                              &          &                &        &             
\end{tblr}
\end{adjustbox}
\end{table}

\end{document}